\newif\ifreport\reporttrue
\DeclareMathOperator*{\argmax}{arg\,max}
\DeclareMathOperator*{\argmin}{arg\,min}
\newtheorem{theorem}{Theorem}
\newtheorem{definition}{Definition}
\newtheorem{lemma}{Lemma}
\def\blue{\color{blue}}
\def\red{\color{red}}
\colorlet{red}{black}
\begin{document}

\title{Learning
 and Communications Co-Design for Remote Inference Systems: Feature Length Selection and Transmission Scheduling}
        
\author{Md Kamran Chowdhury Shisher,~\IEEEmembership{Student~Member,~IEEE,} Bo~Ji,~\IEEEmembership{Senior~Member,~IEEE,}
I-Hong~Hou,~\IEEEmembership{Senior~Member,~IEEE,}
        Yin~Sun,~\IEEEmembership{Senior~Member,~IEEE}
\IEEEcompsocitemizethanks{\IEEEcompsocthanksitem M.K.C. Shisher and Y. Sun are with the Department of Electrical and Computer Engineering, Auburn University, Auburn, AL 36849 USA (e-mail: mzs0153@auburn.edu, yzs0078@auburn.edu).  B. Ji is with the Department of Computer Science, Virginia Tech, Blacksburg, VA 24061 USA (e-mail: boji@vt.edu). I.-H. Hou is with the Department of Electrical and Computer Engineering, Texas A\&M University, College Station, TX 77843 USA (e-mail: ihou@tamu.edu).}
\thanks{The work of M.K.C. Shisher and Y. Sun was supported in part by the NSF grant CNS-2239677 and the ARO grant W911NF-21-1-0244. The work of B. Ji was supported in part by the NSF under Grants CNS-2112694 and CNS-2106427. The work of I.-H. Hou was supported in part by NSF under Award Number ECCS-2127721, in part by the U.S. Army Research Laboratory and the U.S. Army Research Office under Grant Number W911NF-22-1-0151, and in part by Office of Naval Research under Contract N00014-21-1-2385.}}

\newcommand{\ignore}[1]{{}}
\pagestyle{plain}
\def\blue{\color{blue}}
\maketitle

\begin{abstract} 
In this paper, we consider a remote inference system, where a neural network is used to infer a time-varying target (e.g., robot movement), based on features (e.g., video clips) that are progressively received from a sensing node (e.g., a camera). Each feature is a temporal sequence of sensory data. The inference error is determined by (i) the timeliness and (ii) the sequence length of the feature, where we use Age of Information (AoI) as a metric for timeliness. While a longer feature can typically provide better inference performance, it often requires more channel resources for sending the feature. To minimize the time-averaged inference error, we study a learning and communication co-design problem that jointly optimizes feature length selection and transmission scheduling. When there is a single sensor-predictor pair and a single channel, we develop low-complexity optimal co-designs for both the cases of time-invariant and time-variant feature length. When there are multiple sensor-predictor pairs and multiple channels, the co-design problem becomes a restless multi-arm multi-action bandit problem that is PSPACE-hard. For this setting, we design a low-complexity algorithm to solve the problem. Trace-driven evaluations demonstrate the potential of these co-designs to reduce inference error by up to $10000$ times.
\end{abstract}

\begin{IEEEkeywords}
Remote inference; transmission scheduling; age of information; restless multi-armed bandit.
\end{IEEEkeywords}

\section{Introduction}
The advancement of communication technologies and artificial intelligence has engendered the demand for remote inference in various applications, such as autonomous vehicles, health monitoring, industrial control systems, and robotic systems \cite{6GAI, bojarski2016end, rahmatizadeh2018vision, she2020deep}. For instance, accurate prediction of the robotic state during remote robotic surgery is time-critical. The remote inference problem can be tackled by using a neural network that is trained to predict a time-varying target (e.g. robot movement) based on features (e.g., video clips) sent from a remote sensing node (e.g. a camera). Each feature is a temporal sequence of the sensory output and the length of the temporal sequence is called \emph{feature length}.

Due to data processing time, transmission errors, and transmission delay, the features delivered to the neural predictor may not be fresh, which can significantly affect the inference accuracy. To measure the freshness of the delivered features, we use the \emph{age of information (AoI)} metric, which was first introduced in \cite{kaul2012real}. Let $U(t)$ be the generation time of the most recently delivered feature sequence. Then, AoI is the time difference between the generation time $U(t)$ and the current time $t$, denoted by $\Delta(t):=t-U(t)$. Recent studies \cite{shisher2021age, ShisherMobihoc} have shown that the inference error is a function of AoI for a given feature length, but this function is \emph{not necessarily monotonic}. Moreover, simulation results in \cite{shisher2021age} suggest that AoI-aware remote inference, wherein both the feature and its AoI are fed to the neural network, can achieve superior performance than AoI-agnostic remote inference that omits the provision of AoI to the neural network. Hence, the AoI $\Delta(t)$ can provide useful information for reducing the inference error. 

Additionally, the performance of remote inference depends on the sequence length of the feature. Longer feature sequences can carry more information about the target, resulting in the reduction of inference errors. Though a longer feature can provide better training and inference  performance, it often requires more communication resources. For example, a longer feature may require a longer transmission time and may end up being stale when delivered, thus resulting in worse inference performance. Hence, it is necessary to study a learning and communications co-design problem that jointly controls the timeliness and the length of the feature sequences. The contributions of this paper are summarized as follows: 
\begin{itemize}
\item In \cite{ShisherMobihoc}, it was demonstrated that the inference error is a function of the AoI, whereas the function is not necessarily monotonic. The current paper further investigates the impact of feature length on inference error. Our information-theoretic and experimental analysis show that the inference error is a non-increasing function of the feature length (See Figs. \ref{fig:Trainingcsi}(a)-\ref{fig:Trainingcartpoleangle}(a), and Lemma \ref{lemma2}).

\item We propose a novel learning and communications co-design framework (see Sec. \ref{SystemAndPolicy}). In this framework, we adopted the ``selection-from-buffer” model proposed in \cite{ShisherMobihoc}, which is more general than the popular “generate-at-will” model that was proposed in \cite{yates2015lazy} and named in \cite{sun2017update}. In addition, we consider both time-invariant and time-variant feature length. Earlier studies, for example \cite{ShisherMobihoc, tripathi2021computation}, did not consider time-variant feature length. 

\item For a single sensor-predictor pair and a single channel, this paper jointly optimizes feature length selection and transmission scheduling to minimize the time-averaged inference error. This joint optimization is formulated as an infinite time-horizon average-cost \emph{semi-Markov decision process} (SMDP). Such problems often lack analytical solutions or closed-form expressions. Nevertheless, we are able to derive a closed-form expression for an optimal scheduling policy in the case of time-invariant feature length (Theorem \ref{theorem1}). The optimal scheduling time strategy is a \emph{threshold-based policy}. Our threshold-based scheduling approach differs significantly from previous threshold-based policies in e.g., \cite{ShisherMobihoc, SunNonlinear2019, orneeTON2021, klugel2019aoi, Tripathi2019}, because our threshold function depends on both the AoI value and the feature length, while prior threshold functions rely solely on the AoI value. In addition, our threshold function is \emph{not necessarily monotonic} with AoI. This is a significant difference with prior studies \cite{SunNonlinear2019, orneeTON2021, klugel2019aoi, Tripathi2019}.   

\item We provide an optimal policy for the case of time-variant feature length. Specifically, Theorem \ref{theorem2} presents the Bellman equation for the average-cost SMDP with time-variant feature length. The Bellman equation can be solved by applying either relative value iteration or policy iteration algorithms \cite[Sec. 11.4.4]{puterman2014markov}. Given the complexity associated with converting the average-cost SMDP into a Markov Decision Process (MDP) suitable for relative value iteration, we opt for the alternative: using the \emph{policy iteration} algorithm to solve our average-cost SMDP. By leveraging specific structural properties of the SMDP, we can simplify the policy iteration algorithm to reduce its computational complexity. The simplified  policy iteration algorithm is outlined in Algorithm \ref{alg:PolicyEvaluation} and Algorithm \ref{alg:PolicyIteration}.

\item Furthermore, we investigate the learning and communications co-design problem for multiple sensor-predictor pairs and multiple channels. This problem is a restless multi-armed, multi-action bandit problem that is known to be PSPACE-hard \cite{papadimitriou1994complexity}. Moreover, proving indexability condition relating to Whittle index policy \cite{whittle1988restless} for our problem is fundamentally difficult. To this end, we propose a new scheduling policy named ``Net Gain Maximization" that does not need to satisfy the indexability condition (Algorithm \ref{alg:multischeduling}).

\item Numerical evaluations demonstrate that our policies for the single source case can achieve up to $10000$ times performance gain compared to periodic updating and zero-wait policy (see Figs. \ref{fig:numericalsingle1}-\ref{fig:numericalsingle2}). Furthermore, our proposed multiple source policy outperforms the maximum age-first policy (see Fig. \ref{fig:numericalmultiple11}) and is close to a lower bound (see Fig. \ref{fig:numericalmultiple1}).
\end{itemize}

\subsection{Related Works}

The age of information (AoI) has emerged as a popular metric for analyzing and optimizing communication networks \cite{Kadota2018, Kadota2019}, control systems \cite{klugel2019aoi, soleymani2019stochastic}, remote estimation \cite{SunTIT2020, orneeTON2021}, and remote inference \cite{shisher2021age, ShisherMobihoc}. 
As surveyed in \cite{yates2021age}, several studies have investigated sampling and scheduling policies for minimizing linear and nonlinear functions of AoI \cite{ShisherMobihoc, sun2017update, Kadota2018, Kadota2019, SunNonlinear2019, Tripathi2019, klugel2019aoi, sombabu2020whittle, chen2021whittle, hsu2018age, sun2019closed, bedewy2021optimal, bedewy2020optimizing, pan2020minimizing}. In most previous works \cite{sun2017update, Kadota2018, Kadota2019, SunNonlinear2019, Tripathi2019, klugel2019aoi, sombabu2020whittle, chen2021whittle, hsu2018age, sun2019closed, bedewy2021optimal, bedewy2020optimizing, pan2020minimizing}, monotonic AoI penalty functions are considered. However, in a recent study \cite{ShisherMobihoc}, it is demonstrated that the monotonic assumption is not always true for remote inference. In contrast, the inference error is a function of AoI, but the function is not necessarily monotonic. The present paper further investigates the impact of feature length on the inference error and jointly optimizes AoI and feature length.

In recent years, researchers have increasingly employed information-theoretic metrics to evaluate information freshness \cite{ soleymani2016optimal, SunSPAWC2018, SunNonlinear2019, shisher2021age, ShisherMobihoc, chen2021uncertainty, chen2023index, wang2022framework}. In \cite{soleymani2016optimal,SunSPAWC2018, SunNonlinear2019}, the authors utilized Shannon's mutual information to quantify the amount of information carried by received data messages about the current source value, and used Shannon's conditional entropy to measure the uncertainty about the current source value after receiving these messages. These metrics were demonstrated to be monotonic functions of the AoI when the source follows a time-homogeneous Markov chain \cite{SunSPAWC2018, SunNonlinear2019}. Built upon these findings, the authors of \cite{wang2022framework} extended this framework to include hidden Markov model. Furthermore, a Shannon’s conditional entropy term $H_{\mathrm{Shannon}}(Y_t|X_{t-\Delta(t)}=x)$ was used in \cite{chen2021uncertainty, chen2023index} to quantify information uncertainty. However, a gap still existed between these information-theoretic metrics and the performance of real-time applications such as remote estimation or remote inference. In our recent works \cite{shisher2021age, ShisherMobihoc, Shisher2023Timely} and the present paper, we have bridged this gap by using a generalized conditional entropy associated with a loss function $L$, called $L$-conditional entropy, to measure (or approximate) training and inference errors in remote inference, as well as the estimation error in remote estimation. For example, when the loss function $L (y, \hat y)$ is chosen as a quadratic function $|| y - \hat y||_2^2$, the $L$-conditional entropy $H_L(Y_t|X_{t-\Delta(t)})=\min_{\phi}\mathbb E[(Y_t-\phi(X_{t-\Delta(t)}))^2]$ is exactly the minimum mean squared estimation error in remote estimation. This approach allows us to analyze how the AoI $\Delta(t)$ affects inference and estimation errors directly, instead of relying on information-theoretic metrics as intermediaries for assessing application performance. It is worth noting that Shannon’s conditional entropy is a special case of $L$-conditional entropy, corresponding to the inference and estimation errors for softmax regression and maximum likelihood estimation, as discussed in Section \ref{SystemAndPolicy}.

\begin{figure*}[ht]
\centering
\includegraphics[width=0.6\textwidth]{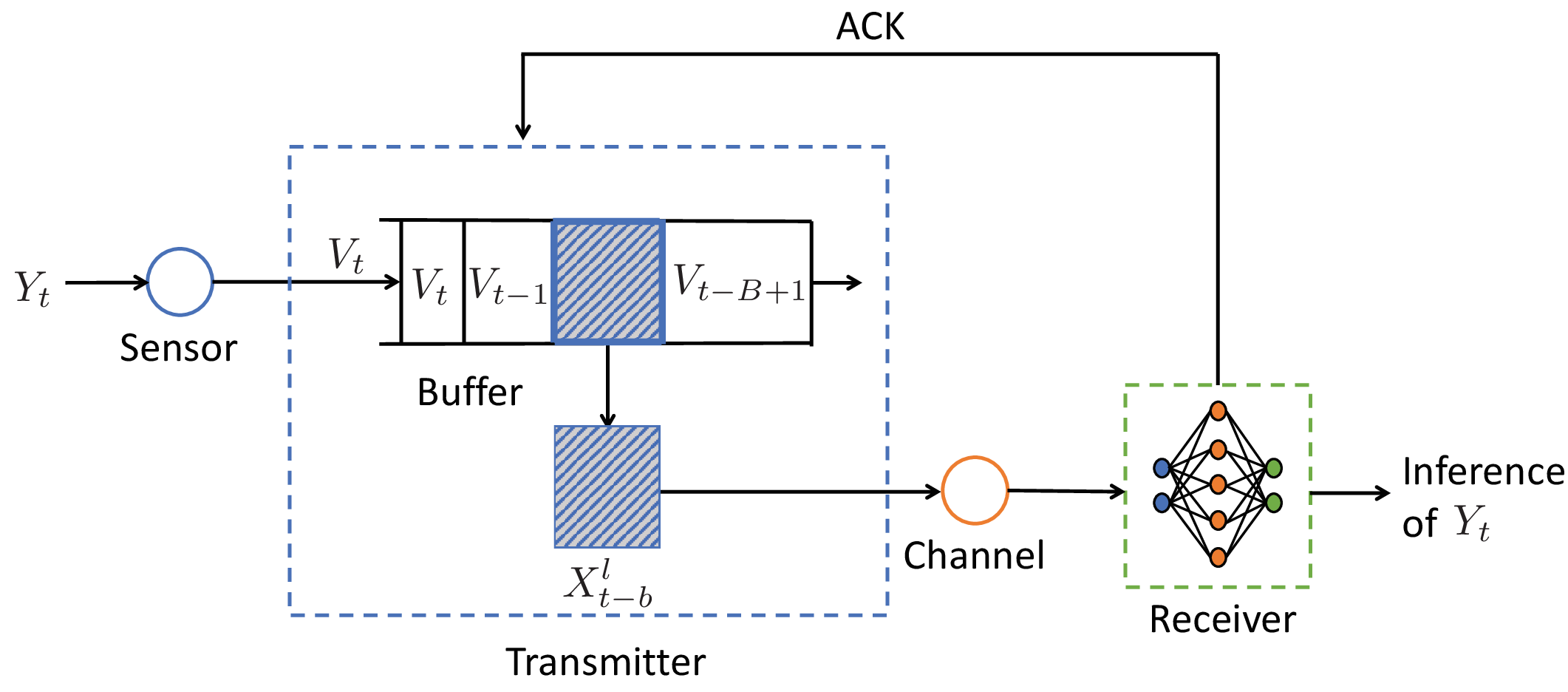}
\caption{\small  A remote inference system, where $X_{t-b}^l:= (V_{t-b}, V_{t-b-1},\ldots, V_{t-b-l+1})$ is a feature with sequence length $l$.  \label{fig:scheduling}
}
\vspace{-3mm}
\end{figure*}

{The optimization of linear and non-linear functions of AoI for multiple source scheduling can be formulated as a restless multi-armed bandit problem \cite{Tripathi2019, ShisherMobihoc, xiong2022index, zou2021minimizing, chen2021scheduling}. Whittle, in his seminal work  \cite{whittle1988restless}, proposed an index-based policy to address restless multi-armed bandit (RMAB) problems with binary actions. Our multiple source scheduling problem is a RMAB problem with multiple actions. An extension of the Whittle index policy for multiple actions was provided in \cite{hodge2015asymptotic}, but it requires to satisfy a complicated \emph{indexability} condition. In \cite{tripathi2021computation}, the authors considered joint feature length selection and transmission scheduling, where the penalty function was assumed to be non-decreasing in the AoI, the feature length is time-invariant, and there is only one communication channel. Under these assumptions, \cite{tripathi2021computation} established the indexability condition and developed a Whittle Index policy. Compared to \cite{tripathi2021computation}, our work could handle both monotonic and non-monotonic AoI penalty functions, both time-invariant and time-variant feature lengths, and both one and multiple communication channels. 

Because of (i) the time-variant feature length and non-monotonic AoI penalty function and (ii) the fact that there exist multiple transmission actions, we could not utilize the Whittle index theory to establish indexability for our multiple source scheduling problem. To address this challenge, we propose a new ``Net Gain Maximization" algorithm (Algorithm \ref{alg:multischeduling}) for multi-source feature length selection and transmission scheduling, which does not require  indexability. During the revision of this paper, we found a related study \cite{chen2023index}, where the authors introduced a similar gain index-based policy for a RMAB problem with two actions: to send or not to send. The ``Net Gain Maximization" algorithm that we propose is more general than the gain index-based policy in \cite{chen2023index} due to its capacity to accommodate more than two actions in the RMAB.}

\section{System Model and Scheduling Policy}\label{SystemAndPolicy}

We consider a remote inference system composed of a sensor, a transmitter, and a receiver, as illustrated in Fig. \ref{fig:scheduling}. The sensor observes a time-varying target $Y_t \in \mathcal Y$ and feeds its measurement $V_t \in \mathcal V$ to the transmitter. The transmitter generates features from the sensory outputs and progressively  transmits the features to the receiver through a communication channel. Within the receiver, a neural network infers the time-varying target based on the received features. 

\subsection{System Model}
The system is time-slotted and starts to operate at time slot $t = 0$. At every time slot $t$, the transmitter appends the sensory output $V_t \in \mathcal V$ to a buffer that stores the $B$ most recent sensory outputs $(V_t, V_{t-1}, \ldots, V_{t-B+1});$ meanwhile, the oldest output $V_{t-B}$ is removed from the buffer. We assume that the buffer is full initially, containing $B$ signal values $(V_0, V_1, \ldots, V_{-B+1})$ at time $t=0$. This ensures that the buffer remains consistently full at any time $t$.\footnote{This assumption does not introduce any loss of generality. If the buffer is no full at time $t=0$, it would not affect our results.} The transmitter progressively generates a feature $X_{t-b}^l$, where each feature $X_{t-b}^l:= (V_{t-b},\ldots,V_{t-b-l+1})\in \mathcal V^l$ is a temporal sequence of sensory outputs taken from the buffer such that $\mathcal V^l$ is the set of all $l$-tuples that take values from $\mathcal V$, $1\leq l\leq B$, and $0\leq b\leq B-l$. For ease of presentation, the temporal sequence length $l$ of feature $X_{t-b}^l$ is called \emph{feature length} and the starting position $b$ of feature $X_{t-b}^l$ in the buffer is called \emph{feature position}. If the channel is idle in time slot $t$, the transmitter can submit the feature $X_{t-b}^l$ to the channel. Due to communication delays and channel errors, the feature is not instantly received. The most recently received feature is denoted as $X^l_{t-\delta} = (V_{t-\delta}, V_{t-\delta-1}, \ldots, V_{t-\delta-l+1})$, where the latest observation $V_{t-\delta}$ in feature $X^l_{t-\delta}$ is generated $\delta$ time slots ago. We call $\delta$ the \emph{age of information (AoI)} which represents the difference between the time stamps of the target $Y_t$ and the latest observation $V_{t-\delta}$ in feature $X^l_{t-\delta}$. 

\begin{figure*}
  \centering
\begin{subfigure}[b]{0.55\columnwidth}
\includegraphics[width=\linewidth]{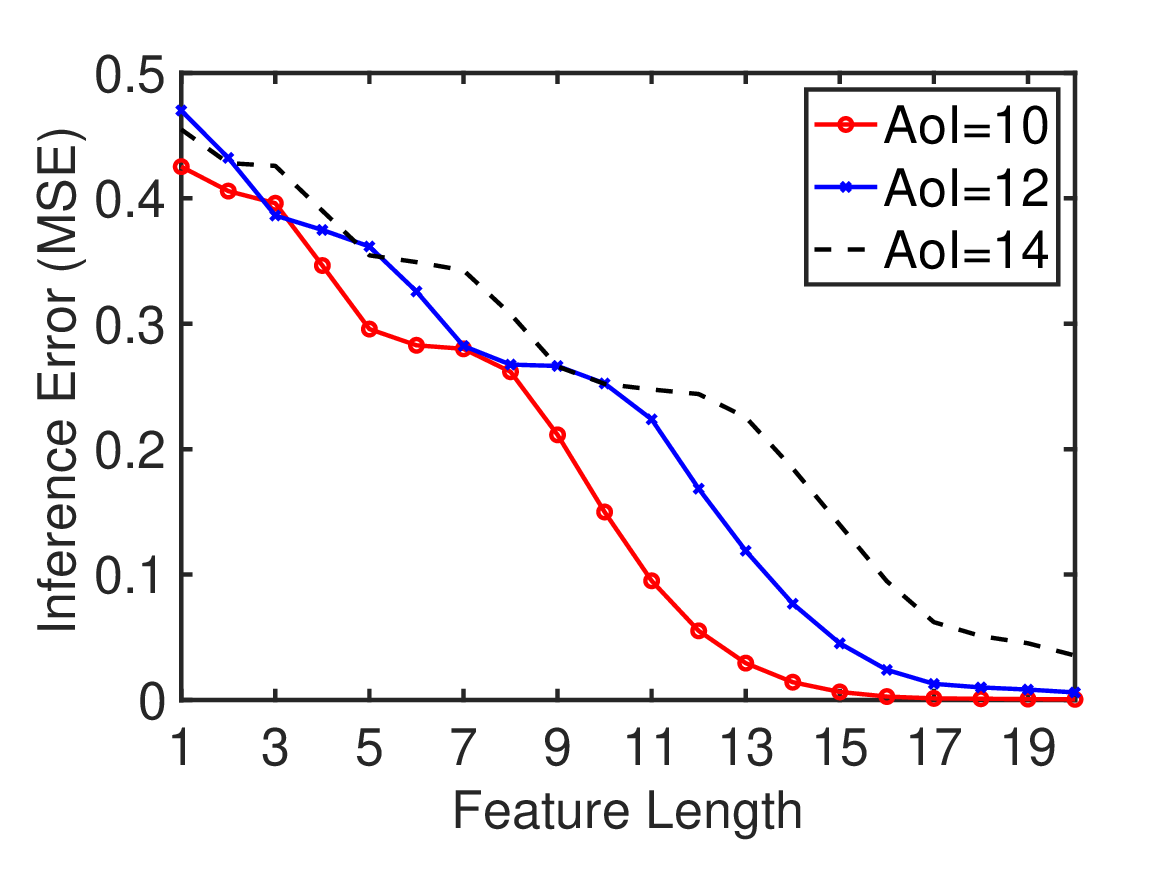}
  \subcaption{\small Inference error vs. Feature length}
\end{subfigure}
 \hspace{20mm}
\begin{subfigure}[b]{0.55\columnwidth}
\includegraphics[width=\linewidth]
{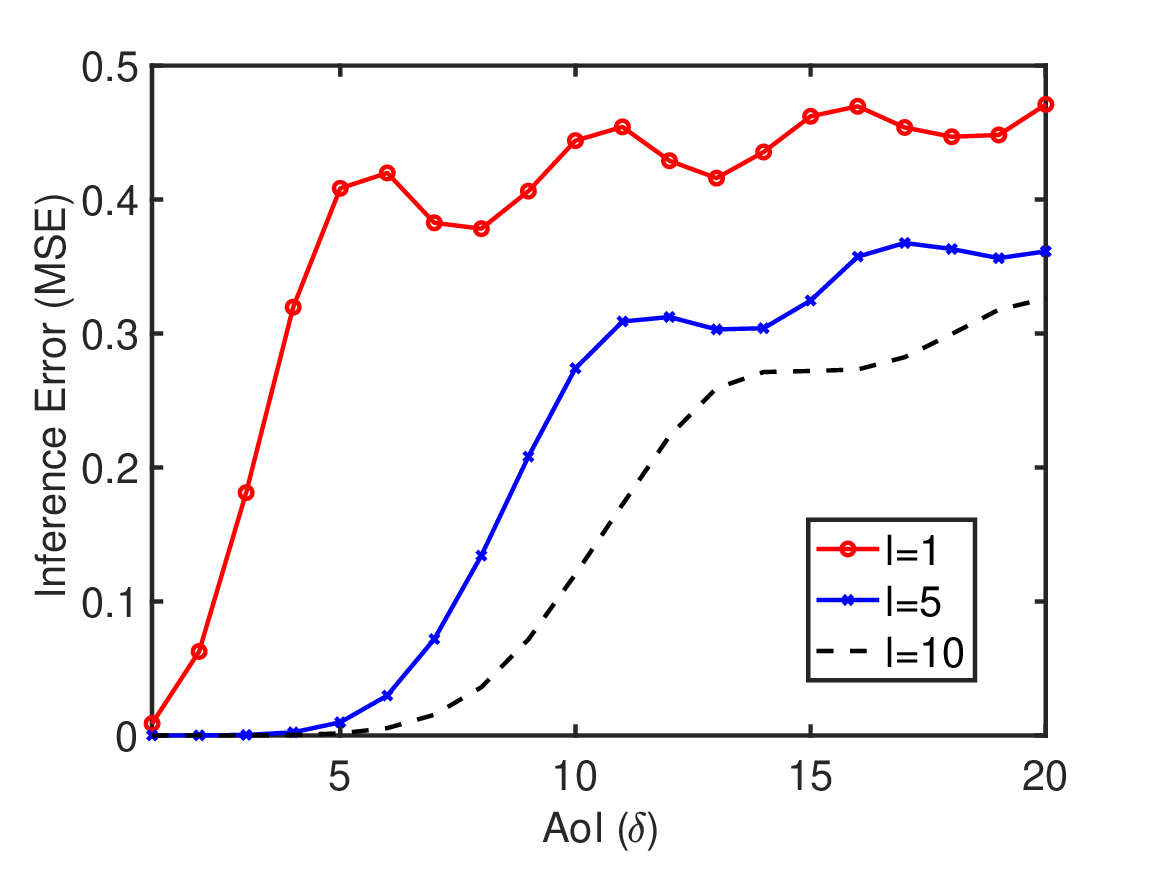}
  \subcaption{\small Inference error vs. AoI}
\end{subfigure}
%
\caption{\small Performance of wireless channel state information prediction: (a) Inference error Vs. Feature length and (b) Inference error Vs. AoI.}
\label{fig:Trainingcsi}
\end{figure*} 

{The receiver consists of $B$ trained neural networks, each associated with a specific feature length $l=1, 2, \ldots, B$. The neural network associated with feature length $l$ takes the AoI $\delta \in \mathbb Z^{+}$ and the feature $X^l_{t-\delta} \in \mathcal V^l$ as inputs and generates an output $a = \phi_{l}(\delta, X^l_{t-\delta}) \in \mathcal A$, where the neural network is represented by the function $\phi_l: \mathbb Z^{+} \times \mathcal V^l \mapsto \mathcal A$. The performance of the neural network is measured by a loss function $L: \mathcal Y \times \mathcal A \mapsto \mathbb R$, where $L(y, a)$ indicates the incurred loss if the output $a \in \mathcal A$ is used for inference when $Y_t = y$. The loss function $L$ is determined by the purpose of the application. For example, in softmax regression (i.e., neural network based maximum likelihood classification), the output $a=Q_Y$ is a distribution of $Y_t$ and the loss function $L_{\text{log}}(y, Q_Y)=-\text{log}~Q_Y(y)$ is the negative log-likelihood of the value $Y_t=y$. In neural network based mean-squared estimation, a quadratic loss function $L_2(\mathbf y, \mathbf{\hat y})=\|\mathbf y-\mathbf{\hat y}\|^2_2$ is used, where the action $a=\mathbf{\hat y}$ is an estimate of the target value $Y_t=\mathbf y$ and $\| \mathbf y \|_2$ is the euclidean norm of the vector $\mathbf y$.

\subsection{Inference Error}\label{Experiments} 
We assume that $\{(Y_t , X^l_t ), t \in \mathbb Z\}$ is a stationary process for every $l$. Given AoI $\delta$ and feature length $l$, the expected inference error is a function of $\delta$ and $l$, given by
\begin{align}\label{instantaneous_err1} 
\mathrm{err}_{\mathrm{inference}}(\delta, l):=\mathbb E_{Y, X^l \sim P_{Y_t, X^{l}_{t-\delta}}}\bigg[L\bigg(Y,\phi_{l}\bigg(\delta, X^l\bigg)\bigg)\bigg],
\end{align}
where $P_{Y_t, X^{l}_{t-\delta}}$ is the joint distribution of the label $Y_t$ and feature $X^{l}_{t-\delta}$ during online inference and the function $\phi_{l}$ represents any trained neural network that maps from $\mathbb Z^{+} \times \mathcal V^l$ to $\mathcal A$. The inference error $\mathrm{err}_{\mathrm{inference}}(\delta, l)$ can be evaluated through machine learning experiments.}

In this paper, we conduct two experiments: (i) wireless channel state information (CSI) prediction and (ii) actuator states prediction in the OpenAI CartPole-v1 task \cite{brockman2016openai}. Detailed information regarding the experimental setup for both experiments can be found in Appendix A of the supplementary material. The code for these experiments is available in GitHub repositories\footnote{\url{https://github.com/Kamran0153/Channel-State-Information-Prediction}}\footnote{\url{https://github.com/Kamran0153/Impact-of-Data-Freshness-in-Learning}}.

The experimental results, presented in Figs. \ref{fig:Trainingcsi}(a)-\ref{fig:Trainingcartpoleangle}(a), demonstrate that the inference error decreases with respect to feature length. Moreover, Figs. \ref{fig:Trainingcsi}(b)-\ref{fig:Trainingcartpoleangle}(b) illustrate that the inference error is not necessarily a monotonic function of AoI. 
These findings align with machine learning experiments conducted in \cite{shisher2021age, ShisherMobihoc, Shisher2023Timely}. Collectively, the results from this paper and those in \cite{shisher2021age, ShisherMobihoc, Shisher2023Timely} indicate that longer feature lengths can enhance inference accuracy and fresher features are not always better than stale features in remote inference.

\subsection{Feature Length Selection and Transmission Scheduling Policy}\label{TransmissionPolicy} 
Because (i) fresh feature is not always better than stale feature and (ii) longer feature can improve inference error, we adopted ``selection-from-buffer" model, which is recently proposed in \cite{ShisherMobihoc}. In contrast to the ``generate-at-will" model \cite{sun2017update, yates2015lazy}, where the transmitter can only select the most recent sensory output $V_t$, the ``selection-from-buffer" model offers greater flexibility by allowing the transmitter to pick multiple sensory outputs (which can be stale or fresh). In other words, ``selection-from-buffer" model allows the transmitter to choose feature position $b$ and feature length $l$ under the constraints $1 \leq l \leq B-1$ and $0 \leq b \leq B-l$. Feature length selection represents a trade-off between learning and communications: A longer feature can provide better learning performance (see Figs. \ref{fig:Trainingcsi}-\ref{fig:Trainingcartpoleangle}), whereas it requires more channel resources (e.g., more time slots or more frequency resources) for sending the feature. This motivated us to study a learning-communication co-design problem that jointly optimizes the feature length, feature position, and transmission scheduling.

The feature length and feature position may vary across the features sent over time. Feature transmissions over the channel are non-preemptive: the channel must finish sending the current feature, before becoming available to transmit the next feature. Suppose that the $i$-th feature $X_{S_i-b_i}^{l_i} = (V_{S_i-b_i}, V_{S_i-b_i-1},\ldots, V_{S_i-b_i-l_i+1})$ is submitted to the channel at time slot $t=S_i$, where $l_i$ is its feature length and $b_i$ is its feature position such that $1\leq l_i\leq B$ and $0\leq b_i\leq B-l_i$. It takes $T_i(l_i)\geq 1$ time slots to send the $i$-th feature over the channel. The $i$-th feature is delivered to the receiver at time slot $D_i=S_i + T_i(l_i)$, where $S_i < D_i \leq S_{i+1}$. The feature transmission time $T_i(l_i)$ depends on the feature length $l_i$. Due to time-varying channel conditions, we assume that, given feature length $l_i=l$, the $T_i(l)$'s are \emph{i.i.d.} random variables, with a finite mean $1\leq \mathbb E[T_i(l)] < \infty$. Once a feature is delivered, an acknowledgment (ACK) is sent back to the transmitter, notifying that the channel has become idle.

In time slot $t$, the $i(t)$-th feature $X_{S_{i(t)}-b_{i(t)}}^{l_{i(t)}}$ is the most recently received feature, where $i(t)=\max_i\{D_i\leq t\}$. The receiver feeds the feature $X_{S_{i(t)}-b_{i(t)}}^{l_{i(t)}}$ to the neural network to infer $Y_t$. We define \emph{age of information (AoI)} $\Delta(t)$ is defined as the difference between the time-stamp of the freshest sensory output $V_{S_{i(t)}-b_{i(t)}}$ in feature $X_{S_{i(t)}-b_{i(t)}}^{l_{i(t)}}$ and the current time $t$, i.e.,  
\begin{align}\label{Def_AoI}
\Delta(t):=t-\max_i \{S_i-b_i: D_i \leq t \}.
\end{align}
Because $D_i < D_{i+1}$, it holds that  
\begin{align}\label{Def_AoI1}
\Delta(t)=t-S_i+b_i,~\text{if}~ D_i \leq t < D_{i+1}.
\end{align}
The initial state of the system is assumed to be $S_0=0, l_0=1, b_0=0, D_0=T_0(l_0)$, and $\Delta(0)$ is a finite constant. 

\begin{figure*}[ht]
  \centering
\begin{subfigure}[b]{0.6\columnwidth}
\includegraphics[width=\linewidth]{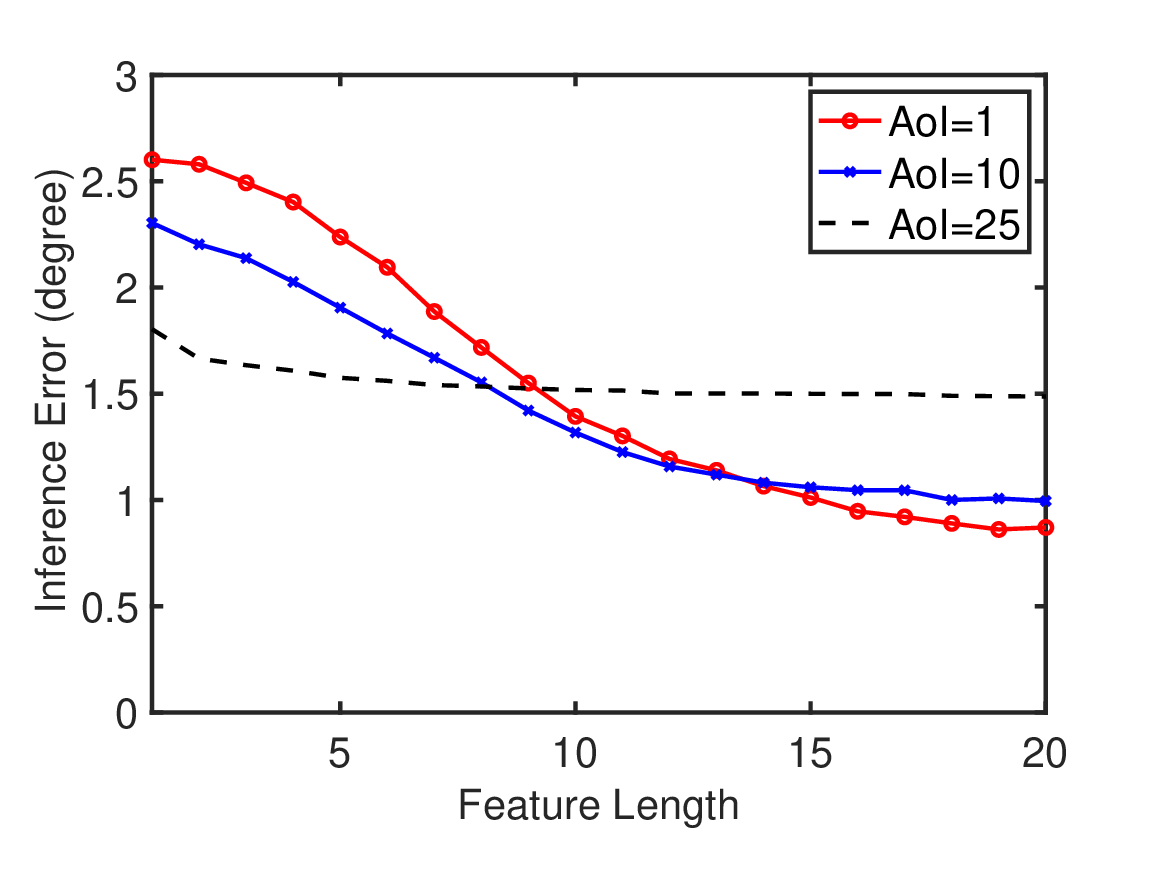}
  \subcaption{\small Inference error vs. Feature length}
\end{subfigure}
   \hspace{10mm} 
\begin{subfigure}[b]{0.6\columnwidth}
\includegraphics[width=\linewidth]
{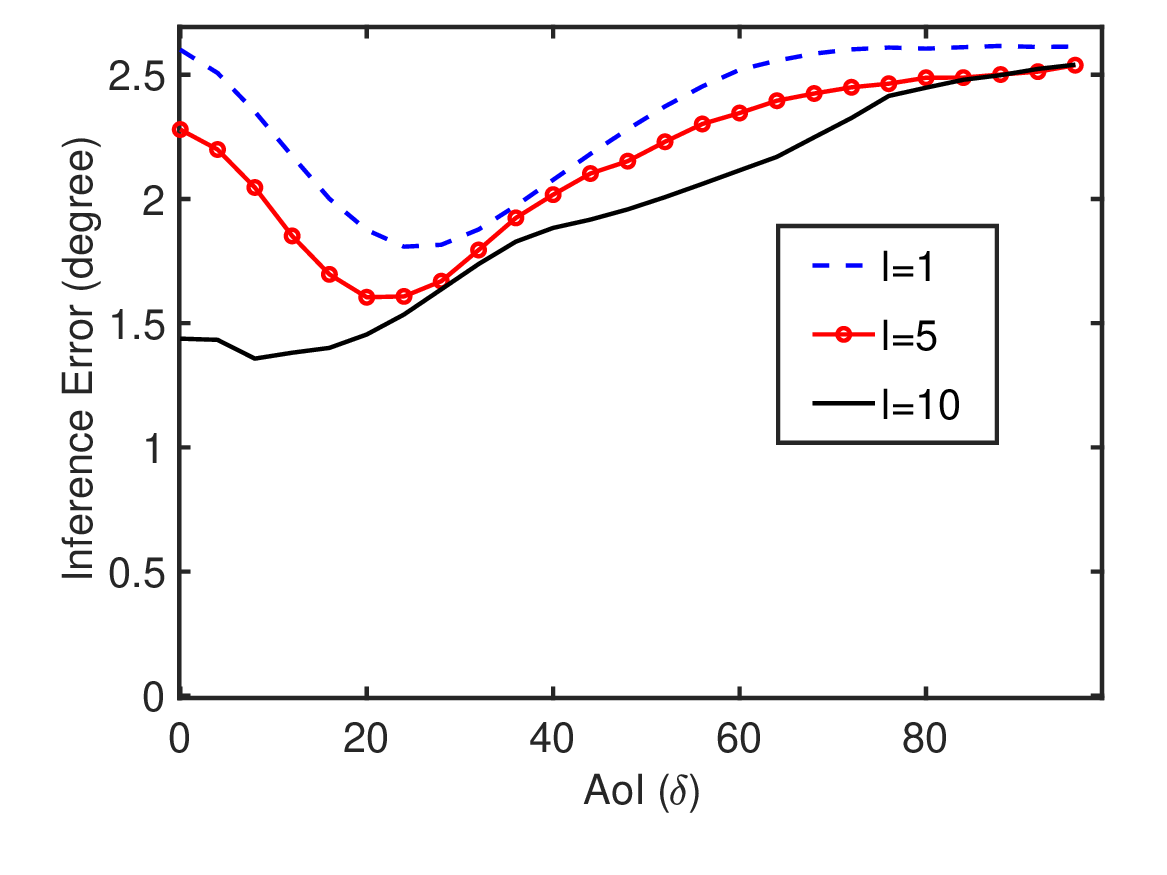}
  \subcaption{\small Inference error vs. AoI}
\end{subfigure}
%
\caption{Performance of actuator state prediction in the OpenAI CartPole-v1 task under mechanical response delay: (a) Inference error Vs. Feature length and (b) Inference error Vs. AoI. \ignore{In the OpenAI CartPole-v1 task \cite{brockman2016openai}, the pole angle is predicted by using the cart velocity. Because of the mechanical response delay between cart velocity and pole angle, the inference error are non-monotonic in the AoI.}}
\label{fig:Trainingcartpoleangle}
\end{figure*}

Let $\pi=((S_1, b_1, l_1), (S_2, b_2, l_2), \ldots)$ represent a scheduling policy. We focus on the class of \emph{signal-agnostic} scheduling policies in which each decision is determined without using the knowledge of the signal value of the observed process. A scheduling policy $\pi$ is said to be signal-agnostic, if the policy is independent of $\{(Y_t, X^l_t), t =0,1,2,\ldots\}$. Let $\Pi$ denote the set of all the causal scheduling policies that satisfy the following conditions: (i) the scheduling time $S_i$, the feature position $b_i$, and the feature length $l_i$ are decided based on the current and the historical information available at the scheduler such that $1 \leq l_i \leq B$ and $0 \leq b_i \leq B-l_i$, (ii) the scheduler has access to the inference error function $\mathrm{err}_{\mathrm{inference}}(\cdot)$ and the distribution of $T_i(l)$ for each $l=1, 2, \ldots, B$, and (iii) the scheduler does not have access to the realization of the process $\{(Y_t, X^l_t), t =0,1,2,\ldots\}$. We use $\Pi_{\mathrm{inv}}\subset \Pi$ to denote the set of causal scheduling policies with time-invariant feature length, defined as
\begin{align}\label{PiInv}
 \Pi_{\mathrm{inv}}:=\bigcup_{l=1}^{B} \Pi_l,
\end{align}
where $\Pi_l:= \{\pi \in \Pi: l_1=l_2=\dots=l\}$.

\ignore{\section{System Model and Problem Formulation}\label{Scheduling:SingleUser}

\subsection{Real-time Inference Model}\label{SystemModel1}
\begin{figure}[t]
\centering
\includegraphics[width=0.70\textwidth]{Scheduling3.eps}
\caption{\small  A remote inference system, where $X_{t-b}^l:= (V_{t-b}, V_{t-b-1},\ldots, V_{t-b-l+1})$. \label{fig:scheduling}
}
\vspace{-3mm}
\end{figure}
We consider a remote inference system that is composed of a sensor, a transmitter, and a receiver, as illustrated in Fig. \ref{fig:scheduling}. The sensor observes a time-varying target $Y_t \in \mathcal Y$ and feeds its measurements $V_t \in \mathcal V$ to the transmitter. The transmitter progressively generates features $X_{t-b}^l$ from the sensory data $V_t$, where each feature $X_{t-b}^l:= (V_{t-b}, V_{t-b-1},\ldots, V_{t-b-l+1})\in \mathcal V^l$\footnote{$\mathcal V^l$ is the set of all $l$-tuples that take values from $\mathcal V$.} is a temporal sequence taken from $V_t$. For ease of presentation, the temporal sequence length $l$ of feature $X_{t-b}^l$ is called \emph{feature length}. The features are sent one-by-one through a communication channel to the receiver. {\blue Due to communication delays, channel errors, and processing time delays, the features are not received instantly. The most recently received feature is denoted as $X^l_{t-\Delta(t)} = (V_{t-\Delta(t)}, V_{t-\Delta(t)-1}, \ldots, V_{t-\Delta(t)-l+1})$, where the \emph{age of information (AoI)} $\Delta(t)$ represents the difference between the timestamps of the target $Y_t$ and the freshest observation $V_{t-\Delta(t)}$ in feature $X^l_{t-\Delta(t)}$. At the receiver, a neural network associated with feature length $l$ takes the AoI $\Delta(t) \in \mathbb Z^{+}$ and the feature $X^l_{t-\Delta(t)} \in \mathcal V^l$ as inputs and generates an output $a = \phi_{l}(\Delta(t), X^l_{t-\Delta(t)}) \in \mathcal A$, where the neural network is represented by the function $\phi_l: \mathbb Z^{+} \times \mathcal V^l \mapsto \mathcal A$. The performance of the neural network is measured by a loss function $L: \mathcal Y \times \mathcal A \mapsto \mathbb R$, where $L(y, a)$ indicates the incurred loss if the output $a \in \mathcal A$ is used for inference when $Y_t = y$.} The loss function $L$ is determined by the purpose of the application. For example, in softmax regression (i.e., neural network based maximum likelihood classification), the output $a=Q_Y$ is a distribution of $Y_t$ and the loss function $L_{\text{log}}(y, Q_Y)=-\text{log}~Q_Y(y)$ is the negative log-likelihood of the value $Y_t=y$. In neural network based mean-squared estimation, a quadratic loss function $L_2(\mathbf y, \mathbf{\hat y})=\|\mathbf y-\mathbf{\hat y}\|^2_2$ is used, where the action $a=\mathbf{\hat y}$ is an estimate of the target value $Y_t=\mathbf y$ and $\| \mathbf y \|_2$ is the $L^2$ or euclidean norm of the vector $\mathbf y$. 

{\blue Given AoI $\Delta(t)=\delta$, the expected inference error is expressed as a function of the AoI $\delta$ and feature length $l$, which is given by
\begin{align}\label{instantaneous_err1} 
\mathrm{err}_{\mathrm{inference}}(\delta, l):=\mathbb E_{Y, X^l \sim P_{Y_t, X^{l}_{t-\delta}}}\left[L\left(Y,\phi_{l}(X^l, \delta)\right)\right],
\end{align}
where $P_{Y_t, X^{l}_{t-\delta}}$ is the joint distribution of the label $Y_t$ and feature $X^{l}_{t-\delta}$. The inference error $\mathrm{err}_{\mathrm{inference}}(\delta, l)$ can be computed through machine learning experiments. Five experiments are detailed in \cite{ShisherMobihoc}, including (a) video prediction, (b) robotic state prediction, (c) actuator state prediction, (d) channel state information prediction, and (e) temperature prediction. The experimental results in \cite{ShisherMobihoc} suggest that the inference error $\mathrm{err}_{\mathrm{inference}}(\delta, l)$ is a non-increasing function of the feature length $l$ for a fixed AoI $\delta$. This implies that a longer feature leads to a lower inference error. However, the inference error $\mathrm{err}_{\mathrm{inference}}(\delta, l)$ is not necessarily a monotonic function of the AoI $\delta$ for a fixed feature length $l$. In Appendix \ref{information-theoreticAnalysis}, we provide an information-theoretic analysis demonstrating the joint impact of AoI $\delta$ and feature length $l$ on the inference error $\mathrm{err}_{\mathrm{inference}}(\delta, l)$.}

\subsection{Communication model}\label{communication}
The system is time-slotted and starts to operate at time slot $t = 0$. At the beginning of each time slot $t$, the transmitter appends sensory output $V_t$ to a buffer that stores $B$ most recent sensory outputs $(V_t, V_{t-1}, \ldots, V_{t-B+1});$ meanwhile, the oldest output $V_{t-B}$ is removed from the buffer. We assume that the buffer is full initially, such that $B$ signal values $(V_0, V_1, \ldots, V_{-B+1})$ are stored in the buffer at time $t=0$. This ensures that the buffer is always full at any time $t$.\footnote{This assumption does not introduce any loss of generality. If the buffer is no full at time $t=0$, it would not affect our results.} 

In each time slot, the channel is either in an \emph{idle} or \emph{busy} state. If the channel is idle in time slot $t$, the transmitter can select a temporal sequence of sensory data $X_{t-b}^l = (V_{t-b},\ldots,V_{t-b-l+1})$ from the buffer such that $1\leq l\leq B$ and $0\leq b\leq B-l$, and submit the sequence $X_{t-b}^l$ to the channel. In this work, the sequence $X_{t-b}^l$ is called a \emph{feature}. Recall that the temporal sequence length $l$ of feature $X_{t-b}^l$ is called \emph{feature length}. Also, for convenience, the starting position $b$ of feature $X_{t-b}^l$ in the buffer is called \emph{feature position}. This is known as the ``selection-from-buffer" model, which was proposed recently in \cite{ShisherMobihoc}. In contrast to the ``generate-at-will" model \cite{sun2017update, yates2015lazy}, where the transmitter can only select the most recent sensory output $V_t$, the ``selection-from-buffer" model offers greater flexibility by allowing the transmitter to choose multiple sensory outputs (which can be stale or fresh) that are relevant to the target $Y_t$. The number of the signal values selected is determined by the feature length. Feature length represents a trade-off between learning and communications: A longer feature can provide better learning performance, whereas it requires more channel resources (e.g., more time slots or more frequency bandwidth) for sending the feature. This motivated us to study a learning-communication co-design problem that jointly optimizes the feature length, feature position, and transmission scheduling.

The feature length and feature position may vary across the features sent over time. Feature transmissions over the channel are non-preemptive: the channel must finish sending the current feature, before becoming available to transmit the next feature. Suppose that the $i$-th feature $X_{S_i-b_i}^{l_i} = (V_{S_i-b_i}, V_{S_i-b_i-1},\ldots, V_{S_i-b_i-l_i+1})$ is submitted to the channel at time slot $t=S_i$, where $l_i$ is its feature length and $b_i$ is its feature position such that $1\leq l_i\leq B$ and $0\leq b_i\leq B-l_i$. It takes $T_i(l_i)\geq 1$ time slots to send the $i$-th feature over the channel. The $i$-th feature is delivered to the receiver at time slot $D_i=S_i + T_i(l_i)$, where $S_i < D_i \leq S_{i+1}$. The feature transmission time $T_i(l_i)$ depends on the feature length $l_i$. Due to time-varying channel conditions, we assume that, given feature length $l_i=l$, the $T_i(l)$'s are \emph{i.i.d.} random variables, with a finite mean $1\leq \mathbb E[T_i(l)] < \infty$. Once a feature is delivered, an acknowledgment (ACK) is sent back to the transmitter, notifying that the channel has become idle.

In time slot $t$, the $i(t)$-th feature $X_{S_{i(t)}-b_{i(t)}}^{l_{i(t)}}$ is the most recently received feature, where $i(t)=\max_i\{D_i\leq t\}$. The receiver feeds the feature $X_{S_{i(t)}-b_{i(t)}}^{l_{i(t)}}$ to the neural network to infer $Y_t$. \emph{Age of information (AoI)} $\Delta(t)$ is defined as the time difference between the time-stamp of the freshest sensory output $V_{S_{i(t)}-b_{i(t)}}$ in feature $X_{S_{i(t)}-b_{i(t)}}^{l_{i(t)}}$ and the current time $t$, i.e.,  
\begin{align}\label{Def_AoI}
\Delta(t):=t-\max_i \{S_i-b_i: D_i \leq t \}.
\end{align}
Because $D_i < D_{i+1}$, it holds that  
\begin{align}\label{Def_AoI1}
\Delta(t)=t-S_i+b_i,~\text{if}~ D_i \leq t < D_{i+1}.
\end{align}
The initial state of the system is assumed to be $S_0=0, l_0=1, b_0=0, D_0=T_0(l_0)$, and $\Delta(0)$ is a finite constant. Since AoI $\Delta(t)$ and feature length $l_{i(t)}$ can change over time, multiple neural networks need to be utilized to cover all possible values of feature length.

\subsection{Learning and communication co-design problem}
Let $\pi=((S_1, b_1, l_1), (S_2, b_2, l_2), \ldots)$ represent a scheduling policy and $\Pi$ denote the set of all the causal scheduling policies that satisfy the following conditions: (i) the scheduling time $S_i$, the feature position $b_i$, and the feature length $l_i$ are decided based on the current and the historical information available at the transmitter such that $1 \leq l_i \leq B$ and $0 \leq b_i \leq B-l_i$ and (ii) the scheduler has access to the inference error function $\mathrm{err}_{\mathrm{inference}}(\cdot)$ and the distribution of $T_i(l)$ for each $l=1, 2, \ldots, B$. We use $\Pi_{\mathrm{inv}}$ to denote the set of causal scheduling policies with time-invariant feature length, defined as
\begin{align}
 \Pi_{\mathrm{inv}}:=\bigcup_{l=1}^{B} \Pi_l,
\end{align}
where $\Pi_l$ is given by
\begin{align}
\Pi_l:= \{\pi \in \Pi: l_1=l_2=\dots=l\}.
\end{align}
Under policy $\pi$, the time-averaged expected inference error $\bar p_{\pi}$ is given by
\begin{align}
\bar p_{\pi}=\limsup_{T\rightarrow \infty}\frac{1}{T} \mathbb{E}_{\pi} \left[ \sum_{t=0}^{T-1} \mathrm{err}_{\mathrm{inference}}(\Delta(t), l_{i(t)})\right],
\end{align}
where $\mathrm{err}_{\mathrm{inference}}(\Delta(t), l_{i(t)})$ is the expected inference error at time slot $t$.

We consider two problems: (i) The goal of the first problem is to find an optimal policy from $\Pi_{\mathrm{inv}}$ that minimizes the time-averaged inference error and (ii) the second problem is to find an optimal policy from the set of all causal scheduling policies $\Pi$ that minimizes the time-averaged inference error.}

{\section{Preliminaries: Impacts of Feature Length and AoI on Inference Error}\label{information-theoreticAnalysis}
In this section, we adopt an information-theoretic approach that was developed recently in \cite{ShisherMobihoc} to show the impact of feature length $l$ and AoI $\delta$ on the inference error $\mathrm{err}_{\mathrm{inference}}(\delta, l)$.

\subsection{Information-theoretic Metrics for Training and Inference Errors}
Training error $\mathrm{err}_{\mathrm{training}}(\delta, l)$ is expressed as a function of $\delta$ and $l$, given by
\begin{align}\label{trainingError}
\mathrm{err}_{\mathrm{training}}(\delta, l)=\mathbb{E}_{Y,X^l\sim P_{\tilde Y_0, \tilde X^l_{-\delta}}}[L(Y,\phi_{l}(\delta, X^l))], 
\end{align}
where $\phi_l$ a trained neural network used in \eqref{instantaneous_err1} and $P_{\tilde Y_0, \tilde X^l_{-\delta}}$ is the joint distribution of the target $\tilde Y_0$ and the feature $\tilde X^l_{-\delta}$ in the training dataset. The training error $\mathrm{err}_{\mathrm{training}}(\delta, l)$ is lower bounded by 
\begin{align}\label{eq_TrainingErrorLB}
H_L(\tilde Y_0| \tilde X^l_{-\delta})=\min_{\phi_{l} \in \Phi} \mathbb{E}_{Y,X^l\sim P_{\tilde Y_0, \tilde X^l_{-\delta}}}[L(Y,\phi_{l}(\delta, X^l))],
\end{align} 
where $\Phi=\{\phi_{l} : \mathbb Z^{+} \times \mathcal V^l \mapsto \mathcal A\}$ is the set of all functions that map from $\mathbb Z^{+} \times \mathcal V^l$ to $\mathcal A$. Because the trained neural network $\phi_l$ in \eqref{trainingError} satisfies $\phi_l \in \Phi$, $H_L(\tilde Y_0| \tilde X^l_{-\delta}) \leq \mathrm{err}_{\mathrm{training}}(\delta, l)$. The lower bound in \eqref{eq_TrainingErrorLB} has an information-theoretical interpretation \cite{ShisherMobihoc, Dawid2004, farnia2016minimax, Dawid1998}: It is a generalized conditional entropy of a random variable $\tilde Y_0$ given $\tilde X^l_{-\delta}$ associated to the loss function $L$.  For notational simplicity, we call $H_L(Y|X)$ an $L$-conditional entropy of a random variable $Y$ given $X$. The \emph{$L$-entropy} of a random variable $Y$ is defined as \cite{Dawid2004, farnia2016minimax}
\begin{align}\label{eq_Lentropy}
H_L(Y) = \min_{a\in\mathcal A} \mathbb{E}_{Y \sim P_{Y}}[L(Y,a)].
\end{align} 
The optimal solutions to \eqref{eq_Lentropy} may not be unique.  Let $a_{P_Y}$ denote an optimal solution to \eqref{eq_Lentropy}, which is called a \emph{Bayes action} \cite{Dawid2004}.
Similarly, the $L$-conditional entropy of $Y$ given $X=x$ is defined as \cite{shisher2021age,ShisherMobihoc,Dawid2004,farnia2016minimax}
\begin{align}\label{given_L_condentropy}
H_L(Y| X=x)= \min_{a \in\mathcal A}\! \mathbb E_{Y\sim P_{Y| X=x}} [L(Y, a)]
\end{align}
and the $L$-conditional entropy of $Y$ given $X$ is given by \cite{shisher2021age,ShisherMobihoc,Dawid2004, farnia2016minimax}
\begin{align}\label{eq_cond_entropy1}
H_L(Y|X)=\sum_{x \in \mathcal X} P_X(x) H_L(Y| X=x).
\end{align}

The inference error $\mathrm{err}_{\mathrm{inference}}(\delta, l)$ can be approximated as the following $L$-conditional cross entropy 
\begin{align}\label{inferenceConditionCross}
\!\!\!\!\!\!&H_L(P_{Y_t|X^l_{t-\delta}}; P_{\tilde Y_0|\tilde X^l_{-\delta}}|P_{X^l_{t-\delta}})\nonumber\\
\!\!\!\!\!\!=& \sum_{x \in \mathcal X^l} P_{X^l_{t-\delta}}(x) \mathbb E_{Y \sim P_{Y_t|X^l_{t-\delta}=x}}\left[L\left(Y, a_{P_{\tilde Y_0|\tilde X^l_{-\delta}=x}}\right)\right],\!\!\!
\end{align}
where the $L$-conditional cross entropy $H_L(P_{Y|X}; P_{\tilde Y|\tilde X} | P_X)$ is defined as \cite{ShisherMobihoc}
\begin{align}
&H_L(P_{Y|X}; P_{\tilde Y| \tilde X} | P_X)\nonumber\\
=&\sum_{x \in \mathcal X} P_X(x) \mathbb E_{Y \sim P_{Y|X=x}}\left[L\left(Y, a_{P_{\tilde Y|\tilde X=x}}\right)\right].
\end{align}
If training algorithm considers sets of large and wide neural networks such that $a_{P_{\tilde Y_0|\tilde X^l_{-\delta}=x}}$ and $\phi_l(\delta, x)$ for all $\delta \in \mathbb Z^{+}$ and $x\in \mathcal X^l$ are close to each other, then the difference between the inference error $\mathrm{err}_{\mathrm{inference}}(\delta, l)$ and the $L$-conditional cross entropy $H_L(P_{Y_t|X^l_{t-\delta}}; P_{\tilde Y_0|\tilde X^l_{-\delta}}|P_{X^l_{t-\delta}})$ is small \cite{ShisherMobihoc}. Compared to $\mathrm{err}_{\mathrm{inference}}(\delta, l)$, the $L$-conditional cross entropy $H_L(P_{Y_t|X^l_{t-\delta}}; P_{\tilde Y_0|\tilde X^l_{-\delta}}|P_{X^l_{t-\delta}})$ are mathematically more convenient to analyze, as we will see next. 

\subsection{Information-theoretic Monotonicity Analysis}
The following lemma interprets the monotonicity of the $L$-conditional entropy $H_L(\tilde Y_0| \tilde X^l_{-\delta})$ and the $L$-conditional cross entropy $H_L(P_{Y_t|X^l_{t-\delta}}; P_{\tilde Y_0|\tilde X^l_{-\delta}}|P_{X^l_{t-\delta}})$ with respect to the feature length $l$.

\begin{lemma}\label{lemma2}
The following assertions are true:
\begin{itemize}
\item[(a)] Given $\delta \geq 0$, $H_L(\tilde Y_0| \tilde X^l_{-\delta})$ is a non-increasing function of $l$, i.e., for all $1 \leq l_1 \leq l_2$
\begin{align}\label{lemma2e1}
H_L(\tilde Y_0| \tilde X^{l_2}_{-\delta}) \leq H_L(\tilde Y_0| \tilde X^{l_1}_{-\delta}).
\end{align}
\item[(b)] Given $\beta \geq 0$, if for all $l=1,2, \ldots,$ and $x \in \mathcal V^l$
    \begin{align}\label{condition2}
     &\sum_{x \in \mathcal X^l} P_{X^l_{t-\delta}}(x)\sum_{y \in \mathcal Y} (P_{Y_{t}|X^l_{t-\delta}=x}(y)-P_{\tilde Y_0| \tilde X^l_{-\delta}=x}(y))^2 \nonumber\\
     &\leq  \beta^2,
    \end{align}
then for all $1 \leq l_1 \leq l_2$
    \begin{align}\label{Eq_lemma}
       &H_L(P_{Y_t|X^{l_2}_{t-\delta}}; P_{\tilde Y_0|\tilde X^{l_2}_{-\delta}}|P_{X^{l_2}_{t-\delta}}) \nonumber\\
       \leq& H_L(P_{Y_t|X^{l_1}_{t-\delta}}; P_{\tilde Y_0|\tilde X^{l_1}_{-\delta}}|P_{X^{l_1}_{t-\delta}})+O(\beta).
    \end{align}
\end{itemize}
\end{lemma}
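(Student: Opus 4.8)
\emph{Part (a).} The plan is to exploit the nested (prefix) structure of the features: since $X^{l}_{-\delta}=(V_{-\delta},\ldots,V_{-\delta-l+1})$, whenever $l_1\le l_2$ the shorter feature $\tilde X^{l_1}_{-\delta}$ is a deterministic function (the first $l_1$ coordinates) of the longer feature $\tilde X^{l_2}_{-\delta}$. First I would recall from \eqref{eq_TrainingErrorLB} that, with $\delta$ fixed, $H_L(\tilde Y_0\mid \tilde X^{l}_{-\delta})$ is the minimum of $\mathbb{E}[L(\tilde Y_0,\psi(\tilde X^{l}_{-\delta}))]$ over all maps $\psi$ of the feature. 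Any predictor using only $\tilde X^{l_1}_{-\delta}$ can be realized as a predictor using $\tilde X^{l_2}_{-\delta}$ by composing with the coordinate-projection map; hence the feasible predictor set at length $l_2$ contains a copy of the feasible set at length $l_1$, and minimizing over a larger set yields \eqref{lemma2e1}. Equivalently I could argue via the law of total probability: $P_{\tilde Y_0\mid \tilde X^{l_1}_{-\delta}=x}$ is a convex mixture of the distributions $P_{\tilde Y_0\mid \tilde X^{l_2}_{-\delta}=x'}$ over the longer features $x'$ consistent with $x$; since the $L$-entropy in \eqref{eq_Lentropy} is a pointwise minimum of linear functionals of the distribution it is concave, so Jensen's inequality gives $H_L(\tilde Y_0\mid \tilde X^{l_1}_{-\delta}=x)\ge \sum_{x'}P(x'\mid x)\,H_L(\tilde Y_0\mid \tilde X^{l_2}_{-\delta}=x')$, and averaging over $x$ via \eqref{eq_cond_entropy1} again produces \eqref{lemma2e1}.

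\emph{Part (b).} The idea is to sandwich the cross entropy between a genuine conditional entropy and that same entropy plus an $O(\beta)$ term, and then invoke part (a). Writing $P_x:=P_{Y_t\mid X^l_{t-\delta}=x}$, $\tilde P_x:=P_{\tilde Y_0\mid \tilde X^l_{-\delta}=x}$, and letting $a_{\tilde P_x}$ be the Bayes action of $\tilde P_x$, each summand of \eqref{inferenceConditionCross} satisfies
\begin{align}
\mathbb{E}_{Y\sim P_x}[L(Y,a_{\tilde P_x})]=H_L(\tilde P_x)+\sum_{y\in\mathcal Y}\big(P_x(y)-\tilde P_x(y)\big)L(y,a_{\tilde P_x}),\nonumber
\end{align}
since $\mathbb{E}_{Y\sim\tilde P_x}[L(Y,a_{\tilde P_x})]=H_L(\tilde P_x)$ by definition of the Bayes action (here $H_L(\tilde P_x)=H_L(\tilde Y_0\mid \tilde X^l_{-\delta}=x)$). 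Summing against the common stationary marginal $P_{X^l_{t-\delta}}$, the first term reproduces $H_L(\tilde Y_0\mid \tilde X^l_{-\delta})$, which is non-increasing in $l$ by part (a), while the second term is the distribution-shift error to be controlled. I would bound it by the Cauchy--Schwarz inequality in $y$ followed by Jensen's inequality (concavity of $\sqrt{\cdot}$) in $x$,
\begin{align}
&\Big|\sum_{x}P_{X^l_{t-\delta}}(x)\sum_{y}\big(P_x(y)-\tilde P_x(y)\big)L(y,a_{\tilde P_x})\Big|\nonumber\\
&\le C\sqrt{\sum_{x}P_{X^l_{t-\delta}}(x)\sum_{y}\big(P_x(y)-\tilde P_x(y)\big)^2}\le C\beta,\nonumber
\end{align}
with $C:=\sup_{x,l}\|L(\cdot,a_{\tilde P_x})\|_2<\infty$ and the last step invoking the hypothesis \eqref{condition2}. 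Combining the two observations gives $\big|\mathrm{CE}_l-H_l\big|\le C\beta$ for every $l$, where $\mathrm{CE}_l$ denotes the cross entropy and $H_l:=H_L(\tilde Y_0\mid\tilde X^l_{-\delta})$. Finally I would chain $\mathrm{CE}_{l_2}\le H_{l_2}+C\beta\le H_{l_1}+C\beta\le \mathrm{CE}_{l_1}+2C\beta$, where the middle inequality is part (a); this is exactly \eqref{Eq_lemma} with $O(\beta)$ constant $2C$.

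The main obstacle is the gap-control step in part (b), which rests on two points deserving care. (i) The Cauchy--Schwarz bound needs $\|L(\cdot,a_{\tilde P_x})\|_2$ uniformly bounded, which holds for bounded loss functions but fails for, e.g., the log-loss, so the $O(\beta)$ claim implicitly requires a boundedness (or uniform second-moment) assumption. (ii) Turning $\sum_x P_{X^l_{t-\delta}}(x)H_L(\tilde P_x)$ into the conditional entropy $H_l$ of part (a) uses that the feature marginals coincide in training and inference, which is true here by stationarity. If one prefers to avoid the matched-marginal assumption, an alternative is to compare $\mathrm{CE}_l$ with the \emph{inference} conditional entropy $H_L(Y_t\mid X^l_{t-\delta})$: the prefix argument of part (a) shows this too is non-increasing in $l$, the excess-risk gap obeys $0\le \mathrm{CE}_l-H_L(Y_t\mid X^l_{t-\delta})\le \sum_x P_{X^l_{t-\delta}}(x)\langle P_x-\tilde P_x,\,L(\cdot,a_{\tilde P_x})-L(\cdot,a_{P_x})\rangle$, which is again $O(\beta)$ by Cauchy--Schwarz, and the identical chaining delivers \eqref{Eq_lemma}.
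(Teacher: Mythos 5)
Your proposal is correct in substance, and in its ``alternative'' form it coincides with the paper's argument. For part (a) the paper simply notes that the shorter feature is a prefix of the longer one, so $\tilde Y_0 \leftrightarrow \tilde X^{l_2}_{-\delta} \leftrightarrow \tilde X^{l_1}_{-\delta}$ is a Markov chain, and then invokes the data processing inequality for $L$-conditional entropy from Dawid's work; your predictor-set-containment argument and the Jensen/concavity argument are exactly the standard proofs of that inequality, so this is the same route made self-contained. For part (b) the paper anchors the cross entropy at the \emph{inference} conditional entropy: it cites a local-geometry lemma from prior work to get $H_L(P_{Y_t|X^l_{t-\delta}}; P_{\tilde Y_0|\tilde X^l_{-\delta}}|P_{X^l_{t-\delta}}) = H_L(Y_t|X^l_{t-\delta}) + O(\beta)$ and then applies the monotonicity of $H_L(Y_t|X^l_{t-\delta})$ in $l$ --- which is precisely your fallback route, with your Cauchy--Schwarz excess-risk bound playing the role of the cited lemma. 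Your primary route, anchoring at the \emph{training} conditional entropy, does have the defect you yourself flag: condition \eqref{condition2} controls only the conditional laws weighted by the inference feature marginal $P_{X^l_{t-\delta}}$, so $\sum_x P_{X^l_{t-\delta}}(x)\, H_L(\tilde Y_0|\tilde X^l_{-\delta}=x)$ is not $H_L(\tilde Y_0|\tilde X^l_{-\delta})$ unless the training and inference feature marginals coincide, and stationarity of each process separately does not guarantee that; the fallback is therefore the right (and the paper's) choice. Your observation that the Cauchy--Schwarz step requires a uniform bound on $\|L(\cdot,a)\|_2$ (failing, e.g., for unbounded log-loss) is a fair caveat that the paper inherits implicitly from the assumptions of the cited lemma.
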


\begin{proof}
Lemma 1 can be proven by using the data processing inequality for $L$-conditional entropy \cite[Lemma 12.1]{Dawid1998} and a local information geometric analysis. \ifreport See Appendix B of the supplementary material for the details. \else Detailed proof is provided in our Technical report \cite{technical_report}. \fi
\end{proof}

Lemma \ref{lemma2}(a) demonstrates that for a given AoI value $\delta$, the $L$-conditional entropy $H_L(\tilde Y_0| \tilde X^l_{-\delta})$ decreases as the feature length $l$ increases. This is due to the fact that a longer feature provides more information, consequently leading to a lower $L$-conditional entropy. Additionally, as indicated in Lemma \ref{lemma2}(b), when the conditional distributions in training and inference data are close to each other (i.e., when $\beta$ in \eqref{condition2} is close to $0$), the $L$-conditional cross entropy $H_L(P_{Y_t|X^{l}_{t-\delta}}; P_{\tilde Y_0|\tilde X^{l}_{-\delta}}|P_{X^{l}_{t-\delta}})$ is close to a non-increasing function of the feature length $l$. This information-theoretic analysis clarifies the experimental results depicted in Fig. \ref{fig:Trainingcsi}(a) and Fig. \ref{fig:Trainingcartpoleangle}(a), where the inference error diminishes with the increasing feature length. 

The monotonicity of the $L$-conditional cross entropy $H_L(P_{Y_t|X^l_{t-\delta}}; P_{\tilde Y_0|\tilde X^l_{-\delta}}|P_{X^l_{t-\delta}})$ with respect to the AoI $\delta$ are explained in Theorem 3 of \cite{ShisherMobihoc} and in \cite{Shisher2023Timely}. This result is restated in Lemma 2 below for the sake of completeness. 

\begin{definition}[\textbf{$\epsilon$-Markov Chain} \cite{ShisherMobihoc, Shisher2023Timely}]
Given $\epsilon \geq 0$, a sequence of three random variables $Y, X,$ and $Z$ is said to be an \emph{$\epsilon$-Markov chain}, denoted as $Y \overset{\epsilon} \leftrightarrow X \overset{\epsilon} \leftrightarrow Z$, if
\begin{align}\label{epsilon-Markov-def}
I_{\mathrm{log}}(Y;Z|X)=\mathbb E_{X, Z \sim P_{X, Z}} \left [ D_{\mathrm{log}}\left(P_{Y|X,Z} || P_{Y|X} \right)\right] \leq \epsilon^2,
\end{align}
where
\begin{align}\label{chi-divergence-def}
D_{\mathrm{log}}(P_Y ||Q_Y)=\sum_{y \in \mathcal{Y}} P_Y(y) \mathrm{log} \frac{P_Y(y)}{Q_Y(y)}
\end{align}
is KL-divergence and $I_{\mathrm{log}}(Y;Z|X)$ is Shannon conditional mutual information.
\end{definition}

\begin{lemma}\label{lemma1} \cite{ShisherMobihoc, Shisher2023Timely}
If $Y_t \overset{\epsilon}\leftrightarrow X^l_{t-\mu} \overset{\epsilon}\leftrightarrow X^l_{t-\mu-\nu}$ is an $\epsilon$-Markov chain for all $\mu, \nu \geq 0$ and \eqref{condition2} holds, then for all $0 \leq \delta_1\leq \delta_2$
\begin{align}\label{eq_theorem3}
 &H_L(P_{Y_t|X^l_{t-\delta_1}}; P_{\tilde Y_0|\tilde X^l_{-\delta_1}}|P_{X^l_{t-\delta_1}})\nonumber\\
 \leq& H_L(P_{Y_t|X^l_{t-\delta_2}}; P_{\tilde Y_0|\tilde X^l_{-\delta_2}}|P_{X^l_{t-\delta_2}})+O\big(\max\{\epsilon, \beta\}\big).
\end{align}
\end{lemma}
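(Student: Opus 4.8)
The plan is to decompose the claim into two independent parts and then combine their error terms. First I would reduce the assertion about the $L$-conditional cross entropy to the analogous assertion about the \emph{true} $L$-conditional entropy $H_L(Y_t|X^l_{t-\delta})$, absorbing the training--inference mismatch into an $O(\beta)$ term. Second, I would establish that $H_L(Y_t|X^l_{t-\delta})$ is non-decreasing in $\delta$ by means of an approximate data processing inequality fueled by the $\epsilon$-Markov chain hypothesis. Adding the two penalties yields the stated $O(\max\{\epsilon,\beta\})$ slack.

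For the first reduction, I would compare $H_L(P_{Y_t|X^l_{t-\delta}}; P_{\tilde Y_0|\tilde X^l_{-\delta}}|P_{X^l_{t-\delta}})$, which scores the training-learned Bayes action $a_{P_{\tilde Y_0|\tilde X^l_{-\delta}=x}}$ against the inference law, with the genuine conditional entropy $H_L(Y_t|X^l_{t-\delta})$, which uses the Bayes action of the inference law itself. Their gap is controlled by how far the conditional distributions $P_{Y_t|X^l_{t-\delta}=x}$ and $P_{\tilde Y_0|\tilde X^l_{-\delta}=x}$ lie apart. A first-order (local) expansion of the loss around the inference Bayes action, combined with the averaged squared total-variation bound in \eqref{condition2}, bounds this gap by $O(\beta)$ uniformly in $\delta$, so both cross entropies may be replaced by the corresponding true entropies at a cost of $O(\beta)$.

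For the monotonicity of the true entropy, I would set $\mu=\delta_1$ and $\nu=\delta_2-\delta_1\geq 0$ so that the hypothesis supplies $Y_t\overset{\epsilon}\leftrightarrow X^l_{t-\delta_1}\overset{\epsilon}\leftrightarrow X^l_{t-\delta_2}$. Were this an exact Markov chain, the data processing inequality for $L$-conditional entropy \cite[Lemma 12.1]{Dawid1998} would immediately give $H_L(Y_t|X^l_{t-\delta_1})\leq H_L(Y_t|X^l_{t-\delta_2})$: any action based on the staler feature $X^l_{t-\delta_2}$ can be simulated from the fresher feature $X^l_{t-\delta_1}$ by sampling $X^l_{t-\delta_2}$ through the kernel $P_{X^l_{t-\delta_2}|X^l_{t-\delta_1}}$ and applying the staler Bayes action; since under the chain $P_{Y_t|X^l_{t-\delta_1}}=P_{Y_t|X^l_{t-\delta_1},X^l_{t-\delta_2}}$, this randomized strategy attains exactly the staler loss, and the fresher Bayes action can only improve on it. To handle the approximate chain, I would run the same simulation with the true kernel and quantify the extra loss incurred because $P_{Y_t|X^l_{t-\delta_1},X^l_{t-\delta_2}}$ differs from $P_{Y_t|X^l_{t-\delta_1}}$; the condition $I_{\mathrm{log}}(Y_t;X^l_{t-\delta_2}|X^l_{t-\delta_1})\leq\epsilon^2$ bounds this deviation in KL-divergence, and Pinsker's inequality converts it into a total-variation and hence an $O(\epsilon)$ penalty on the loss difference.

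The main obstacle is precisely this approximate data processing step. Unlike the exact inequality, which is a clean averaging argument, the $\epsilon$-version requires a careful local information-geometric analysis to turn the bound $\epsilon^2$ on $I_{\mathrm{log}}$ into an \emph{additive} $L$-entropy penalty of order $O(\epsilon)$ (rather than $O(\epsilon^2)$ or worse), and to guarantee that this penalty is uniform over the conditioning values $x$ and independent of $\delta_1,\delta_2$. Because the statement is restated from \cite{ShisherMobihoc, Shisher2023Timely}, I would follow their local-geometry argument---expanding the entropies to first order around the Markov-projected distributions and applying Cauchy--Schwarz together with \eqref{condition2}---to collect all deviations into the single $O(\max\{\epsilon,\beta\})$ term asserted by the lemma.
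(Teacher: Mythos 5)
The paper does not actually prove this lemma: it is imported verbatim from \cite{ShisherMobihoc, Shisher2023Timely} (``restated \ldots for the sake of completeness''), and no argument for it appears in the appendices. So there is no in-paper proof to match against; what can be said is that your reconstruction is correct and follows exactly the template the paper uses for the sibling result, Lemma~\ref{lemma2}(b): first replace each $L$-conditional cross entropy by the true $L$-conditional entropy $H_L(Y_t|X^l_{t-\delta})$ at an $O(\beta)$ cost via \eqref{condition2} (this is precisely the step the paper executes with \cite[Lemma~3]{ShisherMobihoc} in Appendix~B), and then apply a data processing inequality --- exact in Lemma~\ref{lemma2}(a), approximate here. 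Your simulation argument for the DPI direction is the right one, and your instantiation $\mu=\delta_1$, $\nu=\delta_2-\delta_1$ is exactly what the hypothesis is set up to provide. Two small points of care that your sketch correctly anticipates but that are worth making explicit if you write this out: (i) passing from $I_{\mathrm{log}}(Y_t;X^l_{t-\delta_2}|X^l_{t-\delta_1})\leq\epsilon^2$ to an $O(\epsilon)$ loss penalty requires Cauchy--Schwarz (or Jensen) to move the square root inside the expectation over $(x,z)$, since the $\epsilon$-Markov condition only bounds the \emph{average} KL divergence; and (ii) converting total-variation closeness into loss closeness needs the loss $L$ to be bounded (or a local quadratic expansion to be valid), which is the regularity the cited works' local information-geometric analysis supplies. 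Neither point is a gap in your plan --- you flag both --- but they are where the $O(\epsilon)$ rather than $O(\epsilon^2)$ scaling is actually decided.
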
\

Lemma \ref{lemma1} implies that the monotonic behavior of $H_L(P_{Y_t|X^l_{t-\delta}}; P_{\tilde Y_0|\tilde X^l_{-\delta}}|P_{X^l_{t-\delta}})$ with respect to AoI $\delta$ is characterized by two key parameters: $\epsilon$ in the $\epsilon$-Markov chain model and the parameter $\beta$. When $\epsilon$ is small, the sequence of target and feature random variables approximates a Markov chain. Consequently, $H_L(P_{Y_t|X^l_{t-\delta}}; P_{\tilde Y_0|\tilde X^l_{-\delta}}|P_{X^l_{t-\delta}})$ becomes non-decreasing with respect to AoI $\delta$ provided that $\beta$ is close to $0$. Conversely, if $\epsilon$ is significantly large, then $H_L(P_{Y_t|X^l_{t-\delta}}; P_{\tilde Y_0|\tilde X^l_{-\delta}}|P_{X^l_{t-\delta}})$ can be far from a monotonic function of $\delta$. This findings provide an explanation for the patterns observed in the experimental results shown in Figs. \ref{fig:Trainingcsi}(b) to \ref{fig:Trainingcartpoleangle}(b). Shannon's interpretation of Markov sources in his seminal work \cite{Shannon1948} indicates that as the sequence length $l$ grows larger, the tuple $(Y_t, X^l_{t-\mu}, X^l_{t-\mu-\nu})$ tends to resemble a Markov chain more closely. Hence, according to Lemma \ref{lemma1}, the inference error approaches to a non-decreasing function of AoI $\delta$ as feature length $l$ increases. As illustrated in Figs. \ref{fig:Trainingcsi}(b)-\ref{fig:Trainingcartpoleangle}(b), the inference error converges to a non-decreasing function of AoI $\delta$ as feature length $l$ increases. 
}
\section{Learning and Communications Co-design: Single Source Case}\label{Scheduling:SingleUser}
Let $d(t)$ denote the feature length of the most recently received feature in time slot $t$. The time-averaged expected inference error under policy $\pi=((S_1, b_1, l_1), (S_2, b_2, l_2), \ldots)$ is expressed as
\begin{align}
\bar p_{\pi}=\limsup_{T\rightarrow \infty}\frac{1}{T} \mathbb{E}_{\pi} \left[ \sum_{t=0}^{T-1} \mathrm{err}_{\mathrm{inference}}(\Delta(t), d(t))\right],
\end{align}
where $\bar p_{\pi}$ is denoted as the time-averaged inference error, and $\mathrm{err}_{\mathrm{inference}}(\Delta(t), d(t))$ is the expected inference error at time $t$ corresponding to the system state $(\Delta(t), d(t))$. In this section, we slove two problems. The first one is to find an optimal policy that minimizes the time-averaged expected inference error among all the causal policies in $\Pi_{\mathrm{inv}}$ that consider time-invariant feature length. Another problem is to find an optimal policy that minimizes the time-averaged expected inference error among all the causal policies in $\Pi$.

\subsection{Time-invariant Feature Length}\label{inv}
We first find an optimal policy that minimizes the time-averaged inference error among all causal policies with time-invariant feature length in $\Pi_{\mathrm{inv}}$ defined in \eqref{PiInv}: 
\begin{align}\label{scheduling_problem1}
\bar p_{\mathrm{inv}}\!\!=\!\!\inf_{\pi \in \Pi_{\mathrm{inv}}}  \!\!\limsup_{T\rightarrow \infty}\frac{1}{T} \mathbb{E}_{\pi} \!\!\left[ \sum_{t=0}^{T-1} \!\mathrm{err}_{\mathrm{inference}}(\Delta(t), d(t))\right], 
\end{align}
where $\bar p_{\mathrm{inv}}$ is the optimum value of \eqref{scheduling_problem1}. The problem \eqref{scheduling_problem1} is an infinite time-horizon average-cost semi-Markov decision process (SMDP). Such problems are often challenging to solve analytically or with closed-form solutions. The per-slot cost function $\mathrm{err}_{\mathrm{inference}}(\Delta(t), d(t))$ in \eqref{scheduling_problem1} depends on two variables: the AoI $\Delta(t)$ and the feature length $d(t)$. Prior studies \cite{SunNonlinear2019, SunTIT2020, orneeTON2021, klugel2019aoi, Tripathi2019, kadota2018optimizing, Kadota2018, Kadota2019, sun2017update} have considered linear and non-linear monotonic AoI functions. Due to the fact that (i) the cost function in \eqref{scheduling_problem1} depends on two variables and (ii) is not necessarily monotonic with respect to AoI, finding an optimal solution is challenging and the existing scheduling policies cannot be directly applied to solve \eqref{scheduling_problem1}. Therefore, it is necessary to develop a new scheduling policy that can address the complexities of \eqref{scheduling_problem1}.

Surprisingly, we get a closed-form solution of \eqref{scheduling_problem1}. To present the solution, we define a function $\gamma_l(\delta, d)$ as
\begin{align}\label{gittins}
\gamma_l(\delta, d):= \inf_{\tau \in \{1, 2, \ldots\}} \frac{1}{\tau}\sum_{j=0}^{\tau-1}\mathbb E\bigg[\mathrm{err}_{\mathrm{inference}}\bigg(\delta+j+T_1(l), d\bigg) \bigg].
\end{align}

\begin{theorem}\label{theorem1}
If $T_i(l)$'s are i.i.d. with a finite mean $\mathbb E[T_i(l)]$ for each $l=1,2, \ldots, B$, then there exists an optimal solution $\pi^*=((S^*_1, b^*_1, l^*), (S^*_2, b^*_2, l^*), \ldots) \in \Pi_{\mathrm{inv}}$ to \eqref{scheduling_problem1} that satisfies:
\begin{itemize}
\item[(a)] The optimal feature position in $\pi^*$ is time-invariant, i.e., $b^*_1=b^*_2= \dots=b^*$. The optimal feature length $l^*$ and the optimal feature position $b^*$ in $\pi^*$ are given by
\begin{align}\label{optimal_buffer_length_1}
(l^*, b^*)=\argmin_{\substack{l \in \mathbb Z, b \in \mathbb Z\\1 \leq l \leq B, 0 \leq b \leq B-l}} \beta_{b, l},
\end{align}
where $\beta_{b, l}$ is the unique root of equation
\begin{align}\label{bisection}
&\mathbb{E}\left[\sum_{t=D_i(\beta_{b,l})}^{D_{i+1}(\beta_{b,l})-1}  \mathrm{err}_{\mathrm{inference}}(\Delta_b(t), l)\right]\nonumber\\
&- \beta_{b,l}~ \mathbb{E}\bigg[D_{i+1}(\beta_{b,l})-D_{i}(\beta_{b,l})\bigg]=0, 
\end{align}
$D_{i}(\beta_{b,l})=S_{i}(\beta_{b,l})+T_{i}(l)$, $\Delta_b(t)=t-S_{i}(\beta_{b,l})+b$, the sequence $(S_{1}(\beta_{b,l}), S_{2}(\beta_{b,l}), \ldots)$ is determined by
\begin{align}\label{OptimalPolicy_fixed_length_fixed_buffer}
\!\!\!\!S_{i+1}(\beta_{b,l})\!=\! \min_{t \in \mathbb Z}\!\big\{ t\! \geq\! D_i(\beta_{b,l}): \gamma_l(\Delta_b(t), l) \!\geq\! \beta_{b,l} \big\},
\end{align}
and the function $\gamma_l(\cdot)$ is defined in \eqref{gittins}.

\item[(b)] The optimal scheduling time $S^*_{i+1}$ in $\pi^*$ is determined by 
\begin{align}\label{OptimalWaitingTime2}
\!\!\!\!S^*_{i+1}\!\!=\! \min_{t \in \mathbb Z}\big\{ t \!\geq\! S^*_i\!+\!T_i(l^*)\!\!: \!\gamma_{l^*}(\Delta_{b^*}(t), l^*)\!\! \geq\! \bar p_{\mathrm{inv}}\big\},
\end{align}
where $\Delta_{b^*}(t)=t-S^*_i+b^*$ is the AoI at time $t$. The optimal objective value $\bar p_{\mathrm{inv}}$ of \eqref{scheduling_problem1} is 
\begin{align}\label{optimal_objective1}
\bar p_{\mathrm{inv}}=\min_{\substack{l \in \mathbb Z, b \in \mathbb Z\\1 \leq l \leq B, 0 \leq b \leq B-l}} \beta_{b, l}.
\end{align}
\end{itemize}
\end{theorem}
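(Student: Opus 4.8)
The plan is to reduce the average-cost SMDP \eqref{scheduling_problem1} to a family of per-cycle optimal-stopping problems via the renewal structure of the delivery epochs. Since $\Pi_{\mathrm{inv}}=\bigcup_{l=1}^{B}\Pi_l$, we have $\bar p_{\mathrm{inv}}=\min_{1\le l\le B}\inf_{\pi\in\Pi_l}\bar p_\pi$, so it suffices to fix the feature length $l$. For a fixed $l$ and a fixed feature position $b$, the delivery times $D_i$ form renewal epochs, and the AoI at the start of the $i$-th cycle equals $\Delta(D_i)=D_i-S_i+b=T_i(l)+b$, which depends only on the i.i.d.\ transmission time $T_i(l)$ and not on any waiting decision. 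Hence the cycles are i.i.d., and the renewal-reward theorem expresses the time-averaged error as the ratio $\mathbb E[\sum_{t=D_i}^{D_{i+1}-1}\mathrm{err}_{\mathrm{inference}}(\Delta_b(t),l)]\,/\,\mathbb E[D_{i+1}-D_i]$, where $\Delta_b(t)=t-S_i+b$.

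To minimize this ratio I would invoke Dinkelbach's fractional-programming transformation: for a scalar $\beta$, minimize the per-cycle net cost $\mathbb E[\sum_{t=D_i}^{D_{i+1}-1}(\mathrm{err}_{\mathrm{inference}}(\Delta_b(t),l)-\beta)]$ over scheduling times, and let $\beta_{b,l}$ be the value of $\beta$ for which this minimum is zero. Because every cycle length is at least one slot, the map $\beta\mapsto\min_\pi\mathbb E[\text{cycle cost}-\beta\,(\text{cycle length})]$ is continuous and strictly decreasing, so the root $\beta_{b,l}$ exists and is unique; this is precisely equation \eqref{bisection}, and $\beta_{b,l}$ is the optimal time-averaged error for the pair $(b,l)$. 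Minimizing over $(b,l)$ then yields \eqref{optimal_buffer_length_1} and \eqref{optimal_objective1}, provided the inner optimal-stopping problem is solved by the threshold rule and the best position is time-invariant.

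For the threshold, I would write the per-cycle net cost as a function of the AoI $\delta$ at which the next transmission is initiated, namely $\sum_{k=a}^{\delta-1}g(k)+\mathbb E[\sum_{j=0}^{T_1(l)-1}g(\delta+j)]$, with $g(\cdot)=\mathrm{err}_{\mathrm{inference}}(\cdot,l)-\beta$ and $a$ the cycle's starting AoI. The crucial computation is the telescoping identity showing that increasing the stopping AoI from $\delta$ to $\delta+1$ changes this cost by exactly $\mathbb E[\mathrm{err}_{\mathrm{inference}}(\delta+T_1(l),l)]-\beta$; summing, the change from $\delta$ to $\delta+\tau$ equals $\sum_{j=0}^{\tau-1}(\mathbb E[\mathrm{err}_{\mathrm{inference}}(\delta+j+T_1(l),l)]-\beta)$. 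Thus stopping at $\delta$ is optimal if and only if this sum is nonnegative for every $\tau\ge1$, which is exactly $\gamma_l(\delta,l)\ge\beta$ for the index defined in \eqref{gittins}; the $+T_1(l)$ shift in $\gamma_l$ arises naturally from the telescoping of the transmission cost. This delivers the threshold policy \eqref{OptimalPolicy_fixed_length_fixed_buffer}, and setting $\beta=\bar p_{\mathrm{inv}}$ gives \eqref{OptimalWaitingTime2}. For the time-invariance of the feature position, I would write the average-cost Bellman equation for the full class $\Pi_l$ with joint action (stopping AoI $\delta$ of the current cycle, position $b'$ of the next feature) and observe that the two decisions decouple: the $b'$-dependent term reduces to $\min_{b'}\mathbb E[V(T_1(l)+b')]$, a constant independent of the current state $a$. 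Hence a single minimizing $b^*$ can be reused in every cycle.

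The step I expect to be the main obstacle is making the optimal-stopping argument rigorous when $\mathrm{err}_{\mathrm{inference}}$ is non-monotonic in AoI: a one-step myopic comparison no longer characterizes the optimum, so I must verify that the infimum over all horizons $\tau$ in $\gamma_l$ genuinely captures the global minimizer of the cycle cost and that the resulting stopping time is a.s.\ finite with finite expectation. I would address this by proving that $\gamma_l(\delta,l)\ge\beta$ is equivalent to ``no finite look-ahead strictly reduces the net cost,'' and by appealing to the existence of stationary optimal policies for average-cost SMDPs (e.g., via the vanishing-discount method), which simultaneously legitimizes the Bellman equation used to decouple $b$. Finiteness of $\mathbb E[T_i(l)]$ and boundedness of $\mathrm{err}_{\mathrm{inference}}$ over the attainable AoI values supply the regularity required for these steps.
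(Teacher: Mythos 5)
Your proposal is correct and follows essentially the same route as the paper: decompose $\Pi_{\mathrm{inv}}=\bigcup_{l}\Pi_l$ so that $\bar p_{\mathrm{inv}}=\min_{1\le l\le B}\bar p_l$, solve the fixed-$l$ subproblem, and then take the minimizing $(l,b)$ pair. The only difference is that the paper disposes of the fixed-$l$ subproblem by citing \cite[Theorem 4.2]{ShisherMobihoc}, whereas you re-derive it from scratch via renewal-reward, the Dinkelbach root $\beta_{b,l}$, and the telescoping identity $C(\delta+1)-C(\delta)=\mathbb E[\mathrm{err}_{\mathrm{inference}}(\delta+T_1(l),l)]-\beta$ that yields the $\gamma_l$-threshold --- which is precisely the machinery the paper itself deploys in the proof of Theorem \ref{theorem2} (cf.\ Lemma \ref{fraction_programming}), so your treatment is a faithful, self-contained version of the intended argument.
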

We prove Theorem \ref{theorem1} in two steps: (i) We find $B$ policies, each of which is optimal among the set of policies $\Pi_l$ where $l=1, 2, \ldots, B$. After that
(ii) we select the policy that results in the minimum average inference error among the $B$ policies. \ifreport See Appendix C of the supplementary material for details. \else Due to space limit, we refer readers to our technical report \cite{technical_report} for detailed proof of Theorem \ref{theorem1}.
\fi

Theorem \ref{theorem1} implies that the optimal scheduling policy has a nice structure. According to Theorem \ref{theorem1}(a), the feature position $b^*_i$ is constant for all $i$-th features, i.e., $b_1^*=b_2^*=\ldots=b^*$. The optimal feature length $l^*$ and the optimal feature position $b^*$ are pre-computed by solving \eqref{optimal_buffer_length_1} and then used in real-time. The parameter $\beta_{b, l}$ in \eqref{optimal_buffer_length_1} is the unique root of \eqref{bisection}, which is solved by using low-complexity algorithms, e.g., bisection search, newtons method, and fixed point iteration \cite{orneeTON2021}. Theorem \ref{theorem1}(b) implies that the optimal scheduling time $S^*_{i+1}$ follows a threshold policy. Specifically, a feature is transmitted in time-slot $t$ if the following two conditions are satisfied: (i) The channel is idle in time-slot $t$ and (ii) the value $\gamma_{l^*}(\Delta(t), l^*)$ exceeds the optimal objective value $\bar p_{\mathrm{inv}}$ of \eqref{scheduling_problem1}. The optimal objective value $\bar p_{\mathrm{inv}}$ is obtained from \eqref{optimal_objective1}. Our threshold-based scheduling policy has a significant distinction from previous threshold-based policies studied in the literature, such as \cite{SunNonlinear2019, SunTIT2020, orneeTON2021, klugel2019aoi}. In these prior works, the threshold function used to determine the scheduling time is based solely on the AoI value and is non-decreasing with respect to AoI. However, in our proposed strategy, (i) the threshold function $\gamma_l(\cdot)$ depends on both the AoI value and the feature length and (ii) the threshold function $\gamma_l(\cdot)$ can be non-monotonic with respect to AoI.

\subsubsection{Monotonic AoI Cost function} 
Consider a special case where the inference error $\mathrm{err}_{\mathrm{inference}}(\delta, l)$ is a non-decreasing function of $\delta$ for every feature length $l$. A simplified solution can be derived for this specific case of \eqref{scheduling_problem1}. In this scenario, the optimal feature position is $b^*=0$, and the threshold function $\gamma_l(\cdot)$ defined in \eqref{gittins} becomes:
\begin{align}
    \gamma_l(\delta, d)=\mathbb E\left[\mathrm{err}_{\mathrm{inference}}\bigg(\delta+T_1(l), d\bigg)\right].
\end{align} 
In this special case of monotonic AoI cost function, \eqref{OptimalWaitingTime2} can be rewritten as a threshold policy of the AoI $\Delta(t)$ in the form of $\Delta(t) \geq w(l^*, \bar p_{\mathrm{inv}})$, where $w(l, \beta)$ is defined as:
\begin{align}
    w(l,\beta)=\inf\bigg\{\delta \geq 0: \mathbb E\left[\mathrm{err}_{\mathrm{inference}}\bigg(\delta+T_1(l), l\bigg)\right] \geq \beta \bigg\}. 
\end{align}
However, when $\mathrm{err}_{\mathrm{inference}}(\delta, l)$ is not monotonic with respect to AoI $\delta$, \eqref{OptimalWaitingTime2} cannot be reformulated as a threshold policy of the AoI $\Delta(t)$. This is a key difference with earlier studies \cite{SunNonlinear2019, klugel2019aoi, Tripathi2019}.

\subsubsection{Connection with Restart-in-state Problem} Consider another special case in which all features take $1$ time-slot for transmission. For this special case, the threshold function $\gamma_l(\cdot)$ defined in \eqref{gittins} becomes  
\begin{align}\label{gittinsSpecial}
    \gamma_l(\delta, d)=\inf_{\tau \in \{1, 2, \ldots\}} \frac{1}{\tau}\sum_{j=0}^{\tau-1}\mathbb E\bigg[\mathrm{err}_{\mathrm{inference}}\bigg(\delta+j+1, d\bigg) \bigg].
\end{align} 
This special case of \eqref{scheduling_problem1} is a restart-in-state problem \cite[Chapter 2.6.4] {gittins2011multi}. This is because whenever a feature with the optimal feature length $l^*$ and from the optimal feature position $b^*$ is transmitted, AoI value restarts from $b^*+1$ in the next time slot. For this restart-in-state problem, the optimal sending time follows a threshold policy \cite[Chapter 2.6.4] {gittins2011multi}. Specifically, a feature is transmitted if 
\begin{align}\label{ineqrestart}
    h(\Delta_{b^*}(t+1), l^*) \geq h(b^*+1, l^*),
\end{align}
where the relative value function $h(\delta, l^*)$ of the restart-in-state problem is given by
\begin{align}\label{Bellmanrestart}
h(\delta, l^*)=&\min_{Z \in \{0, 1, \ldots\}}\mathbb E\left[\sum_{k=0}^{Z}\!\! \bigg(\mathrm{err}_{\mathrm{inference}}(\delta+k, l^*)- \bar p_{\mathrm{inv}}\bigg)\right]\nonumber\\
&\quad\quad\quad\quad+h(b^*+1,l^*).
\end{align}
By using \eqref{Bellmanrestart}, we can show that \eqref{ineqrestart} is equivalent to 
\begin{align}
    \gamma_{l^*}(\Delta_{b^*}(t), l^*) \geq \bar p_{\mathrm{inv}}.
\end{align}
where the function $\gamma_l(\delta, d)$ is defined in \eqref{gittinsSpecial}. This connection between the restart-in-state problem and AoI minimization was unknown before. The original problem considers more general $T_i(l)$, which can be considered as a restart-in-random state problem. This is because whenever $i$-th feature with optimal feature length $l^*$ and from optimal feature position $b^*$ is transmitted, AoI restarts from a random value $b^*+T_i(l^*)$ after $T_i(l^*)$ time slots.

\ignore{Re-arranging \eqref{gittins_subtract} and by definition, $\gamma_l(\delta, d)$ is given by
\begin{align}\label{gittins1}
&\gamma_l(\delta, l_d) \nonumber\\
=&\bigg\{r:\!\!\!\!\sup_{\nu \in \mathfrak M, \nu \neq 0}\!\!\!\! \mathbb E\left[ \sum_{k=0}^{\nu-1} [r-\mathrm{err}_{\mathrm{inference}}(\Delta(t+k), d)]\bigg| \Delta(t)\!=\!\delta \right]\!\!=\!0\bigg\}.
\end{align}}

\subsection{Time-variant Feature Length}\label{vary}

Now, we find an optimal scheduling policy that minimizes time-averaged inference error among all causal policies in $\Pi$: 
\begin{align}\label{scheduling_problem}
\bar p_{opt}\!\!=\!\!\inf_{\pi \in \Pi}  \limsup_{T\rightarrow \infty}\frac{1}{T} \mathbb{E}_{\pi} \left[ \sum_{t=0}^{T-1} \mathrm{err}_{\mathrm{inference}}(\Delta(t), d(t))\right], 
\end{align}
where $\mathrm{err}_{\mathrm{inference}}(\Delta(t), d(t))$ is the inference error at time slot $t$ and $\bar p_{opt}$ is the optimum value of \eqref{scheduling_problem}. Because $\Pi_{\mathrm{inv}}\subset \Pi$, 
\begin{align}
\bar p_{opt} \leq \bar p_{\mathrm{inv}},
\end{align}
where $\bar p_{\mathrm{inv}}$ is the optimum value of \eqref{scheduling_problem1}. Like \eqref{scheduling_problem1}, problem \eqref{scheduling_problem} can also be expressed as an infinite time-horizon average-cost SMDP. \ignore{The basic components (decision times, states, actions, and associated transition probabilities) of the SMDP are described in Appendix \ref{ptheorem2}.} Note that \eqref{scheduling_problem} is more complex SMDP than \eqref{scheduling_problem1} because the feature length in \eqref{scheduling_problem} is allowed to vary over time.  

The optimal policy can be determined by using a dynamic programming method associated with the average cost SMDP \cite{puterman2014markov, bertsekasdynamic}. There exists a function $h(\cdot)$ such that for all $\delta \in \mathbb Z^{+}$ and $0 \leq d \leq B$, the optimal objective value $\bar p_{opt}$ of \eqref{scheduling_problem} satisfies the following Bellman equation:
\begin{align}\label{optimal_objective2notsimple}
&h(\delta, d)\nonumber\\
=&\min_{\substack{Z \in \{0, 1, \ldots\} \\l \in \mathbb Z:1\leq l \leq B \\ b \in \mathbb Z:0 \leq b \leq B-l}}\mathbb E\left[\sum_{k=0}^{Z+T_{1}(l)-1}\!\! \bigg(\mathrm{err}_{\mathrm{inference}}(\delta+k, d)- \bar p_{opt}\bigg)\right]\nonumber\\
&\quad\quad\quad\quad\quad\quad+\mathbb E[h(T_{1}(l)+b, l)].
\end{align}
Let $(Z^*(\delta, d), l^*(\delta, d), b^*(\delta, d))$ be the optimal solution to the Bellman equation \eqref{optimal_objective2notsimple}. There exists an optimal solution $\pi^*=((S^*_{1}, b^*_{1}, l^*_1), (S^*_{2}, b^*_{2}, l^*_2), \ldots) \in \Pi$ to \eqref{scheduling_problem}, determined by 
\begin{align}
   l^*_{i+1}&=l^*(T_i(l^*_i)+b^*_i, l^*_i), \\
   b^*_{i+1}&=b^*(T_i(l^*_i)+b^*_i, l^*_i), \\
   S^*_{i+1}&=S^*_i+T_i(l^*_i)+Z^*(T_i(l^*_i)+b^*_i, l^*_i),
\end{align}
where $Z^*(T_i(l^*_i)+b^*_i, l^*_i)$ is the optimal waiting time for sending the $(i+1)$-th feature after the $i$-th feature is delivered. 

To get the optimal policy $\pi^*$, we need to solve \eqref{optimal_objective2notsimple}. Solving \eqref{optimal_objective2notsimple} is complex as it requires joint optimization of three variables. Moreover, an optimal solution obtained by the dynamic programming method provides no insight. We are able to simplify \eqref{optimal_objective2notsimple} in Theorem \ref{theorem2} by analyzing the structure of the optimal solution. 

\begin{theorem}\label{theorem2}
The following assertions are true:
\begin{itemize}
\item[(a)] If $T_i(l)$'s are i.i.d. with a finite mean $\mathbb E[T_i(l)]$ for each $l=1, 2, \ldots, B$, then there exists a function $h(\cdot)$ such that for all $\delta \in \mathbb Z^{+}$ and $0 \leq d \leq B$, the optimal objective value $\bar p_{opt}$ of \eqref{scheduling_problem} satisfies the following Bellman equation: 
\begin{align}\label{Bellman2}
    &h(\delta, d)=\nonumber\\
    &\min_{\substack{l \in \mathbb Z\\1 \leq l \leq B}}\!\!\bigg\{\mathbb E\left[\sum_{k=0}^{Z_l(\delta, d)+T_{1}(l)-1}\!\!\!\! \bigg(\mathrm{err}_{\mathrm{inference}}(\delta+k, d)\!-\! \bar p_{opt}\bigg)\right]\!\!\nonumber\\
&\quad\quad\quad\quad+\!\!\min_{\substack{b \in \mathbb Z\\0 \leq b \leq B-l}}\!\!\mathbb E[h(T_{1}(l)+b, l)]\bigg\}, 
\end{align}
where $h(\cdot)$ is called the relative value function and the function $Z_l(\delta, d)$ is given by 
\begin{align}\label{stoppingtimesoln}
    Z_l(\delta, d)=\min_{\tau \in \mathbb Z}\{\tau \geq 0: \gamma_l(\delta+\tau, d) \geq \bar p_{opt}\},
\end{align}
and the function $\gamma_l(\delta, d)$ is defined in \eqref{gittins}.

\item[(b)] In addition, there exists an optimal solution $\pi^*=((S^*_1, b^*_1, l^*_1), (S^*_2, b^*_2, l^*_2), \ldots) \in \Pi$ to \eqref{scheduling_problem} that is determined by
\begin{align}\label{optimal_featurelength}
&l^*_{i+1}=\nonumber\\
&\argmin_{\substack{l \in \mathbb Z\\1 \leq l \leq B}}\!\!\bigg\{\mathbb E\bigg[\!\!\sum_{k=0}^{\substack{Z_l(T_1(l^*_i)+b^*_i, l^*_i)\\+T_{1}(l)-1}}\!\! \!\!\!\!\!\bigg(\mathrm{err}_{\mathrm{inference}}(\Delta(D_i)+k, l^*_i)\nonumber\\
&\quad\quad\quad- \bar p_{opt}\bigg)\bigg] +\min_{\substack{b \in \mathbb Z\\0 \leq b \leq B-l}} \mathbb E[h(T_{1}(l)+b, l)]\bigg\}, \\ \label{optimal_buffer2}
&b^*_{i+1}=\argmin_{b\in \mathbb Z: 0 \leq b \leq B-l^*_{i+1}} \mathbb E[h(T_{1}(l^*_{i+1})+b, l^*_{i+1})],
\\ \label{OptimalWaitingTime1}
&S^*_{i+1} = \min_{t\in \mathbb Z}\{t \geq D_i: \gamma_{l^*_{i+1}}(\Delta(t), l^*_i) \geq \bar p_{opt}\},
\end{align}
where $\Delta(t)=t-S^*_i+b^*_i$ is the AoI at time $t$ and $D_i=S_i^*+T_i(l_i^*)$ is the $i$-th feature delivery time.
\end{itemize}
\end{theorem}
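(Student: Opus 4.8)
The plan is to start from the average-cost SMDP Bellman equation \eqref{optimal_objective2notsimple}, whose validity—existence of a bounded relative value function $h(\cdot)$ and of a stationary policy attaining $\bar p_{opt}$ independently of the initial state—I would first justify from standard SMDP theory \cite{puterman2014markov}, after checking that the per-slot costs $\mathrm{err}_{\mathrm{inference}}(\cdot,\cdot)$ are bounded and that the chain induced on the post-delivery states $(\Delta(D_i),l_i)$ is communicating. Granting \eqref{optimal_objective2notsimple}, the core of part (a) is to simplify the joint minimization over $(Z,l,b)$ into the nested form \eqref{Bellman2}, which rests on two decoupling observations. First, at the next delivery the AoI equals $T_1(l)+b$ and the feature length equals $l$, \emph{independently of the waiting time $Z$}; hence the continuation term $\mathbb E[h(T_1(l)+b,l)]$ does not involve $Z$, and the feature position $b$ appears only there, so $\min_b$ can be pulled inside to give $\min_{0\le b\le B-l}\mathbb E[h(T_1(l)+b,l)]$. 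Second, the waiting time $Z$ enters only the pre-delivery cost, which I would write as $J_l(Z):=\mathbb E[\sum_{k=0}^{Z+T_1(l)-1}(\mathrm{err}_{\mathrm{inference}}(\delta+k,d)-\bar p_{opt})]$, reducing the inner problem to the \emph{deterministic} minimization $\min_{Z\ge 0}J_l(Z)$.

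The crux is solving this stopping problem and identifying the minimizer with $Z_l(\delta,d)$ from \eqref{stoppingtimesoln}. I would compute the forward difference $J_l(Z+1)-J_l(Z)=c_Z$, where $c_\tau:=\mathbb E[\mathrm{err}_{\mathrm{inference}}(\delta+\tau+T_1(l),d)]-\bar p_{opt}$, so that $J_l(Z)=J_l(0)+\sum_{\tau=0}^{Z-1}c_\tau$. The key identity is that, by the definition \eqref{gittins} of $\gamma_l$, the condition $\gamma_l(\delta+\tau,d)\ge\bar p_{opt}$ holds \emph{iff} every forward partial sum $\sum_{j=0}^{n-1}c_{\tau+j}$ with $n\ge1$ is nonnegative (dividing by the positive normalizer $n$ does not change the sign of the infimum). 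Using this equivalence, I would show that $Z_l(\delta,d)$—the first index at which all forward partial sums are nonnegative—is exactly the smallest global minimizer of $Z\mapsto\sum_{\tau=0}^{Z-1}c_\tau$: for $Z\ge Z_l$ the increment $\sum_{\tau=Z_l}^{Z-1}c_\tau\ge0$, while any global minimizer $m$ must satisfy $\sum_{\tau=m}^{Z-1}c_\tau\ge0$ for all $Z\ge m$ and hence $m\ge Z_l$, which forces $Z_l$ itself to be optimal. Finiteness of $Z_l(\delta,d)$ and attainment of the infimum follow from the boundedness of $\mathrm{err}_{\mathrm{inference}}$ together with $\bar p_{opt}$ being the optimal rate. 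Substituting $Z=Z_l(\delta,d)$ back and combining with the $\min_b$ and $\min_l$ steps yields \eqref{Bellman2}.

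For part (b), I would convert the stationary Bellman-optimal actions into the online policy by evaluating them along the realized sample path. At the $i$-th delivery the state is $(\Delta(D_i),l^*_i)=(T_i(l^*_i)+b^*_i,\,l^*_i)$; plugging this into the outer $\min_l$ of \eqref{Bellman2} gives the feature-length rule \eqref{optimal_featurelength}, and the decoupled $\min_b$ gives \eqref{optimal_buffer2}. Finally, because during the waiting phase $\Delta(t)=\Delta(D_i)+(t-D_i)$ with displayed length $l^*_i$, the optimal waiting count $Z_{l^*_{i+1}}(\Delta(D_i),l^*_i)$ translates into the threshold scheduling time \eqref{OptimalWaitingTime1}, namely $S^*_{i+1}=\min\{t\ge D_i:\gamma_{l^*_{i+1}}(\Delta(t),l^*_i)\ge\bar p_{opt}\}$, upon substituting $t=D_i+\tau$ into the condition $\gamma_{l^*_{i+1}}(\Delta(D_i)+\tau,l^*_i)\ge\bar p_{opt}$.

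I expect the main obstacle to be the stopping argument of the second paragraph: because $\mathrm{err}_{\mathrm{inference}}(\cdot,d)$ need not be monotone in AoI, the sequence $(c_\tau)$ can change sign repeatedly, so a one-step-lookahead (myopic) stopping rule is \emph{not} optimal, and the global minimizer must instead be characterized through the infimum-over-horizons structure baked into $\gamma_l$. Establishing the ``all forward partial sums nonnegative'' equivalence and using it to pin down the smallest global minimizer is the step that genuinely exploits the restart/Gittins-index form of $\gamma_l$, mirroring—but generalizing to time-varying $l$—the fixed-length analysis behind Theorem~\ref{theorem1}.
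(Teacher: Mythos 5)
Your proposal is correct and follows essentially the same route as the paper's proof in Appendix D: decouple $b$ into the continuation term, reduce the waiting-time choice to a deterministic integer minimization, and characterize the optimal $Z$ as the least $\tau$ for which all forward partial sums of $\mathbb E[\mathrm{err}_{\mathrm{inference}}(\delta+\tau+k+T_1(l),d)]-\bar p_{opt}$ are nonnegative, which is exactly the equivalence with $\gamma_l(\delta+\tau,d)\ge\bar p_{opt}$ that the paper isolates in its Lemma~\ref{fraction_programming}. Your ``smallest global minimizer of the partial-sum function'' phrasing is just a repackaging of the paper's sequential ``$Z=\tau$ is optimal once no later stopping point improves on it'' argument, and both share the same (minor, implicit) reliance on attainment of the infimum over $Z$.
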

Theorem \ref{theorem2}(a) simplifies the Bellman equation \eqref{optimal_objective2notsimple} to \eqref{Bellman2}. Unlike \eqref{optimal_objective2notsimple}, which involves joint optimization of three variables, \eqref{Bellman2} is an integer optimization problem. This simplification is possible because, for a given feature length $l$, the original equation \eqref{optimal_objective2notsimple} can be separated into two separated optimization problems. The first problem involves finding the optimal stopping time, denoted by $Z_l(\delta, d)$ defined in \eqref{stoppingtimesoln}, and the second problem is to determine the feature position $b$ that minimizes $\mathbb E[h(T_{1}(l)+b, l)]$. By breaking down the original equation in this way, we can solve the problem more efficiently. Detailed proof of Theorem \ref{theorem2} can be found \ifreport in Appendix D of the supplementary material. \else in our technical report \cite{technical_report}. \fi

Furthermore, Theorem \ref{theorem2}(a) provides additional insights into the solution of \eqref{optimal_objective2notsimple}. Theorem \ref{theorem2}(a) implies that the optimal stopping time $Z^*(\delta, d)$ in \eqref{optimal_objective2notsimple} follows a threshold policy. Specifically, if $l^*(\delta, d)=l$, then $Z^*(\delta, d)$ equals $Z_l(\delta, d)$, which is defined in \eqref{stoppingtimesoln}. Here, $Z_l(\delta, d)$ is the minimum positive integer value $\tau$ for which $\gamma_l(\delta+\tau, d)$ defined in \eqref{gittins} exceeds the optimal objective value $\bar p_{opt}$.

Theorem \ref{theorem2}(b) provides an optimal solution $\pi^* \in \Pi$ to \eqref{scheduling_problem}. According to Theorem \ref{theorem2}(b), by using precomputed $\bar p_{opt}$ and the relative value function $h(\cdot)$, we can obtain the optimal feature length $l^*_{i+1}$ from \eqref{optimal_featurelength} using an exhaustive search algorithm.  After obtaining $l^*_{i+1}$, the optimal feature position $b^*_{i+1}$ can be determined from \eqref{optimal_buffer2}. The optimal scheduling time $S^*_{i+1}$ provided in \eqref{OptimalWaitingTime1} follows a threshold policy. Specifically, the $(i+1)$-th feature is transmitted in time-slot $t$ if two conditions are satisfied: (i) the previous feature is delivered by time $t$, and (ii) the function $\gamma_{l^*_{i+1}}(\Delta(t), l^*_i)$ exceeds the optimal objective value $\bar p_{opt}$ of \eqref{scheduling_problem}.

\begin{algorithm}[t]
\caption{Policy Evaluation Algorithm}\label{alg:PolicyEvaluation}
\begin{algorithmic}[1]
\State Input: $Z_{\pi}(\delta, d)$, $l_{\pi}(\delta, d)$, and $b_{\pi}(\delta, d)$ for all $(\delta, d)$.

\State Initialize $h_{\pi}(\delta, d)$ arbitrarily for all $(\delta, d)$, except for one  fixed state $(\delta', d')$ with $h_{\pi}(\delta', d')=0$.
\State Initialize a small positive number $\alpha_1$ as a threshold. 
\Repeat
   \State $\theta_1 \leftarrow 0$.
  \State Determine $\bar p_{\pi}$ using \eqref{barppi}.
    \For{\text{each state $(\delta, d)$}} 
       \State $\tau' \leftarrow {\red Z_{\pi}(\delta, d)+T_{1}(l_{\pi}(\delta, d))}$.
       \State 
       $h'_{\pi}(\delta, d) \leftarrow \mathbb E[\sum_{k=0}^{\tau'-1} (\mathrm{err}_{\mathrm{inference}}(\delta+k, d)\!-\! \bar p_{\pi})]$
       $~~~~~~~~~+ \mathbb E[h_{\pi}(T_{1}(l_{\pi}(\delta, d))+b_{\pi}(\delta, d),l_{\pi}(\delta, d))].$
        \State $\theta_1 \gets \max\{\theta_1, \big|h'_{\pi}(\delta, d)-h_{\pi}(\delta, d)\big|\}$.
      \EndFor
 \State $h_{\pi} \leftarrow h'_{\pi}$.
 \Until{$\theta_1 \leq \alpha_1$.}
  \State \textbf{return} $\bar p_{\pi}$ and $h_{\pi}(\cdot)$.
\end{algorithmic}
\end{algorithm}

\subsubsection{Policy Iteration Algorithm for Computing $\bar p_{opt}$ and $h(\cdot)$}
To effectively implement the optimal solution $\pi^*\in \Pi$ for \eqref{scheduling_problem}, as outlined in Theorem \ref{theorem2}, it is necessary to precompute the optimal objective value $\bar p_{opt}$ and the relative value function $h(\cdot)$ that satisfies the Bellman equation \eqref{Bellman2}. The computation of $\bar p_{opt}$ and $h(\cdot)$ can be achieved by employing policy iteration algorithm or relative value iteration algorithm for SMDPs, as detailed in \cite[Section 11.4.4]{puterman2014markov}. To apply the relative value iteration algorithm, we need to transform the SMDP into an equivalent MDP. However, this transformation process can be challenging to execute. Therefore, in this paper, we opt to utilize the policy iteration algorithm specifically tailored for SMDPs \cite[Section 11.4.4]{puterman2014markov}. Algorithm \ref{alg:PolicyIteration} provides a policy iteration algorithm for obtaining $\bar p_{opt}$ and $h(\cdot)$, which is composed of two steps: (i) \emph{policy evaluation} and (ii) \emph{policy improvement}. 
 
\emph{Policy Evaluation}: Let $h_{\pi}$ and $\bar p_{\pi}$ be the relative value function and the average inference error under policy $\pi$. Let $l_{\pi}(\delta, d)$, $b_{\pi}(\delta, d)$, and $Z_{\pi}(\delta, d)$ represent feature length, feature position, and waiting time for sending the $(i+1)$-th feature under policy $\pi$ when $\Delta(D_i)=\delta$ and $d(D_i)=d$. Given $l_{\pi}(\delta, d)$, $b_{\pi}(\delta, d)$, and $Z_{\pi}(\delta, d)$ for all $(\delta, d)$, we can evaluate the relative value function $h_{\pi}(\cdot)$ and the average inference error $\bar p_{\pi}$ using Algorithm \ref{alg:PolicyEvaluation}. The relative value function $h_{\pi}(\delta, d)$ represents relative value associated with a reference state. We can set $(\delta', d')$ as a reference state with $h_{\pi}(\delta', d')=0$.
 By using $h_{\pi}(\delta', d')=0$, the average inference error $\bar p_{\pi}$ is determined by
\begin{align}\label{barppi}
  \bar p_{\pi} =& \frac{1}{{\red \mathbb E[\tau]}}\bigg(\mathbb E\left[\sum_{k=0}^{\tau-1} \mathrm{err}_{\mathrm{inference}}(\delta'+k, d') \right]\nonumber\\
  &+ \mathbb E[h_{\pi}(T_{1}(l_{\pi}(\delta', d'))+b_{\pi}(\delta', d'),l_{\pi}(\delta', d'))]\bigg),
\end{align}
where $\tau= {\red Z_{\pi}(\delta', d')+T_{1}(l_{\pi}(\delta', d'))}$. We then use an iterative procedure within Algorithm \ref{alg:PolicyEvaluation} to determine the relative value function $h_{\pi}(\cdot)$. 

 \emph{Policy Improvement}: After obtaining $h_{\pi}$ and $\bar p_{\pi}$ from Algorithm \ref{alg:PolicyEvaluation}, we apply Theorem \ref{theorem2} to derive an improved policy $\pi'$ in Algorithm \ref{alg:PolicyIteration}. Feature length $l_{\pi'}(\delta, d)$, feature position $b_{\pi'}(\delta, d)$, and waiting time $Z_{\pi'}(\delta, d)$ under policy $\pi'$ is determined by
\begin{align}\label{algeq1}
l_{\pi'}(\delta, d)=\argmin\limits_{1 \leq l \leq B}\bigg\{&\mathbb E\left[\sum_{k=0}^{Z_l(\delta, d)+T_{1}(l)-1}\!\! \mathrm{err}_{\mathrm{inference}}(\delta+k, d)\right]\nonumber\\
&-\mathbb E[Z_l(\delta, d)+T_{1}(l)]\bar p_{\pi}\nonumber\\
&+\min\limits_{0 \leq b \leq B-l} \mathbb E[h_{\pi}(T_{1}(l)+b, l)]\bigg\},
\end{align}
\begin{align} \label{algeq2}
\!\!\!\!b_{\pi'}(\delta, d)\!=\!\!\argmin\limits_{0 \leq b \leq B-l_{\pi'}(\delta, d)}\!\!\!\! \mathbb E[h_{\pi}( T_{1}(l_{\pi^{'}}(\delta, d))\!+\!b, l_{\pi'}(\delta, d))],
\end{align}
\begin{align}\label{algeq3}
\!Z_{\pi'}(\delta, d)\!=\!\! \min\limits_{\tau \in \{0, 1, \ldots\}}\big\{ \tau \geq 0: \gamma_{l_{\pi'}(\delta, d)}(\delta+\tau, d) \geq \bar p_{\pi} \big\}.
\end{align}
Instead of a joint optimization problem \eqref{optimal_objective2notsimple}, Algorithm \ref{alg:PolicyIteration} utilizes separated optimization problems \eqref{algeq1}-\eqref{algeq3} based on Theorem \ref{theorem2}. If the improved policy $\pi'$ is equal to the old policy $\pi$, then the policy iteration algorithm converges.  Theorem 11.4.6 in \cite{puterman2014markov} establishes the finite convergence of the policy iteration algorithm of an average cost SMDP. 

\begin{algorithm}[t]
\caption{Policy Iteration Algorithm}\label{alg:PolicyIteration}
\begin{algorithmic}[1]
\State Initialize $Z_{\pi}(\delta, d)$, $l_{\pi}(\delta, d)$, and $b_{\pi}(\delta, d)$ for all $(\delta, d)$.
\State Initialize a small positive number $\alpha_2$ as threshold.
\Repeat
\State $\theta_2 \gets 0$.
\State Obtain $h_{\pi}(\cdot)$ and $\bar p_{\pi}$ from Algorithm \ref{alg:PolicyEvaluation}.
\For {all $(\delta, d)$} 
\State Get $l_{\pi'}(\delta, d)$, $b_{\pi'}(\delta, d)$, $Z_{\pi'}(\delta, d)$ using \eqref{algeq1}-\eqref{algeq3}.
\State $\theta_2 \gets \max\bigg\{\theta_2, |l_{\pi'}(\delta, d)-l_{\pi}(\delta, d)|$\\
$~~~~~~~~~~~~~+|b_{\pi'}(\delta, d)-b_{\pi}(\delta, d)|+|Z_{\pi'}(\delta, d)-Z_{\pi}(\delta, d)|\bigg\}.$
\State $l_{\pi}(\delta, d) \gets l_{\pi'}(\delta, d).$
\State $b_{\pi}(\delta, d) \gets b_{\pi'}(\delta, d).$
\State $Z_{\pi}(\delta, d) \gets Z_{\pi'}(\delta, d).$
\EndFor
\Until{$\theta_2 \leq \alpha_2$.}
\State \textbf{return} $\bar p_{opt}\gets \bar p_{\pi}$ and $h\gets h_{\pi}$.
\end{algorithmic}
\end{algorithm}

{Now, we discuss the time-complexity of Algorithms \ref{alg:PolicyEvaluation}-\ref{alg:PolicyIteration}. 
To manage the infinite set of AoI values in practice, we introduce an upper bound denoted as $\delta_{\mathrm{bound}}$. Whenever $\delta$ exceeds $\delta_{\mathrm{bound}}$, we set $h_{\pi}(\delta, d) = h_{\pi}(\delta_{\mathrm{bound}}, d)$ for all $d$. Hence, each iteration of our policy evaluation step requires one pass through the approximated state space $\{1, 2, \ldots, \delta_{\mathrm{bound}}\} \times \{1, 2, \ldots, B\}$. Therefore, the time complexity of each iteration is $O(\delta_{\mathrm{bound}}B)$, assuming that the required expected values are precomputed. Considering the bounded set $\{0, 1, \ldots, \delta_{\mathrm{bound}}\}$ instead of $\mathbb{Z}^{+}$, the time complexities of \eqref{algeq1}, \eqref{algeq2}, and \eqref{algeq3} are $O(B^2)$, $O(B)$, and $O(\delta_{\mathrm{bound}})$, respectively, provided that the expected values in \eqref{algeq1}-\eqref{algeq3} are precomputed. The overall complexity of \eqref{algeq1}-\eqref{algeq3} is $O(\max\{B^2, B, \delta_{\mathrm{bound}}\})$, which is more efficient than the joint optimization problem \eqref{optimal_objective2notsimple}. The latter has a time complexity of $O(\delta_{\mathrm{bound}}B^2)$.
In each iteration of the policy improvement step, the optimization problems \eqref{algeq1}-\eqref{algeq3} are solved for all state $(\delta, d)$ such that $\delta=1,2, \ldots, \delta_{\mathrm{bound}}$ and $d=1, 2, \ldots, B$. Hence, the total complexity of each iteration of the policy improvement step is $O(\max\{B^3\delta_{\mathrm{bound}},B\delta_{\mathrm{bound}}^2\})$.}

\section{Learning and Communications Co-design: Multiple Source Case}

\subsection{System Model}
\begin{figure}[t]
\centering
\includegraphics[width=0.40\textwidth]{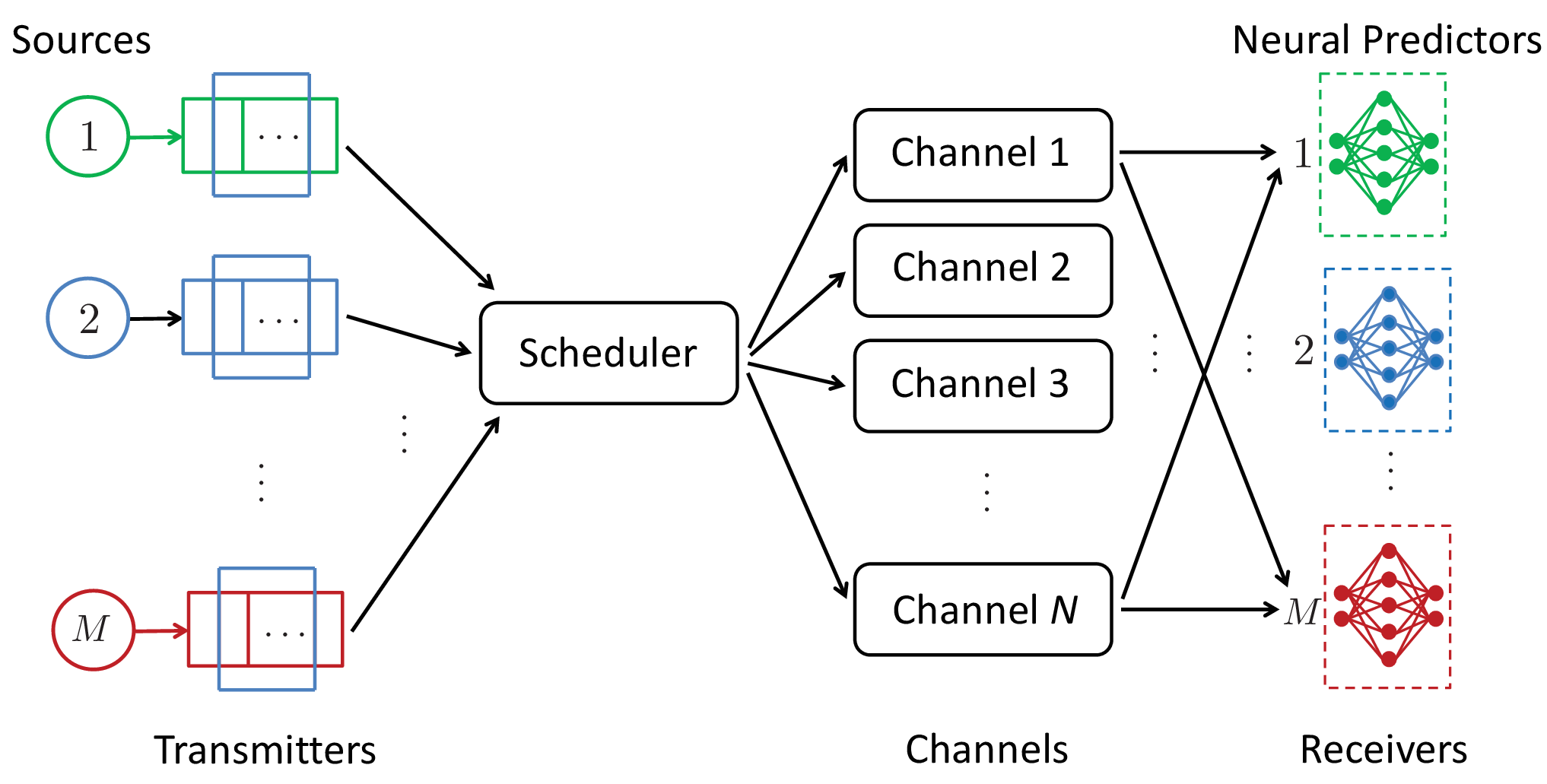}
\caption{\small  A multiple source-predictor pairs and multiple channel remote inference system. \label{fig:mult-scheduling}
}
\vspace{-3mm}
\end{figure}
Consider a remote inference system consisting of $M\geq 1$ source-predictor pairs connected through $N \geq 1$ shared communication channels, as illustrated in Fig. \ref{fig:mult-scheduling}. Each source $j$ has a buffer that stores $B_j$ most recent signal observations at each time slot $t$. At time slot $t$, a centralized scheduler determines whether to send a feature from source $j$ with feature length $l_j(t)$ and feature position $b_j(t)$. We denote $l_j(t)=0$ if the scheduler decides not to send a feature from source $j$ at time $t$. If a feature from source $j$ is sent, we assume it will be delivered to the $j$-th neural predictor in the next time slot using $l_j(t)$ channel resources. The transmission model of the multiple source system is significantly different from that of the single source model discussed in Section \ref{TransmissionPolicy}. In the latter case, only one channel was considered, while $N$ communication channels are available in the former. The channels could be from multiple frequencies and/or time resources. For example, if the clock rate in the multiple access control (MAC) layer is faster than that of the application layer, then one application-layer time-slot could comprise multiple MAC-layer time-slots. A feature can utilize multiple channels (i.e., frequency or time resources) for transmission during a single time slot. However, the channel resource is limited, so the system must satisfy
\begin{align}
\sum_{j=1}^M l_j(t) \leq N.
\end{align}

The system begins operating at time $t=0$. Let $S_{j, i}$ denote the sending time of the $i$-th feature from the $j$-th source. Since we assume that a feature takes one time-slot to transmit, the corresponding neural predictor receives the $i$-th feature from the $j$-th source at time $S_{j, i}+1$. The AoI of the source $j$ at time slot $t$ is defined as
\begin{align}\label{Def_AoI_multiple}
\Delta_j(t):=t-S_{j, i}+b_j(S_{j, i}),~\text{if}~ S_{j,i} < t \leq S_{j, i+1}.
\end{align}
We denote $d_j(t)$ as the feature length of the most recent received feature from $j$-th source by time $t$, given by
\begin{align}\label{Def_featurelength_multiple}
d_j(t)=l_j(S_{j, i}),~\text{if}~ S_{j,i} < t \leq S_{j, i+1}.
\end{align}

\subsection{Scheduling Policy}
At time slot $t$, a centralized scheduler determines the value of the feature length $l_j(t)$ and the feature position $b_j(t)$ for every $j$-th source. A scheduling policy is denoted by $\pi={(\pi_{j})}_{j=1}^M$, where $\pi_{j}=((l_j(1), b_j(1)), (l_j(2), b_j(2)), \ldots)$. Let $\Pi$ denote the set of all the causal scheduling policies that determine $l_j(t)$ and $b_j(t)$ based on the current and the historical information available at the transmitter such that $0 \leq l_j(t)+b_j(t) \leq B_j$.

\subsection{Problem Formulation}
Our goal is to minimize the time-averaged sum of the inference errors of the $M$ sources, which is formulated as
\begin{align}\label{Multi-scheduling_problem}
&\inf_{ \pi \in \Pi}\sum_{j=1}^M  \limsup_{T\rightarrow \infty}\mathbb{E}_{\pi} \left[\frac{1}{T} \sum_{t=0}^{T-1} p_j(\Delta_j(t), d_j(t))\right], \\\label{Sceduling_constraint}
&\ \mathrm{s.t.}\sum_{j=1}^M l_j(t) \leq N, ~~t=0, 1, 2, \ldots,
\end{align}
where $p_j(\Delta_j(t), d_j(t))$ is the inference error of source $j$ at time slot
$t$.

The problem \eqref{Multi-scheduling_problem}-\eqref{Sceduling_constraint} can be cast into an infinite-horizon average cost restless multi-armed multi-action bandit problem \cite{whittle1988restless, hodge2015asymptotic} by viewing each source $j$ as an arm, where a scheduler needs to decide multiple actions $(l_j(t), b_j(t))_{j=1}^M$ at every time $t$ by observing state $(\Delta_j(t), d_j(t))$.

Finding an optimal solution to the RMAB problem is PSPACE hard \cite{papadimitriou1994complexity}. Whittle, in his seminal work \cite{whittle1988restless}, proposed a heuristic policy for RMAB problem with binary action. In \cite{hodge2015asymptotic}, a modified Whittle index policy is proposed for the multi-action RMAB problems. Whittle index policy is known to be asymptotically optimal \cite{weber1990index}, but the policy needs to satisfy a complicated indexability condition. Proving indexability is challenging for our multi-action RMAB problem because we allow (i) general penalty function $p_j(\delta, l)$ that is not necessarily monotonic with respect to AoI $\delta$ and (ii) time-variant feature length. To this end, we propose a low-complexity algorithm that does not need to satisfy any indexability condition.

\subsection{Lagrangian Optimization of a Relaxed Problem}
Similar to Whittle's approach \cite{whittle1988restless}, we utilize a Lagrange relaxation of the problem \eqref{Multi-scheduling_problem}-\eqref{Sceduling_constraint}. We first relax the per time-slot channel constraint \eqref{Sceduling_constraint} as the following time-average expected channel constraint
\begin{align}\label{Changed_constraint}
\sum_{j=1}^M \limsup_{T \to \infty}  \mathbb E_{\pi}\left[\frac{1}{T}\sum_{t=0}^{T-1}  l_j(t)\right] \leq N. 
\end{align}
The relaxed constraint \eqref{Changed_constraint} only needs to be satisfied on average, whereas \eqref{Sceduling_constraint} is required to hold at every time-slot. By this, the original problem \eqref{Multi-scheduling_problem}-\eqref{Sceduling_constraint} becomes 
\begin{align}\label{Multi-scheduling_problem1}
&\inf_{ \pi \in \Pi} \sum_{j=1}^M  \limsup_{T\rightarrow \infty}\mathbb{E}_{\pi} \left[\frac{1}{T} \sum_{t=0}^{T-1} p_j(\Delta_j(t), d_j(t))\right],  \\ \label{Changed_constraint1}
&~\mathrm{s.t.}\sum_{j=1}^M \limsup_{T \to \infty}  \mathbb E_{\pi}\left[\frac{1}{T}\sum_{t=0}^{T-1}  l_j(t)\right] \leq N.
\end{align}
{The relaxed problem \eqref{Multi-scheduling_problem1}-\eqref{Changed_constraint1} is of interest as the optimal solution of the problem provides a lower bound to the original problem \eqref{Multi-scheduling_problem}-\eqref{Sceduling_constraint}. 

\subsubsection{Lagrangian Dual Decomposition of \eqref{Multi-scheduling_problem1}-\eqref{Changed_constraint1}}
To solve \eqref{Multi-scheduling_problem1}-\eqref{Changed_constraint1}, we utilize a Lagrangian dual decomposition method \cite{whittle1988restless, palomar2006tutorial}. At first, we apply Lagrangian multiplier $\lambda\geq 0$ to the time-average channel constraint \eqref{Changed_constraint1} and get the following Lagrangian dual function
\begin{align}\label{Multi-scheduling_problem_Lagrange}
&q(\lambda)\nonumber\\
=& \inf_{ \pi \in \Pi}\sum_{j=1}^M  \limsup_{T\rightarrow \infty} \mathbb{E}_{\pi}\left [ \frac{1}{T} \sum_{t=0}^{T-1} \bigg(p_j(\Delta_j(t), d_j(t))\!+\! \lambda l_j(t)\bigg)\right]\nonumber\\
&-\lambda N.
\end{align}
The problem \eqref{Multi-scheduling_problem_Lagrange} can be decomposed into $M$ sub-problems. The sub-problem associated with the $j$-th source is defined as:
\begin{align}\label{decoupled_problem}
\bar p_{j}(\lambda)\!\!=\!\!\!\inf_{\pi_{j} \in \Pi_{j}} \!\!\limsup_{T\rightarrow \infty}\frac{1}{T} \mathbb{E}_{\pi_{j}}\!\left [ \sum_{t=0}^{T-1} \bigg(p_j(\Delta_j(t), d_j(t))\!+\! \lambda l_j(t)\bigg)\right],
\end{align}
where $\Pi_{j}$ is the set of all causal scheduling policies $\pi_j$. The sub-problem \eqref{decoupled_problem} is an infinite horizon average cost MDP, where a scheduler decides action $(l_j(t), b_j(t))$ by observing state $(\Delta_j(t), d_j(t))$. The Lagrange multiplier $\lambda$ in \eqref{decoupled_problem} can be interpreted as a transmission cost: whenever $l_j(t)=l$, the source $j$ has to pay cost of $\lambda l$ for using $l$ channel resources. 

The optimal solution to \eqref{decoupled_problem} can be obtained by solving the following Bellman equation:
\begin{align}\label{BellmanMDP}
h_{j, \lambda}(\delta, d) &= \min_{\substack{l \in \mathbb{Z}, b \in \mathbb{Z}\\0 \leq l + b \leq B_j}} Q_{j, \lambda}((\delta, d), (l, b)),
\end{align}
where $h_{j, \lambda}(\cdot)$ represents the relative value function of the MDP \eqref{decoupled_problem}, and the function $Q_{j, \lambda}(\cdot,\cdot)$ is defined as follows
\begin{align}\label{actionvalue3}
&Q_{j, \lambda}((\delta, d), (l, b))\nonumber\\
&:=\begin{cases}
p_j(\delta, d) - \bar p_{j}(\lambda) + h_{j, \lambda}(\delta + 1, d), & \text{if}~l = 0,\\
p_j(\delta, d) - \bar p_{j}(\lambda) + h_{j, \lambda}(b + 1, l) + \lambda l, & \text{otherwise}.
\end{cases}
\end{align}
The relative value function $h_{j, \lambda}(\cdot)$ can be computed using the relative value iteration algorithm \cite{puterman2014markov, bertsekasdynamic}.

Let $\pi_{j, \lambda}^*=((l^*_{j, \lambda}(1), b^*_{j, \lambda}(1)), (l^*_{j, \lambda}(2), b^*_{j, \lambda}(2)), \ldots)$ be an optimal solution to \eqref{decoupled_problem}, which is derived by using \eqref{BellmanMDP} and \eqref{actionvalue3}. The optimal feature length $l^*_{j, \lambda}(t)$ is determined by
\begin{align}\label{optimallengtht}
&\!\!\!\!\!l^*_{j, \lambda}(t)\nonumber\\
&\!\!\!\!\!= \argmax\limits_{l \in \mathbb Z:0\leq l \leq B_j} 
h_{j, \lambda}(\Delta(t)+1, d(t))\!\!-\!h_{j, \lambda}(\hat b_{j, \lambda}(l)+1, l)\!\!-\!\lambda l,
\end{align}
where the function $\hat b_{j, \lambda}(l)$ is given by
\begin{align}\label{featurefunction}
\hat b_{j, \lambda}(l)=\argmin_{b \in \mathbb Z: 0 \leq b \leq B_j-l}h_{j, \lambda}(b+1, l),
\end{align}
The optimal feature position in $\pi_{j, \lambda}^*$ is 
\begin{align}\label{OptimalPositiont}
b^*_{j, \lambda}(t)=\hat b_{j, \lambda}(l_{j, \lambda}(t)).
\end{align}

\subsubsection{Lagrange Dual Problem} 
Next, we determine the optimal dual cost $\lambda^*$ that solves the following Lagrange dual problem:
\begin{align}\label{DualProblem}
\max_{\lambda \geq 0} q(\lambda),
\end{align}
where $q(\lambda)$ is the Lagrangian dual function defined in \eqref{Multi-scheduling_problem_Lagrange}.
To get $\lambda^*$, we apply the stochastic sub-gradient ascent method \cite{palomar2006tutorial}, which iteratively updates $\lambda(k)$ as follows
\begin{align}\label{updatelambda}
\!\!\!\!\!\lambda(k+1)\!=\! \max\bigg\{\lambda(k)+\frac{\beta}{k} \left(\sum_{j=1}^M l_{j, \lambda(k)}(k)-N\right), 0\bigg\},
\end{align}
where $k$ is the iteration index, $\beta>0$ determines the step size $\frac{\beta}{k}$, and $l_{j, \lambda(k)}(k)$ is the feature length of source $j$ at the $k$-th iteration. Detailed optimization technique is provided in Algorithm \ref{alg:lagrange}. 
\begin{algorithm}[t]
\caption{Dual Algorithm to Solve \eqref{DualProblem}}\label{alg:lagrange}
{\begin{algorithmic}[1]
\State Input: Step size $\beta>0$ and dual cost $\lambda(1)=0$.
 \State Initialize $\Delta_j(0)$, $d_j(0)$, $l_j(0)$, and $b_j(0)$ for all $j$.
\State Initialize a small positive number $\theta$ as threshold. 
\Repeat
\For{\text{each source $j$}}
\If{$l_j(k-1)>0$}
    \State $\Delta_j(k) \leftarrow 1+b_j(k-1)$, $d_j(k) \gets l_j(k-1)$.
  \Else{}
    \State $\Delta_j(k) \leftarrow \Delta_j(k-1)+1$, $d_j(k) \leftarrow d_j(k-1)$.
    \EndIf
\State Compute $l_{j, \lambda(k)}(k)$ using \eqref{optimallengtht}.

\State Compute $b_{j, \lambda(k)}(k)$ using \eqref{OptimalPositiont}.
    \EndFor
\State Update $\lambda(k+1)$ using \eqref{updatelambda}.
 \Until{$|\lambda(k+1)-\lambda(k)|\leq \theta$.}
 \State \textbf{return} $\lambda^* \leftarrow \lambda(k+1)$
\end{algorithmic}}
\end{algorithm}

\subsection{Net Gain Maximization Policy} 
After getting optimal dual cost $\lambda^*$, we can use policy $(\pi_{j, \lambda^*})_{j=1}^M$ for the relaxed problem \eqref{Multi-scheduling_problem1}-\eqref{Changed_constraint1}. But it is infeasible to implement the policy for the original problem \eqref{Multi-scheduling_problem}-\eqref{Sceduling_constraint} because it may violate the scheduling constraint \eqref{Sceduling_constraint}. Motivated by Whittle’s approach \cite{whittle1988restless}, we aim to select actions with higher priority, while satisfying the scheduling constraint \eqref{Sceduling_constraint} at every time slot.
Towards this end, we introduce ``Net Gain", denoted as $\alpha_{j, \lambda}(\delta, d, l)$, to measure the advantage of selecting feature length $l$, which is given by
\begin{align}\label{netgain1}
&\alpha_{j, \lambda}(\delta, d, l)\nonumber\\
&:=Q_{j, \lambda}((\delta, d), (0, \hat b_{j, \lambda}(l)))-Q_{j, \lambda}((\delta, d), (l, \hat b_{j, \lambda}(l))),
\end{align}
where the function $Q_{j, \lambda}$ is defined in \eqref{actionvalue3} and the function $\hat b_{j, \lambda}$ is defined in \eqref{featurefunction}.  Substituting \eqref{actionvalue3} into \eqref{netgain1}, we get  
\begin{align}\label{netgain3}
\!\!\! \alpha_{j, \lambda}(\delta, d, l)=h_{j, \lambda}(\delta+1, d)-h_{j, \lambda}(\hat b_{j, \lambda}(l)+1, l)- \lambda l.
\end{align} 

For a given $\lambda$, the net gain $\alpha_{j, \lambda}(\delta, d, l)$ has an economic interpretation. Given the state $(\delta, d)$ of source $j$, the net gain $\alpha_{j, \lambda}(\delta, d, l)$ measures the maximum reduction in the loss by selecting source $j$ with feature length $l$, as opposed to not selecting source $j$ at all. If $\alpha_{j, \lambda}(\delta, d, l)$ is negative for all $l=1, 2, \ldots, B_j$, then it better not to select source $j$.
If $\alpha_{j, \lambda}(\Delta_j(t), d_j(t), l_j) >\alpha_{k, \lambda}(\Delta_k(t), d_k(t), l_k)$, then the feature length $l_j$ for source $j$ is prioritized over the feature length $l_k$ for source $k$. Under the constraint \eqref{Sceduling_constraint}, we select feature lengths that maximize ``Net Gain":
\begin{align}\label{problemalgorithm}
&\max_{\substack{0 \leq l_j(t) \leq B\\l_j(t) \in \mathbb Z, \forall j}} \sum_{j=1}^M \alpha_{j, \lambda^*}(\Delta_j(t), d_j(t), l_j(t)), \\ \label{constraintagain}
&\quad~~\mathrm{s.t.} \quad\sum_{j=1}^M l_j(t) \leq N.
\end{align}
The ``Net Gain Maximization" problem \eqref{problemalgorithm} with constraint \eqref{constraintagain} is a bounded Knapsack problem. By using \eqref{problemalgorithm}-\eqref{constraintagain}, we propose a new algorithm for the problem \eqref{Multi-scheduling_problem}-\eqref{Sceduling_constraint} in Algorithm \ref{alg:multischeduling}. 

\begin{algorithm}[t]
\caption{Net Gain Maximization Policy}\label{alg:multischeduling}
\begin{algorithmic}[1]
\State Input: Optimal dual variable $\lambda^*$ obtained in Algorithm \ref{alg:lagrange}.
\State Compute $\alpha_{j, \lambda^*}(\delta, d, l)$ using \eqref{netgain3} for all $j,\delta, d, l$.
\For{each time $t \geq 0$}
\State Update $\Delta_j(t) $ and $ d_j(t)$ using \eqref{Def_AoI_multiple} and \eqref{Def_featurelength_multiple} for all source $j$.
\State Compute $(l_j(t))_{j=1}^M$ by solving problem \eqref{problemalgorithm}-\eqref{constraintagain}.
\State  $(b_j(t))_{j=1}^M \gets (\hat b_{j, \lambda^*}(l_j(t)))_{j=1}^M$ by 
using \eqref{featurefunction}.
\EndFor
\end{algorithmic}
\end{algorithm}

{Algorithm \ref{alg:multischeduling} starts from $t=0$. At time $t=0$, the algorithm takes the dual variable (transmission cost) $\lambda^*$ from Algorithm \ref{alg:lagrange} which is run offline before $t=0$. The ``Net Gain" $\alpha_{j, \lambda^*}(\delta, d, l)$ is precomputed for every source $j$, every feature length $l$, and every state $(\delta, d)$ such that $\delta \in \mathbb Z^{+}$, $l, d \in \{1, 2, \ldots, B_j\}$, where we approximate infinite set of AoI values $\mathbb Z^{+}$ by using an upper bound $\delta_{\mathrm{bound}}$. We can set $\alpha_{j, \lambda^*}(\delta, d, l)=\alpha_{j, \lambda^*}(\delta_{\mathrm{bound}}, d, l)$ if $\delta>\delta_{\mathrm{bound}}$.
 
From time $t\geq 0$, Algorithm \ref{alg:multischeduling} solves the knapsack problem \eqref{problemalgorithm}-\eqref{constraintagain} at every time slot $t$. The knapsack problem is solved by using a dynamic programming method in $O(MNB)$ time \cite{andonov2000unbounded}, where $M$ is the number of sources, $N$ is the number of channels, and $B$ is the maximum buffer size among all source $j$. The feature position $b_j(t)$ is obtained from a look up table that stores the value of function $\hat b_{j, \lambda^*}(l)$ for all $j$ and $l$.}

Unlike the Whittle index policy \cite{whittle1988restless}, our policy proposed in Algorithm \ref{alg:multischeduling} does not need to satisfy any indexability condition. There exists some other policies that do not need to satisfy indexability condition \cite{xiong2022index, chen2021scheduling}. The policies in \cite{xiong2022index, chen2021scheduling} are developed based on linear programming formulations, our policy does not need to solve any linear programming.

\section{Trace-driven Evaluations}\label{numerical}

In this section, we demonstrate the performance of our scheduling policies. The performance evaluation is conducted using an inference error function obtained from a channel state information (CSI) prediction experiment. In Fig. \ref{fig:Trainingcsi}, one can observe the inference error function of a CSI prediction experiment. The discrete-time autocorrelation function of the generated fading channel coefficient is defined as $r(k) = b J_0(2\pi f_dT_s|k|)$, where $r(k)$ represents the autocorrelation of the CSI signal process with time lag $k$, $b$ signifies the variance of the process, $J_0(\cdot)$ denotes the zeroth-order Bessel function, $T_s$ is the channel sampling duration, $f_d = \frac{vf_c}{c}$ is the maximum Doppler shift, $v$ stands for the velocity of the source, $f_c$ is the carrier frequency, and $c$ represents the speed of light. In this experiment, we employed a quadratic loss function. {Although we utilize the CSI prediction experiment and a quadratic loss function for evaluating the performance of our scheduling policies, we note that our scheduling policies are not limited to any specific experiment, loss function, or predictor.}

 \subsection{Single Source Scheduling Policies}
We evaluate the following four single source scheduling policies. 
\begin{itemize}
\item[1.] Generate-at-Will, Zero Wait with Feature Length $l$: In this policy, $S_{i+1}=S_i+T_i(l_i)$, $b_i=0$, and $l_i=l$ for all $i$-th feature transmissions.  
\item[2.] Optimal Policy with Time-invariant Feature Length (TIFL): The policy that we propose in Theorem \ref{theorem1}.
\item[3.] Optimal Policy with Time-variant Feature Length (TVFL): The policy that we propose in Theorem \ref{theorem2}.
\item[4.] Periodic Updating with Feature Length $l$: After every time slot $T_p$, the policy submits features with feature length $l$ and feature position $0$ to a First-Come, First-Served communication channel. 
\end{itemize}

We evaluate the performance of the above four single source scheduling policies, where the task to infer the current CSI of a source by observing features. For generating the CSI dataset, we set $b_0=1$, $T_s=1\text{ms}$, $v=15~\text{m/s}$, and $f_c=2\text{GHz}$. Additionally, we add white noise to the feature variable with a variance of $10^{-6}$. 

In the single source case, we consider that the $i$-th feature requires $T_i(l)=\lceil \alpha l \rceil$ time-slots for transmission, where $\alpha$ represents the communication capacity of the channel. For example, if the number of bits used for representing a CSI symbol is $n$ and the bit rate of the channel is $\rho$, then $\alpha=\frac{\rho}{n}$.

\begin{figure}
\centering
\includegraphics[width=0.30\textwidth]{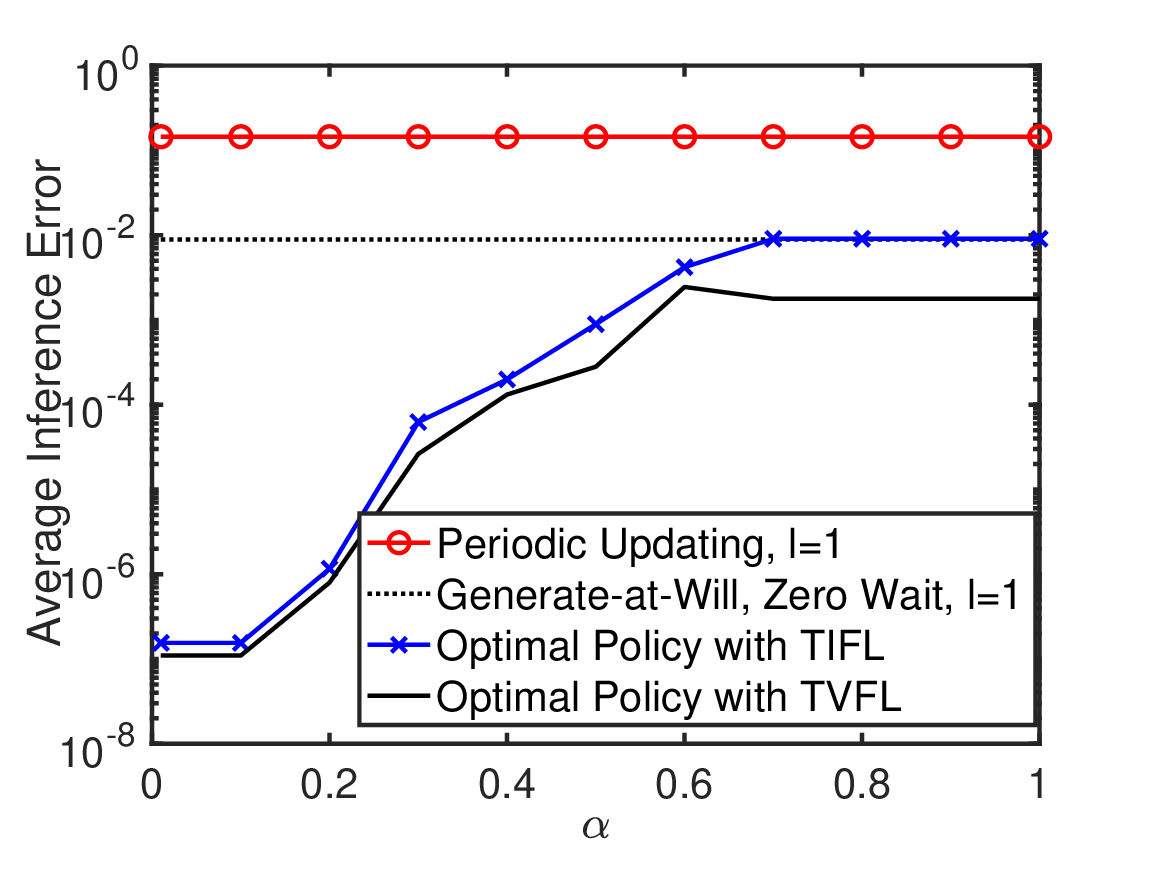}
\caption{\small Single Source Case: Time-averaged inference error vs. the scale parameter $\alpha$ in transmission time $T_i(l)=\lceil \alpha l \rceil$ for all $i$.\label{fig:numericalsingle1}
}
\vspace{-3mm}
\end{figure}

\begin{figure}
\centering
\includegraphics[width=0.30\textwidth]{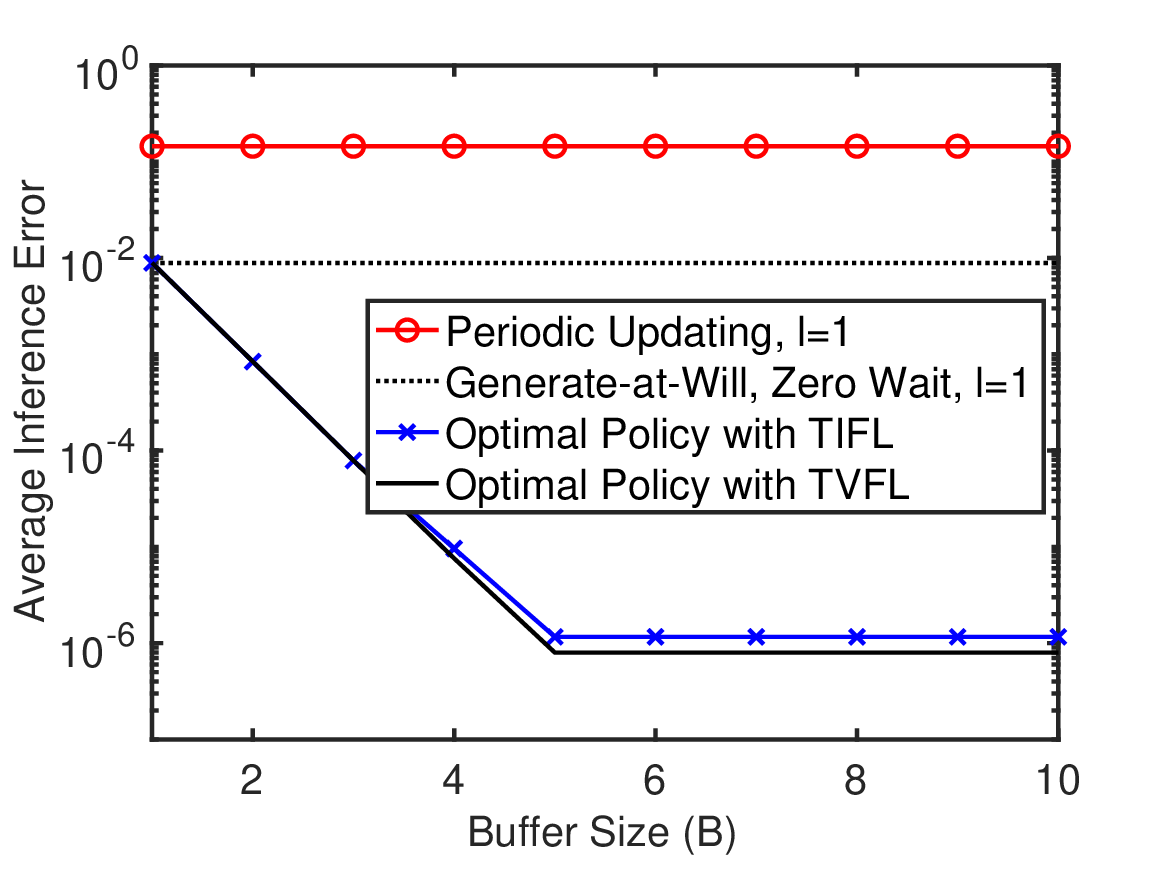}
\caption{\small Single Source Case: Time-averaged inference error vs. the buffer size $B$. \label{fig:numericalsingle2}
}
\vspace{-3mm}
\end{figure}

Fig. \ref{fig:numericalsingle1} shows the time-averaged inference error under different policies against the parameter $\alpha$, where $\alpha>0$. The plot is constrained to $\alpha=1$ since values of $\alpha>1$ is impractical due to the possibility of sending CSI using fewer bits. The buffer size of the source is $B=10$. Among the four scheduling policies, the ``Optimal Policy with TVFL" yields the best performance, while the ``Optimal Policy with TIFL" outperforms the other two policies. The findings in Figure \ref{fig:numericalsingle1} demonstrate that when $\alpha\leq 0.1$, the ``Optimal Policy with TVFL" can achieve a performance improvement of $10^4$ times compared to the ``Periodic Updating, $l=1$" with $T_p=4$ and ``Generate-at-Will, Zero Wait, $l=1$" policies. This result is not surprising since ``Periodic Updating, $l=1$" and ``Generate-at-Will, Zero Wait, $l=1$" do not utilize longer features, despite all features with $l=1, 2, \ldots, 10$ taking only $1$ time slot when $\alpha \leq 0.1$. When $\alpha>0.1$, the average inference error of the ``Periodic Updating" and ``Generate-at-Will, Zero Wait" policies are at least 10 times worse than that of the ``Optimal Policy with TVFL.'' The reasons are as follows: (1) The ``Periodic Updating" policy does not transmit a feature even when the channel is available, leading to an inefficient use of resources. In our simulation, this situation is evident as $T_i(1)=1$ and $T_p=4$. Again, ``Periodic Updating" may transmit features even when the preceding feature has not yet been delivered, resulting in an extended waiting time for the queued feature. This frequently leads to the receiver receiving a feature with a significantly large AoI value, which is not good for accurate inference. (2) Conversely, the ``Generate-at-Will, Zero-Wait" policy isn't superior because zero-wait is not advantageous, and the feature position $b=0$ may not be an optimal choice since the inference error is non monotonic with respect to AoI. 

The policy ``Optimal Policy with TIFL" achieves an average inference error very close to that of the ``Optimal Policy with TVFL,'' but it is simpler to implement. Furthermore, the ``Optimal Policy with TIFL" requires only one predictor associated with the optimal time-invariant feature length and does not require switching the predictor. 
 

Fig. \ref{fig:numericalsingle2} plots the time-averaged inference error vs. the buffer size $B$. In this simulation, $\alpha=0.2$ is considered. The results show that increasing $B$ can improve the performance of the ``Optimal Policy with TVFL" and ``Optimal Policy with TIFL" compared to the other policies. As $B$ increases, ``Optimal Policy with TVFL" and ``Optimal Policy with TIFL" outperform the others. In contrast, the ``Periodic Updating" and ``Generate-at-Will" policies do not utilize the buffer and their performance remains unchanged with increasing $B$. Moreover, we can notice that the buffer size $B=5$ is enough for this experiment as further increase in buffer size does not improve the performance. 

\begin{figure}
\centering
\includegraphics[width=0.30\textwidth]{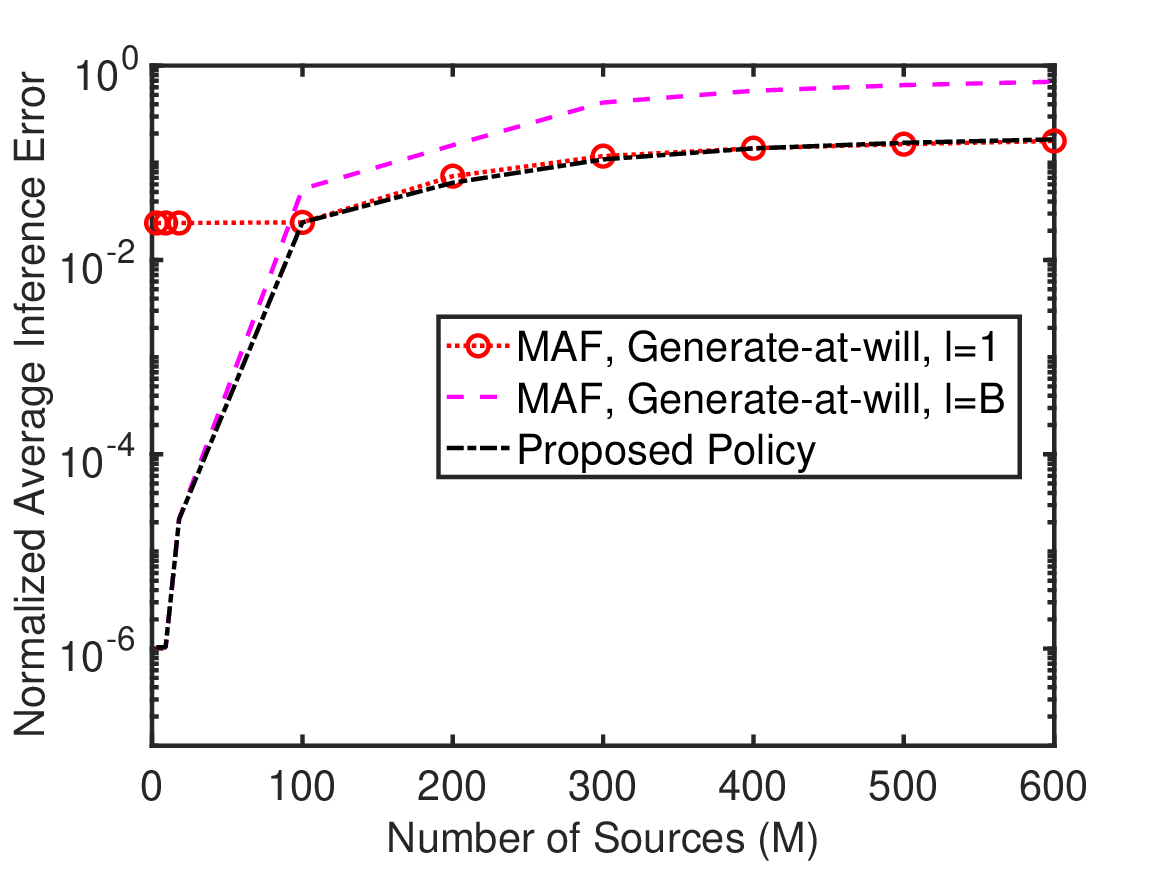}
\caption{\small Multiple Source Case: Time-averaged inference error vs. the number of sources $M$.  \label{fig:numericalmultiple11}
}
\vspace{-3mm}
\end{figure}

\begin{figure}
\centering
\includegraphics[width=0.30\textwidth]{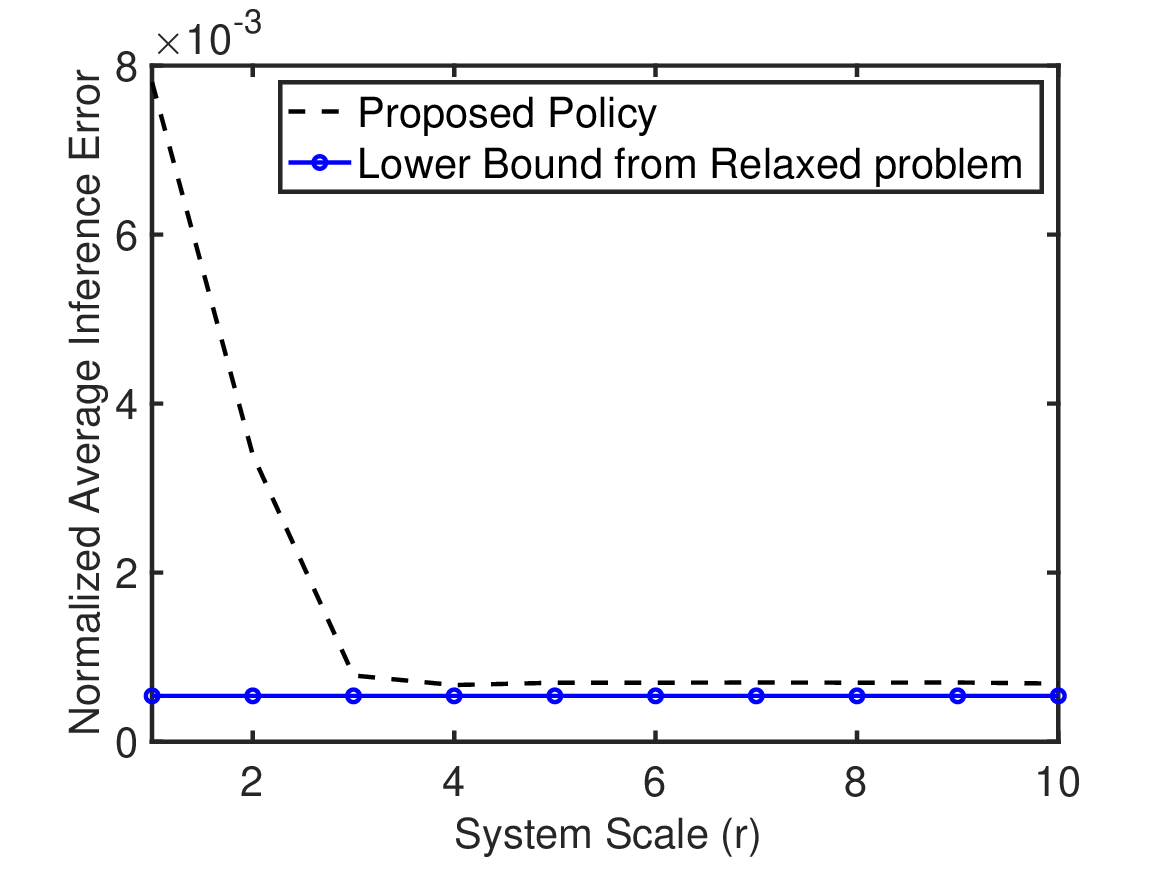}
\caption{\small Multiple Source Case: Time-averaged inference error vs. system scale $r$, where $M=3r$ and $N=10r$.  \label{fig:numericalmultiple1}
}
\vspace{-3mm}
\end{figure}

 \subsection{Multiple Source Scheduling Policies}
In this section, we evaluate the time-averaged inference error of the following three multiple source scheduling policies. 
\begin{itemize}
\item[1.] Maximum Age First (MAF), Generate-at-will, $l=1$: As the name suggests, this policy selects the sources with maximum AoI value at each time. Specifically, under this policy, $\min\{N, M\}$ sources with maximum AoI are selected. Moreover, the feature length and the feature position of the selected sources are $1$ and $0$, respectively. 

\item[2.] Maximum Age First (MAF), Generate-at-will, $l=B$:  This policy also selects the sources with maximum AoI values at each time, but with feature length $l=B$. Under this policy, $\min\{\lfloor\frac{N}{B}\rfloor, M\}$ sources with maximum AoI are selected, where $B$ is the buffer size of all sources, i.e., $B_j=B$ for all source $j$. Moreover, the feature position of the selected sources is $0$. 

\item[3.] Proposed Policy: The policy in Algorithm \ref{alg:multischeduling}.
\end{itemize}

The performance of three multiple source scheduling policies is illustrated in Fig. \ref{fig:numericalmultiple11}, where each source sends its observed CSI to the corresponding predictor. In this simulation, three types of sources are considered: (i) type 1 source with a velocity of $v_1=15~\text{m/s}$ and a CSI variance of $b_{1}=0.5$, (ii) type 2 sources with a velocity of $v_2=20~\text{m/s}$ and a CSI variance of $b_{2}=0.1$, and (iii) type 3 sources with a velocity of $v_3=25~\text{m/s}$ and a CSI variance of $b_{3}=1$.  

{Fig. \ref{fig:numericalmultiple11} illustrates the normalized average inference error (the total time-averaged inference error divided by the number of sources) plotted against the number of sources $M$, with $N=100$ and $B=10$. We can observe from Fig. \ref{fig:numericalmultiple11} that when the number of sources is less, the normalized average inference error of our proposed policy is $10^4$ times better than ``MAF, Generate-at-will, $l=1$.'' However, ``MAF, Generate-at-will, $l=B$" is close to the proposed policy. But, When number of sources is more than $400$, the normalized average inference error becomes $4$ times lower than that of the ``MAF, Generate-at-will, $l=B$" policy. As the number of sources increases, the normalized average inference error obtained by ``MAF, Generate-at-will, $l=1$" becomes close to the normalized average inference error of the proposed policy. 

Fig. \ref{fig:numericalmultiple1} compares the time-averaged inference error of the proposed policy and a lower bound from a relaxed problem. The lower bound is achieved by selecting feature length and feature position by using \eqref{optimallengtht} and \eqref{OptimalPositiont}, respectively under dual cost $\lambda=\lambda^*$ obtained from Algorithm \ref{alg:lagrange}. For this evaluation, we have taken step size $10^{-4}/(kr)$ at each iteration $k$ In Algorithm \ref{alg:lagrange}. In Fig. \ref{fig:numericalmultiple1}, we consider $N=10r$ channels and $M=3r$ sources, where $r$ represents the system scale. Observing Fig. \ref{fig:numericalmultiple1}, it becomes evident that our proposed policy converges towards the lower bound as the system scale $r$ increases. }

\section{Conclusion}
This paper studies a learning and communications co-design framework that jointly determines feature length and transmission scheduling for improving remote inference performance. In single sensor-predictor pair system, we propose two distinct optimal scheduling policies for (i) time-invariant feature length and (ii) time-variant feature length. These two scheduling policies lead to significant performance improvement compared to classical approaches such as periodic updating and zero-wait policies. Using the Lagrangian decomposition of a relaxed formulation, we propose a new algorithm for multiple sensor-predictor pairs. Simulation results show that the proposed algorithm is better than the maximum age-first policy.
\bibliographystyle{IEEEtran}
\bibliography{refshisher}

\begin{appendices}

\ifreport
\section{Experimental Setup for Two Machine Learning Experiments}\label{ExperimentalSetup}

In the first experiment: wireless channel state information (CSI) prediction, our objective is to infer the CSI of a source at time $t$ by observing a feature consisting of a sequence of stale and noisy CSIs. Specifically, we consider a Rayleigh fading-based CSI. The dataset is generated by using the Jakes model \cite{jakes1994microwave}. In the Jakes fading channel model, the CSI can be expressed as a Gaussian random process. Due to the joint Gaussian distribution of the target and feature random variables, the optimal inference error performance is achieved by a linear MMSE estimator. Hence, a linear regression algorithm is adopted in our simulation. Nonetheless, our study can be readily applied to other neural network-based predictors. 

In the second experiment: actuator state prediction, we employ a neural network based predictor. In this experiment, we use an OpenAI CartPole-v1 task \cite{brockman2016openai} to generate a dataset, where a DQN reinforcement learning algorithm \cite{mnih2015human} is utilized to control the force on a cart and keep the pole attached to the cart from falling over. Our goal is to predict the pole angle at time $t$ based on a sequence of stale information of cart velocity with length $l$. The predictor in this experiment is an LSTM neural network that consists of one input layer, one hidden layer with 64 LSTM cells, and a fully connected output layer. Additional experiments can be found in a recent study \cite{shisher2021age, ShisherMobihoc, Shisher2023Timely}, including (a) video prediction and (b) robot state prediction.

\section{Proof of Lemma \ref{lemma2}}\label{plemma2}
\textbf{Part (a):} Consider the sequence $\tilde X^{l}_{-\delta} = (\tilde V_{-\delta}, \tilde V_{-\delta-1}, \ldots, \tilde V_{-\delta-l+1})$. It can be demonstrated that for any $1 \leq l_1 \leq l_2$, the Markov chain $\tilde Y_0 \leftrightarrow \tilde X^{l_2}_{-\delta} \leftrightarrow \tilde X^{l_1}_{-\delta}$ holds. This is due to the fact that for $1 \leq l_1 < l_2$, the sequence $\tilde X^{l_2}_{-\delta}=(\tilde V_{-\delta}, \tilde V_{-\delta-1}, \ldots, \tilde V_{-\delta-l_1+1},\tilde V_{-\delta-l_1}, \ldots, V_{-\delta-l_2+1})$ includes $\tilde X^{l_1}_{-\delta}=(\tilde V_{-\delta}, \tilde V_{-\delta-1}, \ldots, \tilde V_{-\delta-l_1+1})$ as well as $(\tilde V_{-\delta-l_1}, \ldots, \tilde V_{-\delta-l_2+1})$. By applying the data processing inequality \cite[Lemma 12.1]{Dawid1998} for $L$-conditional entropy, we can deduce that
\begin{align}\label{plemma2_1}
H_L(\tilde Y_0 | \tilde X^{l_2}_{-\delta}) \leq H_L(\tilde Y_0 | \tilde X^{l_1}_{-\delta}).
\end{align}

\textbf{Part (b):} Assuming that \eqref{condition2} holds for all $l=1, 2, \ldots$ and $x \in \mathcal{V}^l$, and employing \cite[Lemma 3]{ShisherMobihoc}, \cite{Shisher2023Timely} yields 
\begin{align}\label{plemma2_2}
H_L(P_{Y_t|X^{l}_{t-\delta}}; P_{\tilde Y_0|\tilde X^{l}_{-\delta}}|P_{X^{l}_{t-\delta}}) = H_L(Y_t|X^{l}_{t-\delta}) + O(\beta).
\end{align}

Combining \eqref{plemma2_2} with \eqref{plemma2_1}, we deduce that
\begin{align}
&H_L(P_{Y_t|X^{l_2}_{t-\delta}}; P_{\tilde Y_0|\tilde X^{l_2}_{-\delta}}|P_{X^{l_2}_{t-\delta}})\nonumber\\
=& H_L(Y_t|X^{l_2}_{t-\delta}) + O(\beta)\nonumber\\
\leq& H_L(Y_t|X^{l_1}_{t-\delta}) + O(\beta)\nonumber\\
=& H_L(P_{Y_t|X^{l_1}_{t-\delta}}; P_{\tilde Y_0|\tilde X^{l_1}_{-\delta}}|P_{X^{l_1}_{t-\delta}}) + O(\beta) + O(\beta)\nonumber\\
=& H_L(P_{Y_t|X^{l_1}_{t-\delta}}; P_{\tilde Y_0|\tilde X^{l_1}_{-\delta}}|P_{X^{l_1}_{t-\delta}}) + O(\beta).
\end{align}
This concludes the proof. \qed

\section{Proof of Theorem \ref{theorem1}}\label{ptheorem1}
\textbf{Optimal Feature Length Determination:} To find the optimal feature length for the time-invariant scheduling problem \eqref{scheduling_problem1}, we undertake a two-step process:

\begin{enumerate}
    \item \textit{Calculation of $\bar p_{l}$:} Given a feature length $l$, we start by determining $\bar p_{l}$, defined as
    \begin{align}\label{decomposed}
        \bar p_l = \inf_{\pi \in \Pi_{l}} \limsup_{T \rightarrow \infty} \frac{1}{T} \mathbb{E}_{\pi} \left[ \sum_{t=0}^{T-1} \mathrm{err}_{\mathrm{inference}}(\Delta(t), l) \right],
    \end{align}
    where $\Pi_{l}$ represents the set of admissible policies for feature length $l$. This step quantifies the optimal objective value for each specific feature length.

    \item \textit{Optimal Feature Length and Objective Value:} Having obtained $\bar p_{l}$ for all relevant $l$, the optimal feature length $l^*$ can be determined by solving
    \begin{align}\label{opt}
        l^* = \argmin_{l \in \mathbb{Z}: 1 \leq l \leq B} \bar p_l,
    \end{align}
    where $B$ represents an upper bound on the feature length. Additionally, the optimal objective value is given by
    \begin{align}\label{optp}
        \bar p_{\mathrm{inv}} = \min_{l \in \mathbb{Z}: 1 \leq l \leq B} \bar p_{l}.
    \end{align}
    These steps collectively identify the most suitable feature length and its corresponding objective value.
\end{enumerate}

We aim to solve the problem \eqref{scheduling_problem1} by addressing the sub problems \eqref{decomposed}-\eqref{opt}. Let's begin by solving \eqref{decomposed} using \cite[Theorem 4.2]{ShisherMobihoc}, restated below for completeness.
\begin{theorem}\label{theoremAppendix}\cite[Theorem 4.2]{ShisherMobihoc}
If the transmission times $T_i(l)$'s are i.i.d. with a finite mean $\mathbb E[T_i(l)]$, then there exists an optimal solution $\pi_l^*=((S^*_1(l), b^*_1(l), l), (S^*_2(l), b^*_2(l), l), \ldots) \in \Pi_{l}$ to \eqref{decomposed} that satisfies:
\begin{itemize}
\item[(a)] The optimal feature position in $\pi^*_l$ is time-invariant, i.e., $b^*_1(l)=b^*_2(l)= \dots=b^*(l)$. The optimal feature position $b^*(l)$ in $\pi^*_l$ is given by
\begin{align}\label{optimal_buffer_length_lb}
b^*(l)=\argmin_{ 0 \leq b \leq B-l} \beta_{b, l},
\end{align}
where $\beta_{b, l}$ is the unique root of equation \eqref{bisection}.

\item[(b)] The optimal scheduling time $S^*_{i+1}(l)$ in $\pi^*_l$ is determined by 
\begin{align}\label{OptimalWaitingTimelb}
S^*_{i+1}(l) = \min_{t \in \mathbb Z}\big\{ t \geq S^*_i(l)+T_i(l): \gamma_{l}(\Delta_{b}(t), l) \geq \bar p_{l}\big\},
\end{align}
where $\Delta_{b}(t)=t-S^*_i(l)+b^*(l)$ is the AoI at time $t$. The optimal objective value $\bar p_{l}$ of \eqref{decomposed} is 
\begin{align}\label{optimal_objectivebl}
\bar p_{l}=\min_{ 0 \leq b \leq B-l} \beta_{b, l}.
\end{align}
\end{itemize}
\end{theorem}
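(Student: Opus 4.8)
The plan is to solve the fixed-length average-cost SMDP \eqref{decomposed} by combining a renewal-reward reformulation, Dinkelbach-type fractional programming, and optimal-stopping (Gittins-index) theory. First I would observe that under any policy in $\Pi_{l}$ the feature-delivery epochs $D_i$ are regeneration points: at each $D_i$ the age equals $T_i(l)+b_i$, and because the $T_i(l)$ are i.i.d. and $\mathrm{err}_{\mathrm{inference}}(\cdot,l)$ is stationary, the frame $[D_i,D_{i+1})$ is statistically decoupled from the past given the chosen position $b_i$ and the waiting time $Z_i=S_{i+1}-D_i$. Hence the time-average cost can be written, via the renewal-reward theorem, as the ratio of the expected per-frame cost $C_{b}(\pi)$ to the expected frame length $\Lambda_{b}(\pi)$, and the per-frame subproblems are identical across $i$. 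Because the stage cost of frame $i$ depends on the policy only through $(b_i,Z_i,T_{i+1}(l))$ and the frames are statistically interchangeable, an optimal stationary policy can use a single position, giving time-invariance $b_1^*(l)=b_2^*(l)=\cdots=b^*(l)$ and reducing the outer optimization to a finite search over $b\in\{0,\dots,B-l\}$.

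Second, for a fixed position $b$ I would characterize the optimal per-frame ratio $\beta_{b,l}$ through fractional programming. Minimizing $C_{b}(\pi)/\Lambda_{b}(\pi)$ over stationary policies is equivalent, by Dinkelbach's method, to finding the unique $\beta$ for which $\inf_{\pi}\{C_{b}(\pi)-\beta\,\Lambda_{b}(\pi)\}=0$, which is exactly the renewal-reward equation \eqref{bisection}. I would establish existence and uniqueness of this root by showing that the auxiliary function $f(\beta):=\inf_{\pi}\{C_{b}(\pi)-\beta\,\Lambda_{b}(\pi)\}$ is finite (using $\mathbb E[T_i(l)]<\infty$ and boundedness of the cost on the relevant finite age window), continuous and concave as an infimum of affine functions of $\beta$, and strictly decreasing because $\Lambda_{b}(\pi)\ge\mathbb E[T_1(l)]\ge 1>0$ for every $\pi$; thus $f$ crosses zero exactly once, and the crossing value is $\beta_{b,l}$, equal to the optimal average cost of \eqref{decomposed} restricted to position $b$.

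Third, for each candidate threshold $\beta$ I would solve the inner problem $\inf_{Z\ge 0}\mathbb E[\text{frame cost}-\beta\,\text{frame length}]$, a single optimal-stopping problem over the waiting time $Z$. This is the main obstacle: since $\mathrm{err}_{\mathrm{inference}}(\delta,l)$ need not be monotone in the age $\delta$, the naive one-step rule ``stop as soon as the instantaneous cost exceeds $\beta$'' is \emph{not} optimal, because it can be profitable to wait through a temporary cost peak when the \emph{average} future cost still lies below $\beta$. The correct stopping index is precisely $\gamma_l(\delta,l)$ of \eqref{gittins}, the infimum over horizons $\tau$ of the mean cost accrued over the $\tau$ slots following delivery. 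I would prove, using the restart-in-state formulation of \cite[Chapter 2.6.4]{gittins2011multi} (equivalently, making a one-step-look-ahead argument exact through the index $\gamma_l$), that the optimal waiting rule stops at the first time $t$ with $\gamma_l(\Delta_b(t),l)\ge\beta$. Setting the equilibrium threshold $\beta=\bar p_l$ then yields the scheduling rule \eqref{OptimalWaitingTimelb}.

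Finally I would assemble the pieces. Part (a) follows from the time-invariance of $b^*(l)$ together with the finite search \eqref{optimal_buffer_length_lb} over the precomputed roots $\beta_{b,l}$; part (b) follows from the threshold form of the inner stopping rule and the fractional-programming identity $\bar p_l=\min_{b}\beta_{b,l}$ in \eqref{optimal_objectivebl}. The delicate verification throughout is the non-monotone optimal-stopping step, since it is what forces the index $\gamma_l$ in place of a plain age threshold and is precisely what distinguishes the resulting policy from classical monotone AoI threshold policies.
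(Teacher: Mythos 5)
Your reconstruction is correct in outline, but note first that the paper does not actually prove this statement: Theorem \ref{theoremAppendix} is imported verbatim from \cite[Theorem 4.2]{ShisherMobihoc}, and its only role here is as the key lemma in the two-step proof of Theorem \ref{theorem1} (solve \eqref{decomposed} for each fixed $l$, then take the minimum over $l$). The fair comparison is therefore with the technique the paper does spell out, namely the Appendix D proof of the more general time-variant result (Theorem \ref{theorem2}), and there your plan is methodologically faithful: the same per-frame decomposition, the same root $\beta_{b,l}$ of the renewal-reward equation \eqref{bisection}, and the same index-threshold stopping rule built on $\gamma_l$ of \eqref{gittins}. The genuine differences lie in the machinery. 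You reach the threshold rule via Dinkelbach fractional programming on the renewal ratio plus the Gittins restart-in-state formulation; the paper instead writes the average-cost Bellman equation for the SMDP, observes that the within-frame trajectory is deterministic so every stopping time in $\mathfrak M$ collapses to a deterministic wait (reducing \eqref{optimality} to the integer program \eqref{determinsticwaiting}), and then extracts the least optimal wait by the elementary cumulative-sum comparison \eqref{tauoptimality} combined with Lemma \ref{fraction_programming} (the analogue of Lemma 7 in \cite{orneeTON2021}), which is exactly the equivalence turning your ``average future cost below $\beta$'' intuition into $\gamma_l(\delta+\tau,d)\geq \bar p_{opt}$. Your route buys a transparent existence/uniqueness argument for $\beta_{b,l}$ (concavity and strict monotonicity of $f(\beta)$, using frame lengths at least $\mathbb E[T_1(l)]\geq 1$), which the paper merely asserts; the paper's route buys extensibility, since the Bellman formulation absorbs the extra decision variables and yields Theorem \ref{theorem2} with no new ideas. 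One step you should tighten: time-invariance of $b^*(l)$ does not follow from ``statistical interchangeability of frames'' alone, because $b_i$ is chosen when feature $i$ is sent but prices the subsequent frame; the clean justification is that the position enters the Bellman equation only through the state-independent term $\min_{0\leq b\leq B-l}\mathbb E[h(T_1(l)+b,l)]$ — equivalently, after the Dinkelbach transform at the optimal $\beta$ the per-frame problems decouple and are identical, which is the second half of your own sentence and the part that actually carries the argument.
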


Using Theorem \ref{theoremAppendix}, we obtain values of $\bar p_l$ for all $l = 1, 2, \ldots, B$. We can then determine $l^*$ and $\bar p_{\mathrm{inv}}$ using \eqref{opt} and \eqref{optp}, respectively. Substituting $l^*$ and $\bar p_{\mathrm{inv}}$ into the policy $\pi^*_{l^*}$ established in Theorem \ref{theoremAppendix}, we derive the optimal policy $\pi^*$, as asserted in Theorem \ref{theorem1}. This completes the proof. \qed

\section{Proof of Theorem \ref{theorem2}}\label{ptheorem2}

The infinite time-horizon average cost problem \eqref{scheduling_problem} can be cast as an average cost semi-Markov decision process (SMDP) \cite{puterman2014markov, bertsekasdynamic}. To describe the SMDP, we define decision times, action, state, state transition, and cost of the SMDP.

\subsubsection{Decision Times and Waiting Time} Each $i$-th feature delivery time $D_i = S_i + T_i(l_i)$ is considered a decision time. Let $Z_{i+1}$ denote the waiting time between the $i$-th feature delivery time $D_i$ and the $(i+1)$-th feature sending time $S_{i+1}$, given by:
\begin{align}\label{def:waitingtime}
    Z_{i+1} = S_{i+1} - D_i.
\end{align}
With $S_0 = 0$, we can express $S_{i+1} = \sum_{k=0}^{i} T_k(l_k) + Z_{k+1}$. Thus, given $(T_0, T_1, \ldots)$, we can uniquely determine $(S_1, S_2, \ldots)$ from $(Z_1, Z_2, \ldots)$. Consequently, a policy in $\Pi$ can be represented as $\pi = ((Z_1, b_1, l_1), (Z_2, b_2, l_2), \ldots)$, where at time $D_i$, $(Z_{i+1}, b_{i+1}, l_{i+1})$ represents the action.

\subsubsection{State and State Transition} At time $D_i$, the state is $(\Delta(D_i), d(D_i))$. The AoI process $\Delta(t)$ evolves as:
\begin{align}\label{AoIProcess}
    \Delta(t) =
    \begin{cases}
        T_i(l_i) + b_i, & \text{if}~t = D_{i}, \quad i = 0, 1, \ldots, \\
        \Delta(t-1) + 1, & \text{otherwise}.
    \end{cases}
\end{align}
The feature length $d(t)$ evolves as:
\begin{align}\label{FeatureLengthProcess}
    d(t) =
    \begin{cases}
        l_i, & \text{if}~D_i \leq t < D_{i+1}.
    \end{cases}
\end{align}
Hence, at the decision time $D_i$, the state value is $(\Delta(D_i), d(D_i)) = (T_i(l_i) + b_i, l_i)$.

\subsubsection{Cost} The expected time between two decision times, $D_i$ and $D_{i+1}$, is given by:
\begin{align}\label{interdeliverytime}
    \mathbb E[D_{i+1} - D_i] =\mathbb E[Z_{i+1} + T_{i+1}(l_{i+1})].
\end{align}
Given $\Delta(D_i) = \delta$ and $d(D_i) = d$, the expected cost within the interval $[D_i, D_{i+1})$ is:
\begin{align}
    &\mathbb{E}\left[\sum_{t=D_i}^{D_{i+1}-1} \mathrm{err}_{\mathrm{inference}}\bigg(\Delta(t), d(t)\bigg)\right]\nonumber\\
    =& \mathbb{E}\left[\sum_{k=0}^{Z_{i+1}+T_{i+1}(l_{i+1})-1} \mathrm{err}_{\mathrm{inference}}\bigg(\delta+k, d\bigg)\right].
\end{align}

\textbf{Solution via Dynamic Programming:} To solve the SMDP, we employ the dynamic programming method \cite{puterman2014markov, bertsekasdynamic}. Initially, we define a $\sigma$-field and a stopping time set for the state process $(\Delta(t), d(t))$.

Define $\sigma$-field 
\begin{align}\label{sigma-field}
\mathcal F_s^t= \sigma((\Delta(t+k), d(t+k)): k \in \{0, 1, \ldots, s\}),
\end{align} which is the set of events whose occurrence are determined by the realization of the process $\{(\Delta(t+k), d(t+k)) :  k \in \{0, 1, \ldots, s\}\}$ from time slot $t$ up to time slot $t+s$. Then, $\{ \mathcal F_k^t, k \in \{0, 1, \ldots\}\}$ is the filtration of the process $\{S(t+k) : k \in \{0, 1, \ldots\} \}$. We define $\mathfrak M$ as the set of all stopping times by 
\begin{align}\label{StoppingTimes}
\mathfrak M=\{ \nu \geq 0 : \{\nu=k\} \in \mathcal F_k^t, k \in \{0, 1, 2, \ldots\}\}.
\end{align}

Given $(\Delta(D_i), d(D_i)) = (\delta, d)$, the optimal action $(Z^*_{i+1}, l^*_{i+1}, b^*_{i+1})$ satisfies the following Bellman optimality equation for the SMDP \eqref{scheduling_problem}:
\begin{align}\label{optimalitymain}
    &h(\delta, d)  \nonumber\\&
    =\min_{\substack{Z \in \mathfrak{M} \\ l \in \mathbb{Z}: 1 \leq l \leq B \\ b \in \mathbb{Z}: 0 \leq b \leq B-l}} \bigg\{\mathbb{E}\left[\sum_{k=0}^{Z+T_{i+1}(l)-1} \mathrm{err}_{\mathrm{inference}}(\delta+k, d)\right] \nonumber\\
    &\quad \quad \quad -\mathbb E[Z+T_{i+1}(l)]~ \bar{p}_{\mathrm{opt}}+ {\red \mathbb{E}[h(T_{i+1}(l) + b, l)]}\bigg\} \nonumber\\
    = &\min_{\substack{Z \in \mathfrak{M} \\ l \in \mathbb{Z}: 1 \leq l \leq B \\ b \in \mathbb{Z}: 0 \leq b \leq B-l}} \bigg\{\mathbb{E}\left[\sum_{k=0}^{Z+T_{1}(l)-1} \mathrm{err}_{\mathrm{inference}}(\delta+k, d)\right] \nonumber\\
    &\quad \quad \quad -\mathbb E[Z+T_{1}(l)]~ \bar{p}_{\mathrm{opt}}+ {\red \mathbb{E}[h(T_{1}(l) + b, l)]}\bigg\},
\end{align}
where $\mathfrak{M}$ is the set of stopping times defined in \eqref{StoppingTimes}, and the last equality holds because $T_i(l)$'s are independent and identically distributed.

The Bellman optimality equation \eqref{optimalitymain} is complex due to the need to jointly optimize three parameters: feature length $l$, feature position $b$, and waiting time $Z$. If $Z_l(\delta, d)$ defined in \eqref{stoppingtimesoln} represents the optimal waiting time for a given feature length $l$, then equation \eqref{optimalitymain} can be simplified as follows:
\begin{align}
    &h(\delta, d)= \nonumber\\
    & \min_{\substack{l \in \mathbb{Z}\\ 1 \leq l \leq B}}\bigg\{\mathbb{E}\left[\sum_{k=0}^{Z_l( \delta, d)+T_{1}(l)-1} \bigg(\mathrm{err}_{\mathrm{inference}}(\delta+k, d) - \bar{p}_{\mathrm{opt}}\bigg)\right] \nonumber\\
     &~~~~~~~~~~~~~~~~+ \min_{\overset{b\in \mathbb{Z}}{0 \leq b \leq B-l}} {\red \mathbb{E}[h(T_{1}(l)+b, l)]}\bigg\},
\end{align}
which leads to \eqref{Bellman2} and \eqref{optimal_featurelength}.

Now, we need to prove that $Z_l(\delta, d)$ is the optimal waiting time for a given feature length $l$. This is true if $Z_l( \delta, d)$ is the optimal solution to the following optimization problem:
\begin{align}\label{optimality}
 \min_{Z \in \mathfrak M} \mathbb E\left[\sum_{k=0}^{Z+T_{1}(l)-1} \bigg(\mathrm{err}_{\mathrm{inference}}(\delta+k, d)- \bar p_{opt}\bigg)\right].
\end{align}

\textbf{Simplification of the Problem \eqref{optimality}} The problem \eqref{optimality} poses a challenge due to its nature as an optimal stopping time problem. However, we can simplify the problem by exploiting a property of the state transition. Let $\nu^* \in \mathfrak{M}$ represent the optimal stopping time that solves \eqref{optimality}. For any $k \leq \nu^*$, it holds that $\Delta(D_i+k) = \Delta(D_i) + k$ and $d(D_i+k) = d(D_i)$. Consequently, the set $\{(\Delta(D_i+k), d(D_i+k)): k=1,2, \ldots, \nu^*\}$ is entirely determined by the initial value $(\Delta(D_i), d(D_i))$. Additionally, for all $k \leq \nu^*$, the $\sigma$-field $\mathcal{F}_{k}^{D_i}$ can be simplified as $\mathcal{F}_k^{D_i} = \sigma((\Delta(D_i), d(D_i))$. Thus, any stopping time in $\mathfrak{M}$ corresponds to a deterministic time. As a result, problem \eqref{optimality} can be further simplified into the following integer optimization problem:
\begin{align}\label{determinsticwaiting}
    \min_{Z \in \{0, 1, \ldots\}} \mathbb{E}\left[\sum_{k=0}^{Z+T_{1}(l)-1} \bigg(\mathrm{err}_{\mathrm{inference}}(\delta+k, d) - \bar{p}_{\mathrm{opt}}\bigg)\right].
\end{align}
We aim to demonstrate that $Z_l(\delta, d)$ is the optimal solution for \eqref{determinsticwaiting}.

\textbf{Optimal Waiting Time Determination:} 
By utilizing \eqref{determinsticwaiting}, we can determine that waiting time $Z=0$ is optimal if the following inequality holds:
\begin{align}
    &\min_{Z \in \{1,2, \ldots\}} \mathbb{E}\left[\sum_{k=0}^{Z+T_{1}(l)-1} \bigg(\mathrm{err}_{\mathrm{inference}}(\delta+k, d) - \bar{p}_{\mathrm{opt}}\bigg)\right] \nonumber\\
    &\geq \mathbb{E}\left[\sum_{k=0}^{T_{1}(l)-1} \bigg(\mathrm{err}_{\mathrm{inference}}(\delta+k, d) - \bar{p}_{\mathrm{opt}}\bigg)\right].
\end{align}

In scenarios where $Z=0$ is not optimal, the optimal choice becomes $Z=1$ if the following condition is satisfied:
\begin{align}
    &\min_{Z \in \{2,3 \ldots\}} \mathbb{E}\left[\sum_{k=0}^{Z+T_{1}(l)-1} \bigg(\mathrm{err}_{\mathrm{inference}}(\delta+k, d) - \bar{p}_{\mathrm{opt}}\bigg)\right] \nonumber\\
    &\geq \mathbb{E}\left[\sum_{k=0}^{1+T_{1}(l)-1} \bigg(\mathrm{err}_{\mathrm{inference}}(\delta+k, d) - \bar{p}_{\mathrm{opt}}\bigg)\right].
\end{align}

Following a similar argument, if $Z=0, 1, \ldots, \tau-1$ are not optimal, then $Z=\tau$ becomes optimal when
\begin{align}\label{tauoptimality}
    &\min_{Z \in \{\tau+1,\tau+2 \ldots\}}\mathbb E\left[\sum_{k=0}^{Z+T_{1}(l)-1} \bigg(\mathrm{err}_{\mathrm{inference}}(\delta+k, d)- \bar{p}_{opt}\bigg)\right] \nonumber\\
    &\geq \mathbb E\left[\sum_{k=0}^{\tau+T_{1}(l)-1} \bigg(\mathrm{err}_{\mathrm{inference}}(\delta+k, d)- \bar{p}_{opt}\bigg)\right].
\end{align}

Hence, we deduce that the optimal waiting time is the least integer value $\tau$ that satisfies \eqref{tauoptimality}. This inequality can be equivalently expressed as  
\begin{align}\label{tauoptimality1}
    &\min_{j \in \{1,2 \ldots\}}\mathbb E\left[\sum_{k=0}^{j-1} \bigg(\mathrm{err}_{\mathrm{inference}}(\delta+\tau+j+T_1(l), d)- \bar p_{opt}\bigg)\right] \nonumber\\
    &\geq 0.
\end{align}

Similar to Lemma 7 in \cite{orneeTON2021}, the following lemma holds.
\begin{lemma}\label{fraction_programming}
The following inequality holds 
\begin{align}
    &\min_{j \in \{1,2 \ldots\}}\mathbb E\left[\sum_{k=0}^{j-1} \bigg(\mathrm{err}_{\mathrm{inference}}(\delta+\tau+j+T_1(l), d)- \bar p_{opt}\bigg)\right]\nonumber\\
    &\geq 0,
\end{align}
if and only if 
\begin{align}\label{Gittinssimilar}
    &\min_{j \in \{1, 2, \ldots\}} \frac{1}{j}\sum_{k=0}^{j-1} \mathbb E\left[\mathrm{err}_{\mathrm{inference}}\bigg(\delta+\tau+j+T_1(l), d\bigg)\right] \geq \bar p_{opt}. 
\end{align}
\end{lemma}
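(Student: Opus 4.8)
The plan is to prove the equivalence term-by-term in the index $j$, reducing both sides to a single scalar inequality for each $j$ and then comparing the two minima. First I would introduce the shorthand $g(k) := \mathbb{E}[\mathrm{err}_{\mathrm{inference}}(\delta+\tau+k+T_1(l),d)]$, so that, once the expectation is taken, the quantity inside the first minimization becomes a deterministic sequence. The crucial observation is that $\bar p_{opt}$ is a deterministic constant rather than a random variable, so by linearity of expectation
\begin{align}
\mathbb{E}\!\left[\sum_{k=0}^{j-1}\!\big(\mathrm{err}_{\mathrm{inference}}(\delta+\tau+k+T_1(l),d)-\bar p_{opt}\big)\right]
= \sum_{k=0}^{j-1} g(k) - j\,\bar p_{opt}.
\end{align}

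Next I would exploit that the summation index runs over $j \geq 1$, hence $j$ is strictly positive. For each fixed $j \geq 1$, dividing the identity above by $j$ preserves the direction of any inequality, which gives the per-$j$ equivalence
\begin{align}
\sum_{k=0}^{j-1} g(k) - j\,\bar p_{opt} \geq 0
\quad\Longleftrightarrow\quad
\frac{1}{j}\sum_{k=0}^{j-1} g(k) \geq \bar p_{opt}.
\end{align}
Because this equivalence holds simultaneously for every $j \geq 1$, the assertion that the minimum over $j$ of the left-hand (additive) expression is nonnegative is logically equivalent to the assertion that the minimum over $j$ of the right-hand (rate) expression is at least $\bar p_{opt}$: an infimum over a family is $\geq$ a threshold precisely when every member clears it, and by the displayed equivalence the two families clear their respective thresholds under identical conditions. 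This is exactly the equivalence between \eqref{tauoptimality1} and \eqref{Gittinssimilar}, and it is also what identifies \eqref{Gittinssimilar} with $\gamma_l(\delta+\tau,d)\geq\bar p_{opt}$ via the definition \eqref{gittins}.

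I do not expect a genuine obstacle here; the real work lies in the reduction preceding the lemma, and what remains is careful bookkeeping. Only two points need a sentence of justification rather than a calculation: that $\bar p_{opt}$ may be pulled out of the expectation (it is a deterministic scalar), and that division by $j$ is legitimate (the index set excludes $0$). It is worth emphasizing that the present setting avoids the harder features of the analogous Lemma~7 in \cite{orneeTON2021}: because the state evolves deterministically during the waiting phase — as established earlier when $\mathcal{F}_{k}^{D_i}$ was shown to collapse to $\sigma((\Delta(D_i),d(D_i)))$ — every stopping time reduces to a deterministic integer $j$, so no Wald-type identity or randomized-stopping argument is required and the fractional-programming equivalence is purely arithmetic.
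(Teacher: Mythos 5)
Your proof is correct. It differs from the paper's treatment in that the paper does not actually prove Lemma~\ref{fraction_programming}: it asserts the statement holds ``similar to Lemma 7 in \cite{orneeTON2021},'' where the analogous equivalence is established for randomized stopping times via a Dinkelbach-style fractional-programming argument that needs a Wald-type identity to handle a random number of summands. Your argument is self-contained and strictly more elementary: because the minimization index $j$ is a deterministic positive integer (a fact the paper justifies earlier by collapsing the filtration $\mathcal F_k^{D_i}$ to $\sigma(\Delta(D_i), d(D_i))$, as you note), linearity of expectation turns each term into $\sum_{k=0}^{j-1}g(k)-j\,\bar p_{opt}$, and dividing by $j>0$ gives the equivalence pointwise in $j$; passing to the minimum on both sides is then immediate, since an infimum clears a threshold exactly when every member does. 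What the citation buys the paper is generality (the cited lemma covers genuinely random stopping rules) and brevity; what your proof buys is a complete verification with no external dependency. One small point worth flagging explicitly rather than silently: the argument of $\mathrm{err}_{\mathrm{inference}}$ in the lemma statement should read $\delta+\tau+k+T_1(l)$ with the summation index $k$ rather than $j$ --- as required for consistency with \eqref{tauoptimality} and with the definition of $\gamma_l$ in \eqref{gittins} --- and your definition of $g(k)$ adopts exactly this corrected reading.
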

The left hand side of \eqref{Gittinssimilar} is exactly $\gamma_{l}(\delta+\tau, d)$ defined in \eqref{gittins}.

To conclude the proof, the optimal waiting time corresponds to the least integer value $\tau$ that satisfies $\gamma_{l}(\delta+\tau, d) \geq \bar p_{opt}$. This optimal waiting time leads to \eqref{stoppingtimesoln}.
\qed

 \else
\fi

\end{appendices}
\begin{IEEEbiography}
[{\includegraphics[width=1in,height=1.25in,clip,keepaspectratio]{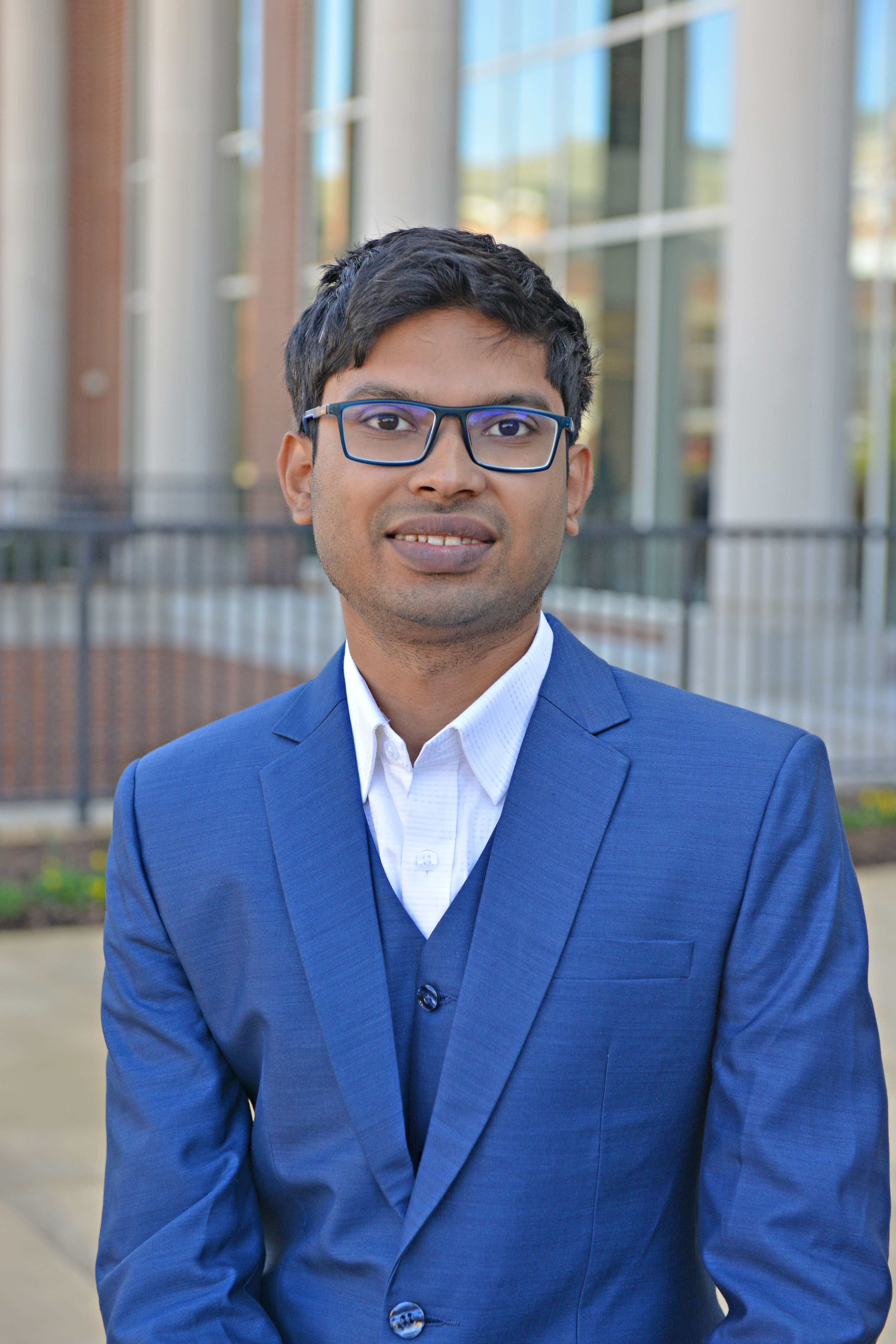}}]{Md Kamran Chowdhury Shisher}(Student Member, IEEE)
    received the B.Sc. degree in Electrical and Electronic Engineering from the Bangladesh University of Engineering and Technology in 2017. He completed his M.Sc. in Electrical Engineering at Auburn University in 2022. He is currently pursuing the Ph.D. degree with the Department of Electrical and Computer Engineering, Auburn University, Auburn, AL, USA. His research interests include Information Freshness, Wireless Networks, Semantic Communications, and Machine Learning.
\end{IEEEbiography}

\begin{IEEEbiography}[{\includegraphics[width=1in,height=1.25in,clip,keepaspectratio]{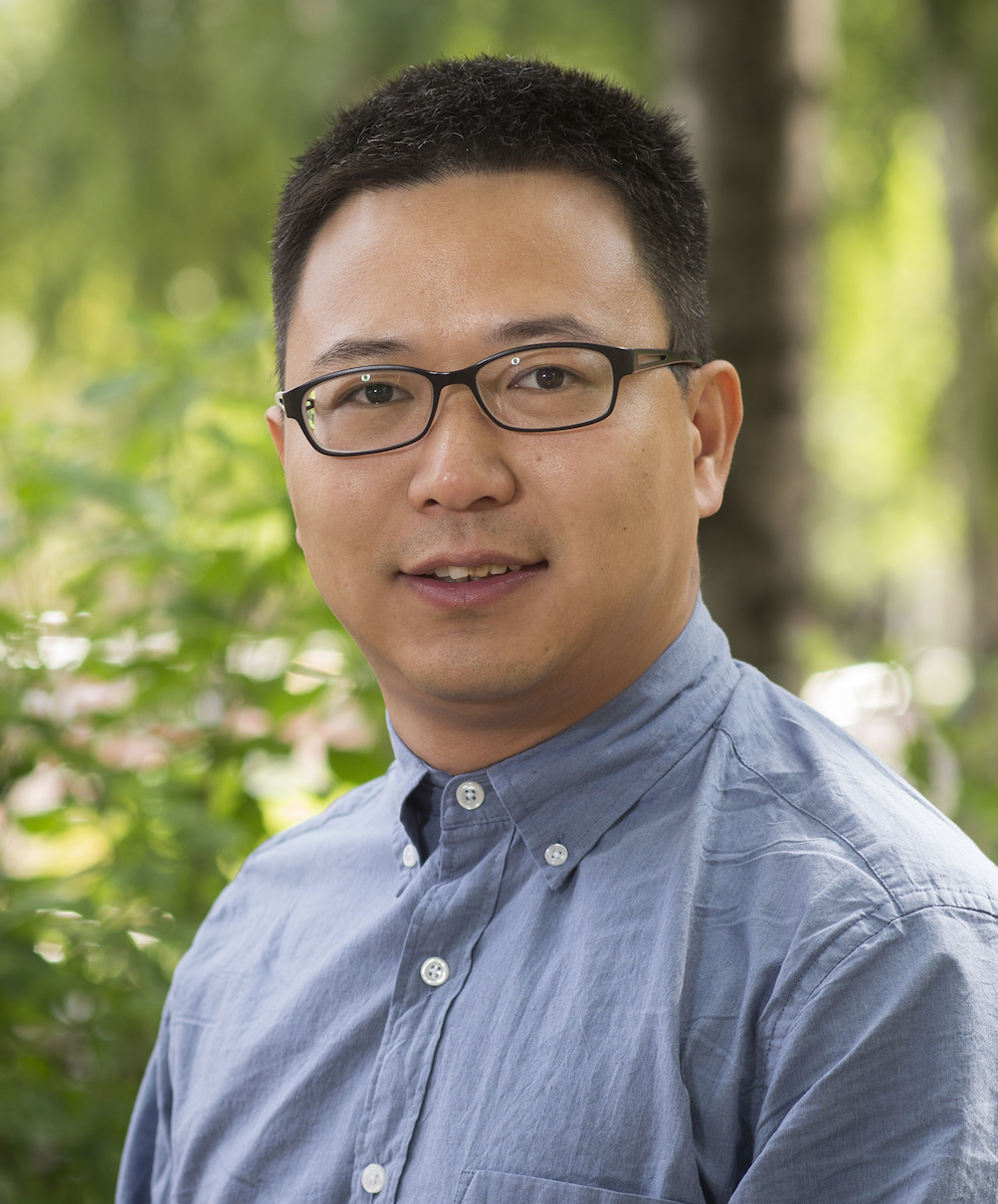}}]{Bo Ji}(Senior Member, IEEE)
received his B.E. and M.E. degrees in Information Science and Electronic Engineering from Zhejiang University, Hangzhou, China, in 2004 and 2006, respectively, and his Ph.D. degree in Electrical and Computer Engineering from The Ohio State University, Columbus, OH, USA, in 2012. Dr.~Ji is an Associate Professor of Computer Science and a College of Engineering Faculty Fellow at Virginia Tech, Blacksburg, VA, USA. Prior to joining Virginia Tech, he was an Associate/Assistant Professor in the Department of Computer and Information Sciences at Temple University from July 2014 to July 2020. He was also a Senior Member of the Technical Staff with AT\&T Labs, San Ramon, CA, from January 2013 to June 2014. His research interests are in the modeling, analysis, control, and optimization of computer and network systems, such as wired and wireless networks, large-scale IoT systems, high-performance computing systems and data centers, and cyber-physical systems. He has been the general co-chair of IEEE/IFIP WiOpt 2021 and the technical program co-chair of ACM MobiHoc 2023 and ITC 2021, and he has also served on the editorial boards of the IEEE/ACM Transactions on Networking, ACM SIGMETRICS Performance Evaluation Review, IEEE Transactions on Network Science and Engineering, IEEE Open Journal of the Communications Society (2019-2023), and IEEE Internet of Things Journal (2020-2022). Dr.~Ji is a senior member of the IEEE and the ACM. He was a recipient of the National Science Foundation (NSF) CAREER Award in 2017, the NSF CISE Research Initiation Initiative Award in 2017, the IEEE INFOCOM 2019 Best Paper Award, the IEEE/IFIP WiOpt 2022 Best Student Paper Award, and the IEEE TNSE Excellent Editor Award in 2021 and 2022.
\end{IEEEbiography}
\vspace{-1cm}
\begin{IEEEbiography}[{\includegraphics[width=1in,height=1.25in,clip,keepaspectratio]{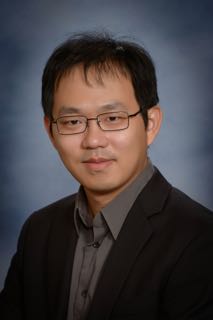}}]{I-Hong Hou}(Senior Member, IEEE) is an Associate Professor in the ECE Department of the Texas A\&M University. He received his Ph.D. from the Computer Science Department of the University of Illinois at Urbana-Champaign. His research interests include wireless networks, edge/cloud computing, and reinforcement learning. His work has received the Best Paper Award from ACM MobiHoc 2017 and ACM MobiHoc 2020, and Best Student Paper Award from WiOpt 2017. He has also received the C.W. Gear Outstanding Graduate Student Award from the University of Illinois at Urbana-Champaign, and the Silver Prize in the Asian Pacific Mathematics Olympiad.
\end{IEEEbiography}
\vspace{-1cm}
\begin{IEEEbiography}[{\includegraphics[width=1in,height=1.25in,clip,keepaspectratio]{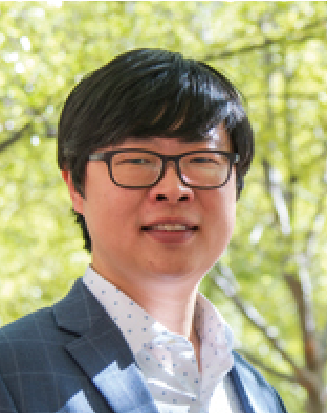}}]{Yin Sun}(Senior Member, IEEE) received his B.Eng. and Ph.D. degrees in Electronic Engineering from Tsinghua University, in 2006 and 2011, respectively. He was a Postdoctoral Scholar and Research Associate at the Ohio State University from 2011-2017 and an Assistant Professor in the Department of Electrical and Computer Engineering at Auburn University from 2017-2023. He is currently the Godbold Associate Professor in the Department of Electrical and Computer Engineering at Auburn University, Alabama. His research interests include Wireless Networks, Machine Learning, Semantic Communications, Age of Information, Information Theory, and Robotic Control. He is also interested in applying AI and Machine Learning techniques in Agricultural, Food, and Nutrition Sciences. 
He has been an Associate Editor of the \emph{IEEE Transactions on Network Science and Engineering}, an Editor of the \emph{Journal of Communications and Networks}, an Editor of the \emph{IEEE Transactions on Green Communications and Networking}, a Guest Editor of five special journal issues, and an Organizing Committee Member of several conferences. He founded the Age of Information (AoI) Workshop in 2018 and the Modeling and Optimization in Semantic Communications (MOSC) Workshop in 2023. His articles received the Best Student Paper Award of the IEEE/IFIP WiOpt 2013, Best Paper Award of the IEEE/IFIP WiOpt 2019, runner-up for the Best Paper Award of ACM MobiHoc 2020, and 2021 Journal of Communications and Networks (JCN) Best Paper Award. He co-authored a monograph \emph{Age of Information: A New Metric for Information Freshness}. He received the Auburn Author Award of 2020, the National Science Foundation (NSF) CAREER Award in 2023, and was named a Ginn Faculty Achievement Fellow in 2023. He is a Senior Member of the IEEE and a Member of the ACM. 
\end{IEEEbiography}
\end{document}